\DeclareFontFamily{U}{shuffle}{}
\DeclareFontShape{U}{shuffle}{m}{n}{<-8>shuffle7 <8->shuffle10}{}
\newcommand{\displaypunct}[1]{\,\,\text{#1}}
\newcommand{\cal}[1]{\mathcal{#1}}
\newcommand{\set}[1]{\{#1\}}
\newcommand{\card}[1]{\left|#1\right|}
\DeclareMathOperator{\N}{\mathbb{N}}
\DeclareMathOperator{\Omicron}{O}
\DeclareMathOperator{\omicron}{o}
\newcommand{\intinterval}[2]{{[\![}#1, #2{]\!]}}
\newcommand{\function}[5]
{
    #1\colon
    \begin{array}[t]{@{}r@{\,\,}l@{\,\,}l@{}}
    #2 & \to & #3\\
    #4 & \mapsto & #5
    \end{array}
}
\newcommand{\length}[1]{\left|#1\right|}
\newcommand{\emptyword}{\varepsilon}
\newcommand{\quotient}{\mathclose{}/\mathopen{}}
\newcommand{\Reg}{\mathsf{\mathcal{R}eg}}
\newcommand{\FLang}[1]{\mathrm{#1}}
\newcommand{\DLang}[1]{\mathcal{L}\mathopen{}\left(#1\right)\mathclose{}}
\newcommand{\Prog}[1]{\mathcal{P}\mathopen{}\left(#1\right)\mathclose{}}
\newcommand{\NC}[1][\null]
{
    \ifthenelse{\equal{#1}{\null}}
    {
	\mathsf{NC}
    }
    {
	\mathsf{NC^{#1}}
    }
}
\newcommand{\ACC}[1][\null]
{
    \ifthenelse{\equal{#1}{\null}}
    {
	\mathsf{ACC}
    }
    {
	\mathsf{ACC^{#1}}
    }
}
\newcommand{\AC}[1][\null]
{
    \ifthenelse{\equal{#1}{\null}}
    {
	\mathsf{AC}
    }
    {
	\mathsf{AC^{#1}}
    }
}
\newcommand{\FMVariety}[1]{\mathbf{#1}}
\newcommand{\FMVI}{\FMVariety{I}}
\newcommand{\FMVJ}{\FMVariety{J}}
\newcommand{\FMVDA}[1][\null]
{
    \ifthenelse{\equal{#1}{\null}}
    {
	\FMVariety{DA}
    }
    {
	\FMVariety{DA_{#1}}
    }
}
\newcommand{\FMVA}{\FMVariety{A}}
\newcommand{\FMVCom}{\FMVariety{Com}}
\newcommand{\FSVsdprod}{\mathbin{\mathbf{*}}}
\newcommand{\FSVariety}[1]{\mathbf{#1}}
\newcommand{\FSVD}{\FSVariety{D}}
\newcommand{\StVQuasi}{\mathbf{Q}}
\newcommand{\StVQDA}{\StVQuasi\FMVDA}
\newcommand\Mod{\FMVariety{Mod}}
\newcommand\Local{\mathbf{L}}
\newcommand{\StVEsntl}{\mathbf{E}}
\newcommand{\WProb}[1]{\mathcal{W}(#1)}
\newcommand{\nem}{\emph{ne}}
\newcommand{\lmm}{\emph{lm}}
\newcommand{\spv}{\emph{sp}}
\newcommand{\pv}{\emph{p}}
\newcommand{\pr}{\emph{p}}
\let\Oldbot\bot
\renewcommand{\bot}{\mathop{\Oldbot}}
\newcommand\V{\FMVariety{V}}
\newcommand{\EP}[1][\null]
{
    \ifthenelse{\equal{#1}{\null}}
    {
	E\!P
    }
    {
	E\!{#1}
    }
}
\newcommand{\DMon}[1]{\FMVariety{M}(#1)}
\newcommand{\LVariety}[1]{\mathcal{#1}}
\newcommand{\LVSUM}[1][\null]
{
    \ifthenelse{\equal{#1}{\null}}
    {
	\LVariety{SUM}
    }
    {
	\LVariety{SUM}_{#1}
    }
}
\newcommand{\LVSUL}[1][\null]
{
    \ifthenelse{\equal{#1}{\null}}
    {
	\LVariety{SUL}
    }
    {
	\LVariety{SUL}_{#1}
    }
}
\DeclareMathOperator{\alphabet}{alph}
\theoremstyle{plain}\newtheorem{claim}[thm]{Claim}
\newenvironment{lemma}{\begin{lem}}{\end{lem}}
\newenvironment{proposition}{\begin{prop}}{\end{prop}}
\newenvironment{theorem}{\begin{thm}}{\end{thm}}
\newenvironment{definition}{\begin{defi}}{\end{defi}}
\newenvironment{conjecture}{\begin{conj}}{\end{conj}}
\keywords{Programs over monoids, tameness, DA, lower bounds}
\begin{document}
\title[Tameness and the power of programs over monoids in
       \texorpdfstring{$\FMVDA$}{DA}]
      {Tameness and the power of programs\texorpdfstring{\\}{} over monoids in
       \texorpdfstring{$\FMVDA$}{DA}\rsuper*}
\titlecomment{{\lsuper*}Revised and extended version
	      of~\cite{Grosshans-McKenzie-Segoufin-2017} that includes a more
	      inclusive definition of tameness, thus strengthening the statement
	      that $\FMVJ$ is not a tame variety, as explained in
	      Section~\ref{sec:general}.}

\author[N.~Grosshans]{Nathan Grosshans\lmcsorcid{0000-0003-3400-1098}}[a]
\address{Fachbereich Elektrotechnik/Informatik, Universität Kassel, Kassel, Germany}
\email{nathan.grosshans@polytechnique.edu}
\urladdr{https://nathan.grosshans.me}

\author[P.~McKenzie]{Pierre McKenzie}[b]
\address{DIRO, Université de Montréal, Montréal, Canada}
\email{mckenzie@iro.umontreal.ca}

\author[L.~Segoufin]{Luc Segoufin\lmcsorcid{0000-0002-9564-7581}}[c]
\address{Inria, DI ENS, ENS, CNRS, PSL University, Paris, France}
\email{luc.segoufin@inria.fr}

\begin{abstract}
The program-over-monoid model of computation originates with Barrington's
proof that the model captures the complexity class $\NC[1]$.  Here we make progress
in understanding the subtleties of the model. First, we identify a new
tameness condition on a class of monoids that entails a natural
characterization of the regular languages recognizable by programs over
monoids from the class. Second, we prove that the class known as $\FMVDA$
satisfies tameness and hence that the regular languages recognized by
programs over monoids in $\FMVDA$ are precisely those recognizable in the
classical sense by morphisms from $\StVQuasi\FMVDA$.  Third, we show
by contrast that the well studied class of monoids called $\FMVJ$ is not
tame. Finally, we exhibit a program-length-based
hierarchy within the class of languages recognized by programs over monoids
from $\FMVDA$.
 \end{abstract}

\maketitle

\section{Introduction}

A program of range $n$ on alphabet $\Sigma$ over a finite monoid $M$ is a
sequence of pairs $(i,f)$ where $1\leq i\leq n$ and $f:\Sigma\rightarrow M$ is a function.
This program assigns to each word $w_1w_2\cdots w_n$ the monoid
element obtained by multiplying out in $M$ the elements $f(w_i)$, one per pair
$(i,f)$, in the order of the sequence. When an accepting set $F\subseteq M$ is specified,
the program naturally defines the language $L_n$ of
words of length $n$ assigned an element in $F$.  A program
sequence $(P_n)_{n \in \N}$ then defines the language formed by the union of
the~$L_n$.

A program over $M$ is a generalization of a morphism from $\Sigma^*$ to $M$,
and recognition by a morphism equates with acceptance by a finite
automaton. Moving from morphisms to programs has a significant impact on the
expressive power as shown by the seminal result of
Barrington~\cite{Barrington-1989}\footnote{in fact extending the scope of an
  observation made earlier by Maurer and Rhodes~\cite{Maurer-Rhodes-1965}} that
polynomial-length program sequences over the group $S_5$ capture the complexity
class $\NC[1]$ (of languages accepted by bounded fan-in Boolean circuits of
logarithmic depth).

Barrington's result was followed by several results strengthening the
correspondence between circuit complexity and programs over finite monoids. The
classes $\AC[0] \subset \ACC[0] \subseteq \NC[1]$ were characterized by
polynomial-length programs over the aperiodic, the solvable, and all monoids
respectively~\cite{Barrington-1989,Barrington-Therien-1988}.  More generally
for any variety $\V$ of finite monoids one can define the class $\Prog{\V}$ of
languages recognized by polynomial-length programs over a monoid drawn from
$\V$. In particular, if $\FMVA$ is the variety of finite aperiodic monoids, then
$\Prog{\FMVA}$ characterizes the complexity class
$\AC[0]$~\cite{Barrington-Therien-1988}. It was further observed that in a
formal sense, understanding the regular languages of $\Prog{\V}$ is sufficient
to understand the expressive power of $\Prog{\V}$
(see~\cite{McKenzie-Peladeau-Therien-1991}, but also~\cite{Books/Straubing-1994}
for a logical point of view).

In view of the above results it is plausible that algebraic automata theory
methods could help separating complexity classes within $\NC[1]$. But although
partial results in restricted settings were obtained, no breakthrough was
achieved this way.

The reason of course is that programs are much more
complicated than morphisms: programs can read the letter at an input position
more than once, in non-left-to-right order, possibly assigning a different
monoid element each time. This complication can be illustrated with the following example. Consider the
variety of finite monoids known as~$\FMVJ$. This is the variety generated by the
syntactic monoids of all languages defined by the presence or absence of
certain \emph{subwords}, where $u$ is a subword of $v$ if $u$ can be obtained
from $v$ by deleting letters~\cite{Simon-1975}. One deduces that monoids in
$\FMVJ$ are
unable to morphism-recognize the language defined by the regular expression
$(a + b)^* a c^+$. Yet a sequence of programs over a monoid in $\FMVJ$ recognizes
$(a + b)^* a c^+$ by the following trick.  Consider the language $L$ of
all words having $ca$ as a subword but having as subwords neither $cca$, $caa$
nor $cb$. Being defined by the occurrence of subwords, $L$ \emph{is} recognized
by a morphism $\varphi:\{a,b,c\}^*\rightarrow M$ where $M\in \FMVJ$, i.e., for
this $\varphi$ there is an $F\subseteq M$ such that $L = \varphi^{-1}(F)$. Here
is the trick: the program of range $n$ over $M$ given by the sequence of
instructions
\[(2,\varphi),
(1,\varphi),
(3,\varphi),
(2,\varphi),
(4,\varphi),
(3,\varphi),
(5,\varphi),
(4,\varphi), \ldots,
(n,\varphi),
(n-1,\varphi),\] using $F$ as accepting set, defines the set of words of length $n$ in
$(a + b)^* a c^+$.
For instance, on input $abacc$ the program outputs $\varphi(baabcacc)$
which is in $F$, while on inputs $abbcc$ and $abacca$ the program outputs
respectively $\varphi(babbcbcc)$ and $\varphi(baabcaccac)$ which are not in $F$.
(See~\cite[Lemma~4.1]{Grosshans-2020} for a full proof of the fact that
$(a + b)^* a c^+ \in \Prog{\FMVJ}$.)

The first part of our paper addresses the question of what are the regular
languages in $\Prog{\V}$. As mentioned above, this is the key to
understanding the expressive power of $\Prog{\V}$.

Observe first that the class $\DLang{\V}$ of all languages recognized by a morphism
into a monoid in $\V$ is trivially included in $\Prog{\V}$. It turns out that $\Prog{\V}$
always contains more regular languages. For instance, because a program
instruction $(i,f)$ operating on a word $w$ is ``aware'' of~$i$, the program
can have a behavior depending on some arithmetic properties of~$i$. Moreover, a
program's behavior can in general depend on the length of the input words it
treats. So in particular,
as far as regular languages are concerned, a program for a given input
length can take into account the length of the input modulo some fixed number $k$ in
its acceptance set and each program instruction $(i, f)$ can depend on
the value of~$i$ modulo $k$. This can be formalized by assuming
without loss of generality that membership can depend on the length of the word
$w$ at hand modulo a fixed number $k$ and that each letter in $w$ is tagged with its
position modulo $k$. Regular languages recognized this way are exactly the
languages recognized by a stamp (a surjective morphism from $\Sigma^*$ to $M$
with $\Sigma$ an alphabet and $M$ a finite monoid) in the variety of stamps
$\V \FSVsdprod \Mod$, where $\FSVsdprod$ is the wreath product of stamps and
$\Mod$ the variety of cyclic stamps into groups. In other words,
$\DLang{\V\FSVsdprod\Mod}$ is always included in $\Prog{\V}$.

A program over a monoid can also recognize regular languages by
changing its behavior depending on \emph{bounded-length prefixes} and
\emph{suffixes} arbitrarily. To formalize this, we introduce the class
$\StVEsntl\V$ of stamps that, modulo the beginning and the end of a word,
behave essentially like stamps into monoids from $\V$. It is then not too
hard to show that $\DLang{\StVEsntl\V\FSVsdprod\Mod}$ is always included in
$\Prog{\V}$ when $\V$ does contain non-trivial monoids. Many varieties $\V$ are
such that $\Prog{\V}$ cannot recognize more regular languages than those in
$\DLang{\StVEsntl\V\FSVsdprod\Mod}$. This is the case for example of the variety
$\FMVDA$ as we will see below.

Our first result characterizes those varieties $\V$ having the property that
$\Prog{\V}$ does not contain ``many more'' regular languages than does
$\DLang{\StVEsntl\V\FSVsdprod\Mod}$. To this end we introduce the notion of
\emph{tameness} for a variety of finite monoids $\V$
(Definition~\ref{definition-tame}) and our first result shows that a variety of
finite monoids $\V$ is tame if and only if
$\Prog{\V} \cap \Reg \subseteq \DLang{\StVQuasi\StVEsntl\V}$.  Here,
$\DLang{\StVQuasi\V}$ is the class of regular languages recognized by stamps
in quasi-$\V$. A stamp $\varphi$ from $\Sigma^*$ to $M$ is in quasi-$\V$ if,
though $M$ might not be in $\V$, its stable
monoid induced by $\varphi$ is in $\V$, i.e.\ there is a number $k$ such that
$\varphi((\Sigma^k)^*)$ forms a submonoid of $M$ which is in $\V$. For tame
varieties $\V$ we do not know when the inclusion of
$\DLang{\StVEsntl\V\FSVsdprod\Mod}$ in $\DLang{\StVQuasi\StVEsntl\V}$ is
strict or not. In particular we do not know when the inclusion in our result is
an equality. As usual in this context, we conjecture that equality holds at
least for local varieties $\V$.

Our notion of a tame variety differs subtly but fundamentally from the
notion of \pv-variety (program-variety). This notion goes back to
Péladeau~\cite{PhD_thesis/Peladeau} and can be stated by saying that a variety
of finite monoids $\V$ is a \pv-variety whenever any monoid that can be
``simulated'' by programs over a monoid in $\V$ belongs itself to $\V$.
Equivalently, $\V$ is a \pv-variety whenever any regular language in
$\Prog{\V}$ with a neutral letter (a letter which can be inserted and deleted
arbitrarily in words without changing their membership in the language) is in
fact morphism-recognized by a monoid in $\V$. (The equivalence between the two
definitions is claimed without a proof in~\cite{PhD_thesis/Tesson}
and~\cite{PhD_thesis/Grosshans}, see~\cite{Peladeau-Straubing-Therien-1997} for
a proof in one direction, the other direction requiring a standard argument.)
While understanding the neutral letter regular languages in $\Prog{\V}$
for $\V$ ranging over all possible varieties of finite monoids would suffice to
solve most open questions about the internal structure of $\NC[1]$, for $\V$ to
be a \pv-variety does not imply a precise characterization of
\emph{all} the regular languages in $\Prog{\V}$. It can be proved that if $\V$
is a \pv-variety then the regular languages in $\Prog{\V}$ are all in
$\DLang{\StVQuasi\Local\V}$, where $\Local\V$ is the inclusion-wise largest
variety of finite semigroups containing all monoids in $\V$ and only those
monoids. For instance, $\FMVDA$ is a
\pv-variety~\cite{Lautemann-Tesson-Therien-2006}, and this implies that
$\Prog{\FMVDA} \cap \Reg \subseteq \DLang{\StVQuasi\Local\FMVDA}$ as explained
above. The latter inclusion is strict and the correct characterization, namely
$\DLang{\StVQuasi\StVEsntl\FMVDA}$, requires proving that $\FMVDA$ is also tame
in our sense.  Furthermore, there exist \pv-varieties for which unexpected (and
interesting) things happen when considering program-recognition of regular
languages without neutral letter, and this precisely because they aren't tame.
For example, $\Prog{\FMVJ} \cap \Reg \subseteq \DLang{\StVQuasi\Local\FMVJ}$
with strict inclusion while it will follow from our result that
$\Prog{\FMVJ} \cap \Reg \nsubseteq \DLang{\StVQuasi\StVEsntl\FMVJ}$ (knowing
that it is easy to check that
$\DLang{\StVQuasi\StVEsntl\FMVJ} \subseteq \DLang{\StVQuasi\Local\FMVJ}$).

The situation for programs over finite semigroups of the form
$\V \FSVsdprod \FSVD$, where $\V$ is a variety of finite monoids and $\FSVD$ is the
variety of finite definite (or righty trivial) semigroups, turns out to be much
simpler.  Indeed, with the necessary adaptations to the notion of \pv-variety,
Péladeau, Straubing and Thérien~\cite{Peladeau-Straubing-Therien-1997} could
show that for any \pv-variety of the form $\V \FSVsdprod \FSVD$ we have
$\Prog{\V \FSVsdprod \FSVD} \cap \Reg = \DLang{\StVQuasi(\V \FSVsdprod
  \FSVD)}$.  Once our notion of tameness is adapted for finite semigroups, it
is possible to show that any \pv-variety of the form $\V \FSVsdprod \FSVD$ is
tame. Hence the result of~\cite{Peladeau-Straubing-Therien-1997} mentioned
above follows from our result as for varieties of the form
$\V \FSVsdprod \FSVD$ we have, abusing notation, that
$\StVEsntl(\V\FSVsdprod \FSVD)=\V \FSVsdprod \FSVD$ and that
$\DLang{\StVQuasi(\V \FSVsdprod \FSVD)} = \DLang{\V \FSVsdprod \FSVD \FSVsdprod
  \Mod} \subseteq \Prog{\V \FSVsdprod \FSVD}$.  It is to be noted that programs
over semigroups in $\V \FSVsdprod \FSVD$ correspond to Straubing's $k$-programs
over monoids in $\V$~\cite{Straubing-2000, Straubing-2001}. It is also possible
to prove that regular languages recognized by monoids from a $k$-program variety
$\V$, as for
\pv-varieties, are all in $\DLang{\StVQuasi\Local\V}$.  Interestingly, in order
to get a tight characterization for the regular languages recognized by
$k$-programs over commutative monoids in~\cite{Straubing-2001}, Straubing
determines not only which monoids can be simulated by such $k$-programs, but,
in our terms, which \emph{stable stamps} those $k$-programs can
``simulate''. Our notion of tameness also builds upon stable stamps and, as we
have advocated above, subsumes previous definitions of ``good behavior'' of
programs with respect to recognition of regular languages.

Tameness as defined here is also a proper extension of the notion of \spv-varieties
of monoids (Definition~\ref{def:sp-variety}), a concept introduced
in~\cite{Grosshans-McKenzie-Segoufin-2017} as our initial attempt to capture the
expected behavior of programs over small varieties. We will for instance see
that the variety of finite commutative monoids is tame but not an \spv-variety.

Showing that a variety is tame can be a difficult task. For instance
showing that the variety $\FMVA$ is tame amounts to showing that the regular
languages in $\AC[0]$ are in $\DLang{\StVQuasi\FMVA}$, which, as shown by
Barrington, Compton, Straubing and
Thérien~\cite{Barrington-Compton-Straubing-Therien-1992}, follows from the fact
that modulo counting cannot be done in $\AC[0]$ (the famous result initially
proven by Furst, Saxe and Sipser~\cite{Furst-Saxe-Sipser-1984} and independently
by Ajtai~\cite{Ajtai-1983}). Similarly much of
the structure of $\NC[1]$ would in fact be resolved by showing the tameness of
certain varieties (see \cite[Corollary 4.13]{McKenzie-Peladeau-Therien-1991},
\cite[Conjecture IX.3.4]{Books/Straubing-1994}).

The present work is motivated by the need to better understand the subtle
behaviors of polynomial-length programs over monoids. We focus in this paper on
the variety of monoids $\FMVDA$. The importance of $\FMVDA$ in algebraic
automata theory and its connections with other fields are well established
(see~\cite{Tesson-Therien-2002b} for an eloquent testimony). In particular
$\Prog{\FMVDA}$ corresponds to languages accepted by decision trees of bounded
rank~\cite{Gavalda-Therien-2003}. It is also known that regular languages with
a neutral letter that are in $\Prog{\FMVDA}$ are also
in~$\DLang{\FMVDA}$~\cite{Lautemann-Tesson-Therien-2006}.

Our second result shows that the variety
$\FMVDA$ is tame. As it is easy to see that $\FMVDA$ is powerful enough to
describe prefixes and suffixes of words up to some bounded length, we get that
$\DLang{\StVEsntl\FMVDA}=\DLang{\FMVDA}$. Moreover, because $\FMVDA$ is a local
variety, $\StVQuasi\FMVDA=\FMVDA \FSVsdprod \Mod$~\cite{Dartois-Paperman-2013}.
Altogether the tameness of $\FMVDA$ implies that the regular languages in
$\Prog{\FMVDA}$ are precisely the languages in $\DLang{\StVQuasi\FMVDA}$.

Our third result is that, on the other hand, the variety of finite monoids
$\FMVJ$ is not tame as witnessed by the regular language $(a + b)^* a c^+$
discussed above which is in $\Prog{\FMVJ}$ but not in
$\DLang{\StVQuasi\StVEsntl\FMVJ}$. Characterizing the regular languages in
$\Prog{\FMVJ}$ remains an open problem, partially solved
in~\cite{Grosshans-2020}.

Our final result concerns $\Prog{\FMVDA}$. With ${\cal C}_k$ the class of languages
recognized by programs of length $\Omicron(n^k)$ over $\FMVDA$, we prove that
${\cal C}_1\subset {\cal C}_2 \subset \cdots \subset {\cal C}_k \subset \cdots
\subset \Prog{\FMVDA}$ forms a strict hierarchy.  We also relate this hierarchy
to another algebraic characterization of $\FMVDA$ and exhibit conditions on
$M \in \FMVDA$ under which any program over $M$ can be rewritten as an
equivalent subprogram (made of a subsequence of the original sequence of
instructions) of length $\Omicron(n^k)$, refining a result by Tesson and
Thérien~\cite{Tesson-Therien-2002a}.

\medskip

\textbf{Organization of the paper.} In Section~\ref{sec:Preliminaries} we define
programs over monoids, \pr-recognition by such programs and the
necessary algebraic background. The definition of tameness for a variety $\V$ is
given in Section~\ref{sec:general} with our first result showing that regular
languages in $\Prog{\V}$ are included in $\DLang{\StVQuasi\StVEsntl\V}$ if and
only if $\V$ is tame; we also briefly discuss the case of $\FMVJ$, which isn't
tame.
We show that $\FMVDA$ is tame in Section~\ref{sec:caseDA}. Finally,
Section~\ref{sec:hier} contains the hierarchy results about $\Prog{\FMVDA}$.

\section{Preliminaries}
\label{sec:Preliminaries}

This section is dedicated to the introduction of the mathematical material used
throughout this paper. Concerning algebraic automata theory, we only quickly
review the basics and refer the reader to the two classical references of the
domain by Eilenberg~\cite{Books/Eilenberg-1974, Books/Eilenberg-1976} and
Pin~\cite{Books/Pin-1986}.

\paragraph*{General notations.} Let $i, j \in \N$ be two natural numbers. We
shall denote by $\intinterval{i}{j}$ the set of all natural numbers $n \in \N$
verifying $i \leq n \leq j$. We shall also denote by $[i]$ the set
$\intinterval{1}{i}$.

\paragraph*{Words and languages.} Let $\Sigma$ be a finite alphabet. We denote
by $\Sigma^*$ the set of all finite words over $\Sigma$. We also denote by
$\Sigma^+$ the set of all finite non empty words over $\Sigma$, the empty word
being denoted by $\emptyword$. Since all our alphabets and words in this article
are always finite, we shall not mention it anymore from here on.
Given some word $w \in \Sigma$, we denote its length by $\length{w}$ and, for
any $a \in \Sigma$, by $\length{w}_a$ the number of occurrences of the letter
$a$ in $w$.
A \emph{language over $\Sigma$} is a subset of $\Sigma^*$. A language is
\emph{regular} if it can be defined using a regular expression. Given a language
$L$, its \emph{syntactic congruence} $\sim_L$ is the relation on $\Sigma^*$
relating two words $u$ and $v$ whenever for all $x,y \in \Sigma^*$, $xuy \in L$
if and only if $xvy \in L$. It is easy to check that $\sim_L$ is an equivalence
relation and a congruence for concatenation. The \emph{syntactic morphism of
$L$} is the mapping sending any word $u$ to its equivalence class in the
syntactic congruence.

The \emph{quotient of a language $L$ over $\Sigma$ relative to the words $u$
and $v$} is the language, denoted by $u^{-1}Lv^{-1}$, of the words $w$ such that
$uwv \in L$.

\paragraph*{Monoids, semigroups and varieties.} A \emph{semigroup} is a
non-empty set equipped with an associative law that we will write
multiplicatively. A \emph{monoid} is a semigroup with an identity. An example of
a semigroup is $\Sigma^+$, the free semigroup over $\Sigma$. Similarly
$\Sigma^*$ is the free monoid over $\Sigma$. With the exception of free monoids
and semigroups, all monoids and semigroups considered here are finite.
A \emph{morphism $\varphi$ from a semigroup $S$ to a semigroup $T$} is a
function from $S$ to $T$ such that $\varphi(x y) = \varphi(x) \varphi(y)$ for
all $x, y \in S$. A morphism of monoids additionally requires that the identity
is preserved; unless otherwise stated, when we say ``morphism'', we always mean
``monoid morphism''. Any morphism $\varphi\colon \Sigma^* \to M$ for $\Sigma$ an
alphabet and $M$ some monoid is uniquely determined by the images of the letters
of $\Sigma$ by $\varphi$.
A semigroup $T$ is a \emph{subsemigroup} of a semigroup $S$ if $T$ is a subset
of $S$ and is equipped with the restricted law of $S$. Additionally the notion
of submonoids requires the presence of the identity.
A semigroup $T$ \emph{divides} a semigroup $S$ if $T$ is the image by a
semigroup morphism of a subsemigroup of $S$. Division of monoids is defined in
the same way by replacing any occurrence of ``semigroup'' by ``monoid''.
The \emph{Cartesian (or direct) product} of two semigroups is simply the
semigroup given by the Cartesian product of the two underlying sets equipped
with the Cartesian product of their laws.

A language $L$ over $\Sigma$ is \emph{recognized by a monoid $M$} if there is a
morphism $h$ from $\Sigma^*$ to $M$ and a subset $F$ of $M$ such that
$L = h^{-1}(F)$. We also say that \emph{the morphism $h$ recognizes~$L$}.
It is well known that a language is regular if and only if it is recognized by a
finite monoid.
Actually, as $\sim_L$ is a congruence, the quotient $\Sigma^*\!\quotient\!\sim_L$
is a monoid, called \emph{the syntactic monoid of $L$}, that recognizes $L$ via
the syntactic morphism of $L$. The syntactic monoid of $L$ is finite if and only
if $L$ is regular. The quotient $\Sigma^+\!\quotient\!\sim_L$ is analogously
called \emph{the syntactic semigroup of $L$}.

A \emph{variety of finite monoids} is a non-empty class of finite monoids closed
under Cartesian product and monoid division. A \emph{variety of finite
semigroups} is defined similarly. When dealing with varieties, we consider only
varieties of finite monoids or semigroups, so we will drop the
adjective ``finite'' when talking about those.

An element $s$ of a semigroup is \emph{idempotent} if $ss=s$. For any finite
semigroup $S$ there is a positive number (the minimum such number), \emph{the
idempotent power of $S$}, often denoted $\omega$, such that for any element
$s \in S$, $s^\omega$ is idempotent.

A general result of Reiterman~\cite{Reiterman-1982} states that each variety of
monoids (or semigroups) can be defined as the class of all finite monoids
satisfying some set of identities, for an appropriate notion of identity. In our
case, we only use a restricted version of this notion of an identity, that we
understand as a formal equality of terms built on the basis of variables by
using products and $\omega$-powers. A finite monoid is then said to satisfy such
an identity whenever the equality is verified for any setting of the variables
to elements of the monoid, interpreting the $\omega$-power as the idempotent
power of that monoid.
For instance, the variety of finite aperiodic monoids $\FMVA$, known as the
variety of ``group-free'' finite monoids (i.e.\ those verifying that they do not
have any non-trivial group as a subsemigroup), is defined by the identity
$x^\omega = x^{\omega + 1}$.
The variety of monoids $\FMVDA$ is defined by the identity
$(xy)^\omega=(xy)^\omega x (xy)^\omega$.
The variety of monoids $\FMVJ$ is defined by the identities
$(xy)^\omega=(xy)^\omega x=y(xy)^\omega$.
One easily deduces that $\FMVJ \subseteq \FMVDA \subseteq \FMVA$.

\paragraph*{Varieties of languages.}
A \emph{variety of languages} is a class of languages over arbitrary alphabets
closed under Boolean operations, quotients and inverses of morphisms (i.e.\ if
$L$ is a language in the class over an alphabet $\Sigma$, if $\Gamma$ is some
other alphabet and $\varphi\colon \Gamma^* \to \Sigma^*$ is a morphism, then
$\varphi^{-1}(L)$ is also in the class).

Eilenberg showed~\cite[Chapter VII, Section 3]{Books/Eilenberg-1976} that there
is a bijective correspondence between varieties of monoids and varieties of
languages: to each variety of monoids $\V$ we can bijectively associate
$\DLang{\V}$ the variety of languages whose syntactic monoids belong to $\V$
and, conversely, to each variety of languages $\LVariety{V}$ we can bijectively
associate $\DMon{\LVariety{V}}$ the variety of monoids generated by the
syntactic monoids of the languages of $\LVariety{V}$, and these correspondences
are mutually inverse.

When $\V$ is a variety of semigroups, we will denote by $\DLang{\V}$ the class
of languages whose syntactic semigroup belongs to $\V$. There is also an
Eilenberg-type correspondence for an appropriate notion of language varieties,
that is \nem-varieties (non-erasing-varieties) of languages, but we won't
present it here. (The interested reader may have a look at~\cite{Straubing-2002}
as well as~\cite[Lemma~6.3]{Pin-Straubing-2005}.)

\paragraph*{Quasi and locally $\V$ languages, modular counting and predecessor.}
If $S$ is a semigroup we denote by $S^1$ the monoid $S$ if $S$ is already a
monoid and $S \cup \set{1}$ otherwise.

The following definitions are taken from~\cite{Pin-Straubing-2005,
Chaubard-Pin-Straubing-2006b}.
Let $\varphi$ be a surjective morphism from $\Sigma^*$, for $\Sigma$ some
alphabet, to a finite monoid $M$: such a morphism is called a \emph{stamp}.
For all $k$ consider the subset $\varphi(\Sigma^k)$ of $M$. As $M$ is finite
there is a $k$ such that $\varphi(\Sigma^{2k}) = \varphi(\Sigma^k)$. This
implies that $\varphi(\Sigma^k)$ is a semigroup. The semigroup given by the
smallest such $k$ is called the \emph{stable semigroup of $\varphi$} and this
$k$ is called the \emph{stability index} of $\varphi$. If $1$ is the identity of
$M$, then $\varphi(\Sigma^k) \cup \set{1}$ is called the \emph{stable monoid of
$\varphi$}.
If $\V$ is a variety of monoids, then we shall denote by $\StVQuasi{\V}$ the
class of stamps whose stable monoid is in $\V$ and by $\DLang{\StVQuasi{\V}}$
the class of languages whose syntactic morphism is in $\StVQuasi{\V}$.

For $\V$ a variety of monoids, we say that a finite semigroup $S$ is
\emph{locally $\V$} if, for every idempotent $e$ of $S$, the monoid $e S e$
belongs to $\V$; we denote by $\Local\V$ the class of locally-$\V$ finite
semigroups, which happens to be a variety of semigroups.

We now define languages recognized by $\V \FSVsdprod \Mod$ and
$\V \FSVsdprod \FSVD$. We do not use the standard algebraic definition using the
wreath product as we won't need it, but instead give a characterization of the
languages recognized by such algebraic
objects~\cite{Chaubard-Pin-Straubing-2006a,Tilson-1987}.

Let $\V$ be a variety of monoids. We say that a language over $\Sigma$ is in
$\DLang{\V \FSVsdprod \Mod}$ if it is obtained by a finite combination of unions
and intersections of languages over $\Sigma$ for which membership of each word
over $\Sigma$ only depends on its length modulo some integer $k \in \N_{>0}$ and
languages $L$ over $\Sigma$ for which there is a number $k\in\N_{>0}$ and a
language $L'$ over $\Sigma \times \set{0,\ldots,k-1}$ whose syntactic monoid is
in $\V$, such that $L$ is the set of words $w$ that belong to $L'$ after adding
to each letter of $w$ its position modulo $k$.
Observe that neither $\V \FSVsdprod \Mod$ nor $\StVQuasi{\V}$ are varieties of
monoids or semigroups, but classes of stamps that happen to be varieties of
stamps of a certain kind\footnote{To be precise, both are \lmm-varieties of
stamps, as defined in~\cite{Straubing-2002}.}, that we won't introduce.

Similarly we say that a language over $\Sigma$ is in
$\DLang{\V \FSVsdprod \FSVD}$ if it is obtained by a finite combination of
unions and intersections of languages over $\Sigma$ for which membership of each
word over $\Sigma$ only depends on its $k \in \N$ last letters and languages $L$
over $\Sigma$ for which there is a number $k \in \N$ and a language $L'$ over
$\Sigma \times \Sigma^{\leq k}$ (where $\Sigma^{\leq k}$ denotes all
words over $\Sigma$ of length at most $k$) whose syntactic monoid is in $\V$,
such that $L$ is the set of words $w$ that belong to $L'$ after adding to each
letter of $w$ the word composed of the $k$ (or less when near the beginning of
$w$) letters preceding that letter. The variety of semigroups
$\V \FSVsdprod \FSVD$ can then be defined as the one generated by the syntactic
semigroups of the languages in $\DLang{\V \FSVsdprod \FSVD}$ as defined above.

A variety of monoids $\V$ is said to be \emph{local} if
$\DLang{\V \FSVsdprod \FSVD} = \DLang{\Local\V}$. This is not the usual
definition of locality, defined using categories, but it is equivalent to
it~\cite[Theorem 17.3]{Tilson-1987}. One of the consequences of locality that we
will use is that $\DLang{\V \FSVsdprod \Mod} = \DLang{\StVQuasi{\V}}$ when $\V$
is local~\cite[Corollary 18]{Dartois-Paperman-2014}, while
$\DLang{\V \FSVsdprod \Mod} \subseteq \DLang{\StVQuasi{\V}}$ in general
(see~\cite{PhD_thesis/Dartois, PhD_thesis/Paperman}).

\paragraph*{Programs over varieties of monoids.}
Programs over monoids form a non-uniform model of computation, first defined by
Barrington and Thérien~\cite{Barrington-Therien-1988}, extending Barrington's
permutation branching program model~\cite{Barrington-1989}.
Let $M$ be a finite monoid and $\Sigma$ an alphabet. A program $P$ over $M$ is a
finite sequence of instructions of the form $(i, f)$ where $i$ is a positive
integer and $f$ a function from $\Sigma$ to $M$. The \emph{length} of $P$ is the
number of its instructions. A program has \emph{range} $n$ if all its
instructions $(i, f)$ verify $1 \leq i \leq n$. A program $P$ of range $n$
defines a function from $\Sigma^n$, the words of length $n$, to $M$ as follows.
On input $w \in \Sigma^n$, for $w = w_1 \cdots w_n$, each instruction $(i, f)$
outputs the monoid element $f(w_i)$. A sequence of
instructions then yields a sequence of elements of $M$ and their product is the
output $P(w)$ of the program. The only program of range $0$, the empty one,
always outputs the identity of $M$.

A language $L$ over $\Sigma$ is \emph{\pr-recognized} by a sequence of programs
$(P_n)_{n\in\N}$ if for each $n$, $P_n$ has range $n$ and length polynomial in
$n$ and recognizes $L\cap \Sigma^n$, that is, there exists a subset $F_n$ of $M$
such that $L\cap \Sigma^n$ is precisely the set of words $w$ of length $n$ such
that $P_n(w) \in F_n$. In that case, we also say that $L$ is \pr-recognized by
$M$.

We denote by $\Prog{M}$ the class of languages \pr-recognized by a sequence of
programs $(P_n)_{n\in\N}$ over $M$. If
$\FMVariety{V}$ is a variety of monoids we denote by $\Prog{\FMVariety{V}}$ the
union of all $\Prog{M}$ for $M \in \FMVariety{V}$.  

The following is a simple fact about $\Prog{\V}$.
Let $\Sigma$ and $\Gamma$ be two alphabets and $\mu\colon \Sigma^* \to \Gamma^*$
be a morphism. We say that $\mu$ is length multiplying, or that \emph{$\mu$ is
an \lmm-morphism}, if there is a constant $k$ such that for all $a \in \Sigma$,
the length of $\mu(a)$ is $k$.

\begin{lemma}\label{lemma-simple-closure-P}\cite[Corollary 3.5]{McKenzie-Peladeau-Therien-1991}
    For $\V$ any variety of monoids, $\Prog{\V}$ is closed under Boolean
    operations, quotients and inverse images of \lmm-morphisms.  
\end{lemma}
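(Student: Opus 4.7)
The plan is to handle each of the three closure properties in turn, each time constructing from programs for the given language(s) an explicit sequence of programs for the resulting language, while staying within $\V$.

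For Boolean operations, closure under complementation is immediate: if $(P_n)_{n \in \N}$ over $M \in \V$ with accepting sets $(F_n)_{n \in \N}$ \pr-recognizes $L$, then the same sequence with accepting sets $(M \setminus F_n)_{n \in \N}$ \pr-recognizes $\Sigma^* \setminus L$. For intersection (and hence, with complementation, union), if $L_1$ is \pr-recognized by $(P_n^1)$ over $M_1 \in \V$ and $L_2$ by $(P_n^2)$ over $M_2 \in \V$, I would build $Q_n$ over $M_1 \times M_2$ by concatenating the instruction sequence of $P_n^1$ lifted via $a \mapsto (f(a), 1_{M_2})$ with that of $P_n^2$ lifted via $a \mapsto (1_{M_1}, f(a))$. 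The output is $(P_n^1(w), P_n^2(w))$, polynomial length is preserved, and $M_1 \times M_2 \in \V$ because varieties are closed under direct products.

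For quotients, given $L$ \pr-recognized by $(P_n)$ over $M \in \V$ and fixed $u, v \in \Sigma^*$, I would \pr-recognize $u^{-1} L v^{-1}$ on inputs of length $n$ by simulating $P_{n + \length{u} + \length{v}}$ on the virtual word $u w v$. Each instruction $(i, f)$ of $P_{n + \length{u} + \length{v}}$ falls in one of three ranges: if $i$ is in the $u$-prefix region or the $v$-suffix region, the letter read is fixed, so I replace the instruction by $(1, g)$ with $g$ the constant function returning the corresponding $f(u_i)$ or $f(v_{\cdot})$; if $i$ lies in the middle region, I shift it to $(i - \length{u}, f)$. Since $\length{u}, \length{v}$ are constants, polynomial length in $n + \length{u} + \length{v}$ remains polynomial in $n$, and we stay over the same monoid $M$.

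For inverse images of an \lmm-morphism $\mu\colon \Sigma^* \to \Gamma^*$ with $\length{\mu(a)} = k$ for all $a \in \Sigma$, I would \pr-recognize $\mu^{-1}(L)$ on inputs of length $n$ by simulating $P_{kn}$ on the virtual word $\mu(w_1) \cdots \mu(w_n)$. An instruction $(i, f)$ of $P_{kn}$ reads the $r$-th letter of $\mu(w_j)$ where $j = \lceil i / k \rceil$ and $r = i - (j-1)k$, both determined solely by $i$; I replace it by $(j, g)$ where $g(a)$ is $f$ applied to the $r$-th letter of $\mu(a)$. The length is the same as that of $P_{kn}$, hence polynomial in $n$, and we stay over $M$. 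None of the steps present a genuine obstacle; the only care needed is to verify that the polynomial-length condition is preserved through each construction and that the monoid produced (the original $M$, or $M_1 \times M_2$) remains in $\V$, both of which are immediate from the definitions.
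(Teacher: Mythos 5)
Your proof is correct, and it is precisely the standard construction behind the cited Corollary 3.5 of McKenzie--Péladeau--Thérien (the paper itself just cites that reference without reproving the lemma). Each of the three closures is handled by the expected explicit program rewriting: swapping accepting sets for complement, running two programs in parallel over the direct product $M_1 \times M_2$ for intersection (legitimate since varieties are closed under direct product), hard-wiring the instructions that would read $u$ or $v$ and shifting the rest for quotients, and re-indexing instruction positions through $j = \lceil i/k \rceil$ and $r = i - (j-1)k$ for inverse $\lmm$-morphisms. The only detail you elide is the degenerate case $n = 0$ in the quotient construction, where there is no position $1$ available for a constant instruction; but then the only word is $\varepsilon$, the empty program outputs the identity, and one simply puts the identity in the accepting set iff $uv \in L$. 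With that trivial adjustment noted, the argument is complete, the polynomial-length bounds are preserved as you observe (since $|u|$, $|v|$, $k$ are constants), and the target monoid stays in $\V$ throughout.
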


Given two range $n$ programs $P, P'$ over some monoid $M$ using the same input
alphabet $\Sigma$, we shall say that $P'$ is a \emph{subprogram}, a
\emph{prefix} or a \emph{suffix} of $P$ whenever $P'$ is, respectively, a
subword, a prefix or a suffix of $P$, looking at $P$ and $P'$ as words over
$[n] \times M^\Sigma$.

\section{General results about regular languages and programs}
\label{sec:general}

Let $\V$ be a variety of monoids. By definition any regular language recognized
by a monoid in $\V$ is \pr-recognized by a sequence of programs over a monoid in
$\V$.
Actually, since in a program over some monoid in $\V$, the monoid element output
for each instruction can depend on the position of the letter read, hence in
particular on its position modulo some fixed number, it is easy to see that any
regular language in $\DLang{\V \FSVsdprod \Mod}$ is \pr-recognized by a sequence
of programs over some monoid in $\V$. We will see in
Section~\ref{section-essentially} that programs over some monoid in $\V$ can
also \pr-recognize the regular languages that are ``essentially $\V$'' i.e.\ that
differ from a language in $\DLang{\V}$  only on the prefix and suffix of the words.

In this section we characterize those varieties $\V$ such that programs over
monoids in $\V$ do not recognize more regular languages than those mentioned
above.

We first recall the definitions and results around \pv-varieties developed by Péladeau, Tesson,
Straubing and Thérien and then present the definition of \spv-varieties that
was inspired by their work and studied in the conference version of the present
paper.
In order to deal with the limitation of \spv-varieties we then define the
notion of essentially-$\V$ that will be the last ingredient for our definition of
tameness. We then provide an upper bound on the regular languages that can be
\pr-recognized by a sequence of programs over a monoid from a tame variety $\V$.

\subsection{\pv- and \spv-varieties of monoids}
\label{sse:sp-varieties}

We first recall the definition of \pv-varieties. These seem to have been originally defined by
Péladeau in his Ph.D. thesis~\cite{PhD_thesis/Peladeau} and later used by Tesson
in his own Ph.D. thesis~\cite{PhD_thesis/Tesson}. The notion of a \pv-variety
has also been defined for semigroups by Péladeau, Straubing and Thérien
in~\cite{Peladeau-Straubing-Therien-1997}.

Let $\mu$ be a morphism from $\Sigma^*$ to a finite monoid $M$. We denote by
$\WProb{\mu}$ the set of languages $L$ over $\Sigma$ such that $L =
\mu^{-1}(F)$ for some subset $F$ of $M$.  Given a semigroup $S$ there is a
unique morphism $\eta_S\colon S^* \to S^1$ extending the identity on $S$,
called the \emph{evaluation morphism of $S$}. We write $\WProb{S}$ for
$\WProb{\eta_S}$. We define $\WProb{M}$ similarly for any monoid $M$. It is
easy to see that if $M \in \V$ then $\WProb{M} \subseteq \Prog{\V}$. The
condition to be a \pv-variety requires a converse of this observation.

\begin{definition}
    An \emph{\pv-variety of monoids} is a variety $\V$ of monoids such that for
    any finite monoid $M$, if $\WProb{M} \subseteq \Prog{\V}$ then
    $M \in \V$.
\end{definition}

The following result illustrates an important property of \pv-varieties, when
the notion is adapted to varieties of semigroups accordingly.

\begin{proposition}\label{prop-pv}\cite{Peladeau-Straubing-Therien-1997}
    Let $\V\FSVsdprod \FSVD$ be a \pv-variety of semigroups, where $\V$ is a
    variety of monoids.
    
    Then $\Prog{\V \FSVsdprod \FSVD} \cap \Reg = \DLang{\V \FSVsdprod \FSVD
    \FSVsdprod \Mod}$ (where the latter class is defined in the same way as
    $\DLang{\V \FSVsdprod \Mod}$).
\end{proposition}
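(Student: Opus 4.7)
The easy inclusion $\DLang{\V\FSVsdprod\FSVD\FSVsdprod\Mod} \subseteq \Prog{\V\FSVsdprod\FSVD}$ will follow by direct construction together with Lemma~\ref{lemma-simple-closure-P}. Since the right-hand class is closed under Boolean operations, it suffices to \pr-recognize each of the two building blocks of $\DLang{\V\FSVsdprod\FSVD\FSVsdprod\Mod}$ separately: length-modulo-$k$ languages are trivially \pr-recognized by programs over the trivial monoid whose accepting set $F_n$ depends only on $n \bmod k$; for a tagged language $L' \subseteq (\Sigma \times \set{0,\ldots,k-1})^*$ whose syntactic semigroup lies in $\V\FSVsdprod\FSVD$, a program of range $n$ whose $i$-th instruction on letter $a$ outputs the syntactic image of $(a, i \bmod k)$ does the job, because $i \bmod k$ is known at program-construction time.

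The harder direction $\Prog{\V\FSVsdprod\FSVD} \cap \Reg \subseteq \DLang{\V\FSVsdprod\FSVD\FSVsdprod\Mod}$ is where the \pv-variety assumption will be used. Given a regular $L \subseteq \Sigma^*$ with a polynomial-length program sequence $(P_n)_{n\in\N}$ over some $S \in \V\FSVsdprod\FSVD$, the plan is to slice $L$ according to length modulo a suitable $k$ and to reduce each slice to a tagged language over $\Sigma \times \set{0,\ldots,k-1}$. Concretely, I would pick $k$ large enough to absorb the period of $L$ (e.g., a common multiple of the idempotent power of the syntactic monoid of $L$ and of any relevant structural constants of $S$), set $L_r = L \cap \set{w : \length{w} \equiv r \bmod k}$, and let $\tilde{L}_r \subseteq (\Sigma \times \set{0,\ldots,k-1})^*$ be its tagged refinement. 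Since $L = \bigcup_r L_r$ and the target class is closed under finite unions, it suffices to show each $\tilde{L}_r \in \DLang{\V\FSVsdprod\FSVD}$; by the definition of $\DLang{\V\FSVsdprod\FSVD\FSVsdprod\Mod}$, this would then yield $L_r$ and hence $L$ in the target class.

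To show each $\tilde{L}_r$ lies in $\DLang{\V\FSVsdprod\FSVD}$, I would apply the semigroup-flavored \pv-variety property to its syntactic semigroup $T_r$. This requires establishing $\WProb{T_r} \subseteq \Prog{\V\FSVsdprod\FSVD}$: every language pre-imaged by the syntactic morphism of $\tilde{L}_r$ from a subset of $T_r$ can be \pr-recognized by a suitable modification of the sequence $(P_n)$. The modification uses two ingredients: (i) the tag encoding position modulo $k$ is static information, freely exploitable when building each program, and (ii) following the Péladeau-Straubing-Thérien strategy, one adjoins a fresh neutral letter $\#$ to the alphabet, so that residuals of $\tilde{L}_r$ correspond to pre-images of arbitrary subsets of $T_r$, and the original program is simulated on the $\#$-free portion while the \emph{definite} ($\FSVD$) part of the action absorbs bounded local disturbances around each instruction.

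The main obstacle will be exactly this neutral-letter simulation step: given a program over $S \in \V\FSVsdprod\FSVD$ whose instructions depend rigidly on positions, one must absorb arbitrary insertions of $\#$ without corrupting the positional bookkeeping that $(P_n)$ relies on. What makes this feasible here, and what distinguishes varieties of the form $\V\FSVsdprod\FSVD$ from arbitrary \pv-varieties of monoids, is precisely the $\FSVD$ component, which lets programs read a bounded window near each position and thereby compensate for local distortions introduced by the neutral letters. Once $T_r \in \V\FSVsdprod\FSVD$ is obtained, unwinding the definitions yields $\tilde{L}_r \in \DLang{\V\FSVsdprod\FSVD}$ and hence, across all residues $r$, $L \in \DLang{\V\FSVsdprod\FSVD\FSVsdprod\Mod}$, closing the argument.
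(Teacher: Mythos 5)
First, note that the paper does not prove this proposition itself: it is stated with a direct citation to Péladeau, Straubing and Thérien~\cite{Peladeau-Straubing-Therien-1997}, so there is no ``paper's own proof'' to compare against. Judged on its own terms, your plan has the right broad shape for the easy inclusion $\DLang{\V\FSVsdprod\FSVD\FSVsdprod\Mod} \subseteq \Prog{\V\FSVsdprod\FSVD}$, and it correctly identifies that the hard inclusion must pass through the \pv-property, i.e., through establishing $\WProb{T}\subseteq\Prog{\V\FSVsdprod\FSVD}$ for a well-chosen semigroup $T$ and then deducing $T\in\V\FSVsdprod\FSVD$.

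The gap is in the choice of $T$ and in the mechanism for proving $\WProb{T}\subseteq\Prog{\V\FSVsdprod\FSVD}$. You set $T_r$ to be the syntactic semigroup of a tagged slice $\tilde L_r$, but the argument that $\WProb{T_r}\subseteq\Prog{\V\FSVsdprod\FSVD}$ needs, as in the paper's own Proposition~\ref{prop-spv-quasi}, an \lmm-morphism $T_r \to (\Sigma\times\set{0,\dots,k-1})^*$ sending each element of $T_r$ to a representative word of \emph{uniform length}; that is exactly what fails for an arbitrary syntactic semigroup and is guaranteed only for the \emph{stable} semigroup of the syntactic morphism of $L$. Your plan never singles out the stable semigroup, and the stability index is precisely the ``right'' $k$, which you leave vague (``a common multiple of the idempotent power and some structural constants''). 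The neutral-letter step is also misplaced here: adjoining a neutral letter $\#$ does not give the uniform-length representatives needed for an \lmm-morphism, and it changes the syntactic semigroup you are trying to control (a neutral letter collapses the syntactic semigroup to its stable semigroup, which is a different object from your $T_r$). The clean route is the one the paper sketches in the introduction and mirrors in Proposition~\ref{prop-spv-quasi}: show that the stable semigroup of the syntactic morphism of $L$ satisfies $\WProb{\cdot}\subseteq\Prog{\V\FSVsdprod\FSVD}$ via Boolean closure and an \lmm-morphism, conclude by the \pv-property that it lies in $\V\FSVsdprod\FSVD$, i.e., that the syntactic morphism is in $\StVQuasi(\V\FSVsdprod\FSVD)$, and then invoke the known identity $\DLang{\StVQuasi(\V\FSVsdprod\FSVD)}=\DLang{\V\FSVsdprod\FSVD\FSVsdprod\Mod}$ (a wreath-product/quasi correspondence the paper states explicitly) to land in the target class. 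Without these two ingredients, your plan does not close.
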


It is known that $\FMVJ$ is a \pv-variety of monoids~\cite{PhD_thesis/Tesson}
but as we have seen in the introduction, $\Prog{\FMVJ}$ contains languages that are more
complicated than those in $\DLang{\FMVJ\FSVsdprod \Mod}$ (see the end of this
subsection for a proof). In order to capture those varieties for which programs
are well behaved we need a restriction of \pv-varieties and this brings us to
the following definition.

\begin{definition}
\label{def:sp-variety}
    An \emph{\spv-variety of monoids} is a variety $\V$ of monoids such that for
    any finite semigroup $S$, if $\WProb{S} \subseteq \Prog{\V}$ then
    $S^1 \in \V$.
\end{definition}

Hence any \spv-variety of monoids is also a \pv-variety of monoids,
but the converse is not always true as we will see in Proposition~\ref{prop-J-sp-variety} below
 that $\FMVJ$ is not an \spv-variety.

An example of an \spv-variety of monoids is the class of aperiodic monoids
$\FMVA$. This is a consequence of the result that for any number $k > 1$,
checking if $\length{w}_a$ is a multiple of $k$ for $w \in \set{a, b}^*$ cannot
be done in $\AC[0] = \Prog{\FMVA}$~\cite{Furst-Saxe-Sipser-1984,Ajtai-1983}
(we shall denote the corresponding language over the alphabet $\set{0, 1}$ by
$\FLang{MOD_k}$).
Towards a contradiction, assume there would exist a semigroup $S$ such that
$S^1$ is not aperiodic but still $\WProb{S} \subseteq \Prog{\FMVA}$.  Then there
is an $x$ in $S$ such that $x^\omega \neq x^{\omega + 1}$. Consider the morphism
$\mu\colon \set{a, b}^* \to S^1$ sending $a$ to $x^{\omega + 1}$ and $b$ to
$x^\omega$, and the language $L = \mu^{-1}(x^\omega)$. It is easy to see that
$L$ is the language of all words with a number of $a$ congruent to $0$ modulo
$k$, where $k$ is the smallest number such that $x^{\omega + k} = x^\omega$. As
$x^\omega \neq x^{\omega + 1}$, we have $k > 1$, so that $L \notin \Prog{\FMVA}$
by~\cite{Furst-Saxe-Sipser-1984,Ajtai-1983}.
Let $\eta_S\colon S^* \to S^1$ be the evaluation morphism of $S$. The morphism
$\varphi\colon \Sigma^* \to S^*$ sending each letter $a \in \Sigma$ to $\mu(a)$
verifies that $\mu = \eta_S \circ \varphi$, so that
$L = \mu^{-1}(x^\omega) = (\eta_S \circ \varphi)^{-1}(x^\omega) =
 \varphi^{-1}(\eta_S^{-1}(x^\omega))$.
From $\WProb{S} \subseteq \Prog{\FMVA}$ it follows that
$\eta_S^{-1}(x^\omega) \in \Prog{\FMVA}$, hence since $\varphi$ sends each
letter of $\Sigma$ to a letter of $S$, it is an \lmm-morphism and as
$\Prog{\FMVA}$ is closed under inverses of \lmm-morphisms by
Lemma~\ref{lemma-simple-closure-P}, we have
$L = \varphi^{-1}(\eta_S^{-1}(x^\omega)) \in \Prog{\FMVA}$, a contradiction.

The following is the desired consequence of being an \spv-variety of monoids.
\begin{proposition}\label{prop-spv-quasi}
    Let $\V$ be an \spv-variety of monoids.
    Then
    $\Prog{\V} \cap \Reg \subseteq \DLang{\StVQuasi\V}$.
\end{proposition}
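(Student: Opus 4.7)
Let $L \in \Prog{\V} \cap \Reg$ and let $\varphi\colon \Sigma^* \to M$ be its syntactic morphism, with $L = \varphi^{-1}(F_L)$. Let $k$ be the stability index of $\varphi$ and $S = \varphi(\Sigma^k)$ the stable semigroup, so that $S^1$ is the stable monoid of $\varphi$. Showing $L \in \DLang{\StVQuasi\V}$ amounts to showing $S^1 \in \V$, and the \spv-variety hypothesis reduces this to proving $\WProb{S} \subseteq \Prog{\V}$. The plan is therefore to take an arbitrary $L' = \eta_S^{-1}(F) \in \WProb{S}$ (with $F \subseteq S^1$) and produce a polynomial-length program sequence recognizing $L'$, by combining two ingredients: a Boolean/quotient argument upgrading the single assumption $L \in \Prog{\V}$ into $\WProb{\varphi} \subseteq \Prog{\V}$, and a representative-picking \lmm-morphism that encodes letters of $S$ as blocks of $k$ letters of $\Sigma$.

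For the first ingredient, I would exploit the fact that $\varphi$ is the \emph{syntactic} morphism of $L$: fixing for each $a \in M$ a preimage $x_a \in \Sigma^*$ with $\varphi(x_a) = a$, one has $\varphi(u) = m$ iff for every $(a, b) \in M^2$ the equivalence $x_a u x_b \in L \Leftrightarrow a m b \in F_L$ holds. This realises $\varphi^{-1}(\{m\})$ as a finite Boolean combination of quotients of $L$, and Lemma~\ref{lemma-simple-closure-P} then yields $\varphi^{-1}(F') \in \Prog{\V}$ for every $F' \subseteq M$. For the second ingredient, pick for each $s \in S$ a word $g(s) \in \Sigma^k$ with $\varphi(g(s)) = s$ and let $\psi\colon S^* \to \Sigma^*$ be the \lmm-morphism extending $g$ (with multiplier $k$); then $\varphi(\psi(v)) = \eta_S(v) \in S$ for every nonempty $v \in S^+$.

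Combining the two, set $F' = F \cap S$: the languages $\eta_S^{-1}(F) \subseteq S^*$ and $\psi^{-1}(\varphi^{-1}(F')) \subseteq S^*$ agree on $S^+$ and differ by at most the single word $\emptyword$. Since $\varphi^{-1}(F') \in \Prog{\V}$ by the first ingredient, since inverse \lmm-morphisms preserve $\Prog{\V}$ by Lemma~\ref{lemma-simple-closure-P}, and since $\{\emptyword\}$ is trivially \pr-recognized over the trivial monoid of $\V$, Boolean closure gives $\eta_S^{-1}(F) \in \Prog{\V}$. This establishes $\WProb{S} \subseteq \Prog{\V}$ and hence $S^1 \in \V$, concluding the proof.

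The main conceptual hurdle is noticing that a single language $L$ is actually enough to start with: because $\varphi$ is syntactic, $L$ alone already controls all of $\WProb{\varphi}$ via Boolean operations and quotients, after which the length-$k$ block encoding provided by $\psi$ transfers everything from $\WProb{\varphi}$ to $\WProb{S}$. The only genuine technicality is the empty-word mismatch between $\eta_S$ and $\varphi \circ \psi$ that can arise when $1_M \notin S$, but this is absorbed by the Boolean-closure step using $\{\emptyword\}$.
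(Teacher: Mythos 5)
Your proof is correct and follows essentially the same route as the paper: identify the stable semigroup $S$, reduce membership in $\DLang{\StVQuasi\V}$ to $S^1\in\V$ via the \spv-hypothesis, show $\WProb{\varphi}\subseteq\Prog{\V}$ using Boolean combinations of quotients of $L$ (exploiting that $\varphi$ is syntactic), and transfer to $\WProb{S}$ via the length-$k$ block \lmm-morphism together with Lemma~\ref{lemma-simple-closure-P}. The only (harmless) difference is that you work with a general accepting set $F\subseteq S^1$ and explicitly patch the empty-word discrepancy, whereas the paper works element-by-element with $m\in S$ and appeals to closure under union.
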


\begin{proof}
    Let $L$ be a regular language in $\Prog{M}$ for some $M \in \V$. Let $M_L$
    be the syntactic monoid of $L$ and $\eta_L$ its syntactic morphism. Let $S$
    be the stable semigroup of $\eta_L$, in particular $S = \eta_L(\Sigma^k)$
    for some $k$.
    We wish to show that $S^1$ is in $\V$. 

    We show that $\WProb{S} \subseteq \Prog\V$ and conclude from the fact that
    $\V$ is an \spv-variety that $S^1 \in \V$ as desired.
    Let $\eta_S\colon S^* \to S^1$ be the evaluation morphism of $S$.
    Consider $m \in S$ and consider $L' = \eta_S^{-1}(m)$. We wish to show that
    $L' \in \Prog{\V}$. This implies that $\WProb{S} \subseteq \Prog{\V}$ by
    closure under union, Lemma~\ref{lemma-simple-closure-P}.

    Let $L'' = \eta_L^{-1}(m)$. Since $m$ belongs to the syntactic monoid of $L$
    and $\eta_L$ is the syntactic morphism of $L$, a classical algebraic
    argument~\cite[Chapter 2, proof of Lemma 2.6]{Books/Pin-1986} shows that
    $L''$ is a Boolean combination of quotients of $L$. By
    Lemma~\ref{lemma-simple-closure-P}, we conclude that $L'' \in \Prog{\V}$.

    By definition of $S$, for any element $s$ of $S$ there is a word $u_s$ of
    length $k$ such that $\eta_L(u_s) = s$. Notice that this is precisely where
    we need to work with $S$ and not $S^1$.

    Let $f\colon S^* \to \Sigma^*$ be the \lmm-morphism sending $s$ to $u_s$ and
    notice that $L' = f^{-1}(L'')$. The result follows by closure of $\Prog{\V}$
    under inverse images of \lmm-morphisms, Lemma~\ref{lemma-simple-closure-P}.
\end{proof}

We don't know whether it is always true that for \spv-varieties of monoids $\V$,
$\DLang{\StVQuasi\V}$ is included in $\Prog{\V}$. But we can prove it for local
varieties.

\begin{proposition}\label{prop-local}
\label{ptn:Regular_languages_local_sp-variety_of_finite_monoids}
    Let $\V$ be a local \spv-variety of monoids.
    Then $\Prog{\V} \cap \Reg = \DLang{\StVQuasi\V}$.
\end{proposition}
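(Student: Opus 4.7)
The plan is to prove the two inclusions separately. The inclusion
$\Prog{\V} \cap \Reg \subseteq \DLang{\StVQuasi\V}$ is immediate from
Proposition~\ref{prop-spv-quasi}, since every local \spv-variety is in
particular an \spv-variety. Thus the real content of the statement lies in the
reverse inclusion $\DLang{\StVQuasi\V} \subseteq \Prog{\V}$.

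To establish that reverse inclusion, I would use the locality hypothesis
through the identity $\DLang{\V \FSVsdprod \Mod} = \DLang{\StVQuasi\V}$,
recalled in the preliminaries from the work of Dartois and Paperman, which
reduces the task to showing
$\DLang{\V \FSVsdprod \Mod} \subseteq \Prog{\V}$. By the concrete
characterization of $\DLang{\V \FSVsdprod \Mod}$ given in the preliminaries,
together with the closure of $\Prog{\V}$ under Boolean operations
(Lemma~\ref{lemma-simple-closure-P}), it then suffices to \pr-recognize by
sequences of programs over monoids in $\V$ the two generating classes of
languages: (i) languages over $\Sigma$ whose membership depends only on the
length of the word modulo some $k \in \N_{>0}$, and (ii) languages $L$ of the
form $\set{w \in \Sigma^* : \tilde w \in L'}$, where $\tilde w$ denotes $w$
tagged letter by letter with its position modulo $k$ and $L'$ is a language
with syntactic monoid in $\V$.

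For type~(i), for each $n$ I would take the empty program over any fixed
$M \in \V$ (which always outputs the identity) and choose the accepting set
$F_n$ to be $M$ or $\emptyset$ according to whether $n \bmod k$ lies in the
prescribed set of residues. For type~(ii), given a morphism
$\varphi\colon \mathopen{}\left(\Sigma \times \set{0, \ldots, k-1}\right)\mathclose{}^* \to N$ with
$N \in \V$ and a subset $F \subseteq N$ such that
$L' = \varphi^{-1}(F)$, I would for each $n$ build the program of length $n$
over $N$ whose $i$-th instruction is $(i, f_i)$ with
$f_i(a) = \varphi\mathopen{}\left((a, i \bmod k)\right)\mathclose{}$, together with accepting set $F$:
on input $w = w_1 \cdots w_n \in \Sigma^n$, the morphism property of
$\varphi$ ensures that this program outputs exactly $\varphi(\tilde w)$,
which belongs to $F$ if and only if $w \in L$, giving a sequence of programs
of linear length. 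No serious obstacle is expected here: once the
Dartois--Paperman identity is invoked, everything reduces to elementary
program constructions glued together by Boolean closure.
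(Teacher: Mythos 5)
Your proof is correct and follows the same route as the paper's: the forward inclusion via Proposition~\ref{prop-spv-quasi}, the Dartois--Paperman identity $\DLang{\StVQuasi\V} = \DLang{\V\FSVsdprod\Mod}$ for local varieties, and the general fact that $\DLang{\V\FSVsdprod\Mod} \subseteq \Prog{\V}$. The only difference is that you spell out the concrete program constructions establishing this last inclusion, whereas the paper simply invokes it as a previously noted fact.
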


\begin{proof}
This follows from the fact that for local varieties
$\DLang{\StVQuasi\V} = \DLang{\V\FSVsdprod\Mod}$
(see~\cite{Dartois-Paperman-2014}).
The result can then be derived using Proposition~\ref{prop-spv-quasi}, as we
always have $\DLang{\V\FSVsdprod\Mod} \subseteq \Prog{\V}$.
\end{proof}

As $\FMVA$ is local~\cite[Example 15.5]{Tilson-1987} and an \spv-variety, it
follows from Proposition~\ref{prop-local} that the regular languages in
$\Prog\FMVA$, hence in $\AC[0]$, are precisely those in
$\DLang{\StVQuasi\FMVA}$, which is the characterization of the
regular languages in $\AC[0]$ obtained by Barrington, Compton, Straubing and
Thérien~\cite{Barrington-Compton-Straubing-Therien-1992}.

We will see in the next section that $\FMVDA$ is an \spv-variety. As it is also
local~\cite{Almeida-1996}, we get from Proposition~\ref{prop-local} that the
regular languages of $\Prog{\FMVDA}$ are precisely those in
$\DLang{\StVQuasi\FMVDA}$.

As explained in the introduction, the language  $(a + b)^* a c^+$ can be
\pr-rec\-og\-nized by a program over $\FMVJ$. A simple algebraic argument shows
that it is not in $\DLang{\StVQuasi\FMVJ}$: just compute the stable monoid of
the syntactic morphism of the language, which is equal to the syntactic monoid
of the language, that is not in $\FMVJ$. Hence, by
Proposition~\ref{prop-spv-quasi}, we have the following result:

\begin{proposition}\label{prop-J-sp-variety}
    $\FMVJ$ is not an \spv-variety of monoids.
\end{proposition}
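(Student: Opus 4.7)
The plan is to exhibit a regular language separating $\Prog{\FMVJ}$ from $\DLang{\StVQuasi\FMVJ}$ and conclude by the contrapositive of Proposition~\ref{prop-spv-quasi}. The natural witness is the language $L = (a + b)^* a c^+$ over $\Sigma = \set{a, b, c}$ already discussed in the introduction. That $L \in \Prog{\FMVJ}$ is precisely the content of the example at the start of the paper (with a full argument in~\cite[Lemma~4.1]{Grosshans-2020}): one uses the morphism $\varphi$ into a monoid of $\FMVJ$ recognising the subword-defined language of words containing $ca$ as subword but neither $cca$, $caa$ nor $cb$, together with the cyclic-shift program instruction sequence $(2,\varphi),(1,\varphi),(3,\varphi),(2,\varphi),\ldots,(n,\varphi),(n-1,\varphi)$.

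The core of the argument is to show that $L \notin \DLang{\StVQuasi\FMVJ}$. For this I would first compute the syntactic monoid $M_L$ by a case analysis of the contexts $(x,y)$ for which $xwy \in L$. One finds exactly six $\sim_L$-classes: the identity $1 = [\emptyword]$; a class $A$ of nonempty words in $\set{a,b}^*$ ending in $a$; a class $B$ of nonempty words in $\set{a,b}^*$ ending in $b$; a class $C_2$ of powers of $c$; a class $C_1$ containing $ac$ (and more generally $w_1 c^k$ with $w_1$ nonempty, $w_1 \in \set{a,b}^*$ ending in $a$, and $k \geq 1$); and a zero $0$ collecting all remaining words, namely those with an internal $c$ or of the form $w_1 c^k$ with $w_1$ ending in $b$ and $k \geq 1$. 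Filling in the multiplication table (in particular $AB = B$, $BA = A$, $A C_2 = C_1$, $B C_2 = 0$, $C_2 A = C_2 B = 0$, and $C_1 \cdot m = 0$ for $m \notin \set{1, C_2}$), one sees that $M A M = M B M = \set{A, B, C_1, 0}$, so $A$ and $B$ lie in the same $\cal{J}$-class. Since $A \neq B$, $M_L$ is not $\cal{J}$-trivial and thus $M_L \notin \FMVJ$; equivalently, the $\FMVJ$-identity $(xy)^\omega = (xy)^\omega x$ fails at $x = A$, $y = B$, where $(AB)^\omega = B \neq A = B \cdot A$.

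Next I would verify that the stable monoid of the syntactic morphism $\eta_L\colon \Sigma^* \to M_L$ is $M_L$ itself. A short computation gives $\eta_L(\Sigma^2) = \set{A, B, C_1, C_2, 0}$, and the multiplication above shows that this set is already closed under multiplication; so the stability index is $2$ and the stable monoid is $\set{1} \cup \set{A, B, C_1, C_2, 0} = M_L$. Combined with $M_L \notin \FMVJ$ this yields $L \notin \DLang{\StVQuasi\FMVJ}$. If $\FMVJ$ were an \spv-variety, Proposition~\ref{prop-spv-quasi} would give $\Prog{\FMVJ} \cap \Reg \subseteq \DLang{\StVQuasi\FMVJ}$ and hence $L \in \DLang{\StVQuasi\FMVJ}$, a contradiction.

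The main obstacle is the mechanical but error-prone enumeration of $\sim_L$-classes and the check that the proposed six-element set is indeed closed and gives the claimed multiplication table; once that is in place, detecting the $\cal{J}$-equivalence of $A$ and $B$ (or, equivalently, spotting a failure of a defining $\FMVJ$-identity) and verifying stability of $\eta_L(\Sigma^k)$ at $k = 2$ are immediate.
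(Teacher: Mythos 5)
Your proposal is correct and follows exactly the same route as the paper: exhibit $L = (a+b)^*ac^+ \in \Prog{\FMVJ}$ (from the introduction), verify that the stable monoid of its syntactic morphism equals its syntactic monoid and lies outside $\FMVJ$, and conclude $L \notin \DLang{\StVQuasi\FMVJ}$, contradicting Proposition~\ref{prop-spv-quasi}. Your explicit computation of the six-element syntactic monoid, the failure of $(xy)^\omega = (xy)^\omega x$ at $(A,B)$, and the stability index $2$ with $\eta_L(\Sigma^2) = \{A,B,C_1,C_2,0\}$ closed under multiplication all check out; the paper just leaves this "simple algebraic argument" unspelled.
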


Despite Proposition~\ref{prop-J-sp-variety} providing some explanation for the
unexpected relative strength of programs over monoids in $\FMVJ$, the notion of
an \spv-variety of monoids isn't entirely satisfactory.

We say that a monoid is \emph{trivial} when its underlying set contains a
sole element. The class of all trivial monoids, that we will denote by $\FMVI$,
forms a variety: it is the sole variety containing only trivial monoids, so we
may call it the \emph{trivial variety of monoids}.

One observation to be made is that any non-trivial monoid $M$ \pr-recognizes the
language of words over $\set{a, b}$ starting with an $a$: for the first position
in any word, just send $a$ to any element that is not the identity and $b$ to
the identity. This means that for any non-trivial variety of monoids $\V$, we
have that $a (a + b)^* \in \Prog{\V}$. But since the stable monoid of the
syntactic morphism of $a (a + b)^*$ is equal to the syntactic monoid of this
language, it follows that for any non-trivial variety of monoids $\V$ not
containing the syntactic monoid of $a (a + b)^*$, we have
$\Prog{\V} \cap \Reg \nsubseteq \DLang{\StVQuasi\V}$, hence that $\V$ is not an
\spv-variety of monoids.

Therefore, many varieties of monoids actually aren't \spv-varieties of monoids
simply because of the built-in capacity of programs over any non-trivial monoid
to test the first letter of input words. This is for example true for any
non-trivial variety containing only groups and for any non-trivial variety
containing only commutative monoids. This built-in capacity, additional to
programs' ability to do positional modulo counting that underlies the
definition of \spv-varieties, should be taken into account in the notion we
are looking for to capture ``good behavior''. In order to define our notion of
tameness we first study this extra capacity that is built-in for programs over $\V$ and
that we call ``essentially-$\V$''.

\subsection{Essentially-\texorpdfstring{$\V$}{V} stamps}\label{section-essentially}

It is easy to extend our reasoning above to show that given any
non-trivial monoid $M$ and given some $k \in \N_{>0}$, the language of words
over $\set{a, b}$ having an $a$ in position $k$, that is
$(a + b)^{k - 1} a (a + b)^*$, is \pr-recognized by $M$, and the same goes for
$(a + b)^* a (a + b)^{k - 1}$.  By generalizing, we can quickly conclude that
given any non-trivial variety of monoids $\V$, for any alphabet $\Sigma$ and any
$x, y \in \Sigma^*$, we have that $x \Sigma^* y \in \Prog{\V}$ by closure of
$\Prog{\V}$ under Boolean operations, Lemma~\ref{lemma-simple-closure-P}. Put
informally, \pr-recognition by monoids taken from any fixed non-trivial variety
of monoids allows one to check some constant-length beginning or ending of the
input words. Moreover, \pr-recognition by monoids taken from any fixed
non-trivial variety of monoids $\V$ also easily allows to test for membership of
words in $\DLang{\V}$ after stripping out some constant-length beginning or
ending: that is, languages of the form $\Sigma^{k_1} L \Sigma^{k_2}$ for
$k_1, k_2 \in \N$ and $L \in \DLang{\V}$.

This motivates the definition of \emph{essentially-$\V$} stamps.
\begin{definition}
    Let $\V$ be a variety of monoids.
    Let $\varphi\colon \Sigma^* \to M$ be a stamp and let $s$ be its stability
    index.

    We say that $\varphi$ is \emph{essentially-$\V$} whenever
     there exists a stamp
    $\mu\colon \Sigma^* \to N$ with $N \in \V$ such that for all
    $u, v \in \Sigma^*$, we have
    \[
	\mu(u) = \mu(v) \Rightarrow
	\bigl(\varphi(x u y) = \varphi(x v y) \quad
	      \forall x, y \in \Sigma^s\bigr)
	\displaypunct{.}
    \]
    We will denote by $\StVEsntl\V$ the class of all essentially-$\V$ stamps%
    \footnote{This class actually is an \nem-variety of stamps,
    as defined in~\cite{Straubing-2002}.} and by $\DLang{\StVEsntl\V}$ the
  class of languages recognized by morphisms in  $\StVEsntl\V$.
\end{definition}

Informally stated, a stamp $\varphi\colon \Sigma^* \to M$ is essentially-$\V$
when it behaves like a stamp into a monoid of $\V$ as soon as a
sufficiently long beginning and ending of any input word has been fixed. The
value for ``sufficiently long'' depends on $\varphi$ and is adequately given by
the stability index $s$ of $\varphi$, as by definition of $s$, any word $w$ of
length at least $2 s$ can always be made of length between $s$ and $2 s - 1$
without changing the image by $\varphi$.

Let us start by giving some examples.

Consider first the language $a (a + b)^*$ over the alphabet $\set{a, b}$. Let's
take $\varphi\colon \set{a, b}^* \to M$ to be its syntactic morphism: its
stability index is equal to $1$ and it has the property that for any
$w \in \set{a, b}^*$, we have $\varphi(a w) = \varphi(a)$ and
$\varphi(b w) = \varphi(b)$. Hence, if we define
$\mu\colon \set{a, b}^* \to \set{1}$ to be the obvious stamp into the trivial
monoid $\set{1}$, we indeed have that for all $u, v \in \set{a, b}^*$, it holds
that
\[
    \mu(u) = \mu(v) \Rightarrow
    \bigl(\varphi(x u y) = \varphi(x v y) \quad
	  \forall x, y \in \set{a, b}^1\bigr)
    \displaypunct{.}
\]
In conclusion, the stamp $\varphi$ is essentially-$\V$ for any variety of
monoids, in particular $a(a+b)^* \in \DLang{\StVEsntl\FMVI}$.

\medskip

Let us now consider the language $a (a + b)^* b (a + b)^* a$  over the alphabet
$\set{a, b}$  of words starting and ending with an $a$ and containing some $b$ in between.
Let $\varphi'\colon \set{a, b}^* \to M'$ be its syntactic morphism: its
stability index is equal to $3$ and it has the property that for all
$x, y \in \set{a, b}^+$, given any $u, v \in \set{a, b}^*$ verifying that the
letter $b$ appears in $u$ if and only if it appears in $v$, it holds that
$\varphi'(x u y) = \varphi'(x v y)$. Hence, if we define
$\mu'\colon \set{a, b}^* \to N'$ to be the syntactic morphism of the language
$(a + b)^* b (a + b)^*$, it is direct to see that for all
$u, v \in \set{a, b}^*$, it holds that
\[
    \mu'(u) = \mu'(v) \Rightarrow
    \bigl(\varphi'(x u y) = \varphi'(x v y) \quad
	  \forall x, y \in \set{a, b}^3\bigr)
    \displaypunct{.}
\]
So we can conclude that the stamp $\varphi'$ is essentially-$\V$ for any variety
of monoids containing the syntactic monoid of $(a + b)^* b (a + b)^*$, in
particular $a (a + b)^* b (a + b)^* a \in \DLang{\StVEsntl\FMVJ}$. However, note that
$\varphi' \notin \StVEsntl\FMVI$ because we have
$\varphi'\bigl((aaa) a (aaa)\bigr) \neq \varphi'\bigl((aaa) b (aaa)\bigr)$.

It is now easy to prove that as long as $\V$ is non-trivial, polynomial-length programs
over monoids from $\V$ do have the built-in capacity to recognize any language
recognized by an essentially-$\V$ stamp.

\begin{proposition}
\label{ptn:Closure_P(V)_essentially-V}
    For any non-trivial variety of monoids $\V$, we have
    $\DLang{\StVEsntl\V} \subseteq \Prog{\V}$.
\end{proposition}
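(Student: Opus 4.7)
The plan is to show that any language $L \in \DLang{\StVEsntl\V}$ can be \pr-recognized by combining a few basic capabilities of programs over $\V$ via the Boolean closure of Lemma~\ref{lemma-simple-closure-P}. To start, I would fix a stamp $\varphi \colon \Sigma^* \to M$ with $L = \varphi^{-1}(F)$ for some $F \subseteq M$, let $s$ be the stability index of $\varphi$, and take $\mu \colon \Sigma^* \to N$ with $N \in \V$ to be a witness that $\varphi \in \StVEsntl\V$. The key observation is that for every $x, y \in \Sigma^s$, the essentially-$\V$ condition makes the map $\psi_{x, y} \colon N \to M$ defined by $\psi_{x, y}(n') = \varphi(x u y)$ --- for any $u \in \Sigma^*$ with $\mu(u) = n'$ --- well-defined (existence by surjectivity of $\mu$, independence of the choice of $u$ by the defining property of $\StVEsntl\V$). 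Writing $F_{x, y} := \psi_{x, y}^{-1}(F) \subseteq N$, we then have, for every $w \in \Sigma^n$ with $n \geq 2 s$ decomposed as $w = x m y$ with $|x| = |y| = s$, that $w \in L$ if and only if $\mu(m) \in F_{x, y}$.

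From this I would derive the decomposition
\[
L \cap \bigcup_{n \geq 2 s} \Sigma^n = \bigcup_{x, y \in \Sigma^s} \bigl( (x \Sigma^* y) \cap L_{x, y} \bigr) \displaypunct{,}
\]
where $L_{x, y}$ denotes the language of words $w$ with $|w| \geq 2 s$ such that $\mu(w_{s + 1} \cdots w_{|w| - s}) \in F_{x, y}$. Each factor $x \Sigma^* y$ lies in $\Prog{\V}$ by the built-in bounded-prefix/suffix capacity of programs over any non-trivial variety, discussed at the start of Section~\ref{section-essentially}. Each $L_{x, y}$ lies in $\Prog{N} \subseteq \Prog{\V}$, witnessed by the program sequence that, for $n \geq 2 s$, consists of the instructions $(s + 1, \mu), (s + 2, \mu), \ldots, (n - s, \mu)$ over $N$ with accepting set $F_{x, y}$, and, for $n < 2 s$, is the empty program with empty accepting set; its length is $\Omicron(n)$. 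The remaining part $L \cap \bigcup_{n < 2 s} \Sigma^n$ is a finite language, hence trivially in $\Prog{\V}$ for non-trivial $\V$ (each singleton $\set{w_0}$ equals the intersection of $w_0 \Sigma^*$ and the set of words of length $\length{w_0}$, both easily \pr-recognized). Gluing these pieces together with Boolean closure (Lemma~\ref{lemma-simple-closure-P}) yields $L \in \Prog{\V}$.

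I do not expect a serious obstacle in this argument: the essentially-$\V$ property was crafted precisely so that $\psi_{x, y}$ is well-defined, and every other ingredient --- bounded-prefix/suffix checking, middle evaluation through $\mu$, the finite residual, and Boolean combination --- is either already established or a routine check. The only care needed is to verify that the constructed program sequences have polynomial (in fact linear) length and to handle cleanly the edge case $|w| = 2 s$, in which the middle is empty and the program simply outputs the identity of $N$, so that acceptance correctly reduces to checking whether $\varphi(x y) \in F$.
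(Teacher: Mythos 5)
Your proof is correct and follows essentially the same strategy as the paper's: decompose $L$ into a finite residual of short words, the bounded prefix/suffix check $x\Sigma^*y$ for $x, y \in \Sigma^s$, and a middle part evaluated through $\mu$ via the linear-length program $(s+1,\mu)\cdots(n-s,\mu)$, then glue with Boolean closure from Lemma~\ref{lemma-simple-closure-P}. The only cosmetic difference is that you bundle all of $F_{x,y}$ into a single accepting set for the middle program, whereas the paper handles one $m \in N$ at a time (writing $\varphi^{-1}(F) = B \cup \bigcup_{(x,m,y)\in I} x\mu^{-1}(m)y$) and unions afterwards; these are interchangeable.
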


\begin{proof}
    Let $\varphi\colon \Sigma^* \to M$ be a stamp in $\StVEsntl\V$. By
    definition, given the stability index $s$ of $\varphi$, there exists a stamp
    $\mu\colon \Sigma^* \to N$ with $N \in \V$ such that for all
    $u, v \in \Sigma^*$, we have
    \[
	\mu(u) = \mu(v) \Rightarrow
	\bigl(\varphi(x u y) = \varphi(x v y) \quad
	      \forall x, y \in \Sigma^s\bigr)
	\displaypunct{.}
    \]

    Let $F \subseteq M$. By definition of $\mu$, given $m \in N$ and
    $x, y \in \Sigma^s$, we either have that
    $x \mu^{-1}(m) y \subseteq \varphi^{-1}(F)$ or that
    $x \mu^{-1}(m) y \cap \varphi^{-1}(F) = \emptyset$. This entails that there
    exist $B \subseteq \Sigma^{\leq 2 s - 1}$ and
    $I \subseteq \Sigma^s \times N \times \Sigma^s$ such that
    \[
	\varphi^{-1}(F) = B \cup \bigcup_{(x, m, y) \in I} x \mu^{-1}(m) y
	\displaypunct{.}
    \]
    We claim that $\set{w} \in \Prog{\V}$ for any
    $w \in \Sigma^{\leq 2 s - 1}$ and also that $x \mu^{-1}(m) y \in \Prog{\V}$
    for any $x, y \in \Sigma^s$ and $m \in N$. So, by closure of $\Prog{\V}$
    under Boolean operations, Lemma~\ref{lemma-simple-closure-P}, it follows
    that $\varphi^{-1}(F) \in \Prog{\V}$.
    Since this is true for any $F$, we have that
    $\WProb{\varphi} \subseteq \Prog{\V}$ and as this is itself true for all
    $\varphi$, we can conclude that $\DLang{\StVEsntl\V} \subseteq \Prog{\V}$.

    The claim remains to be proven.

    Let $k \in \N_{>0}$ and $a \in \Sigma$. Since $\V$ is non-trivial, there
    exists a non-trivial $N' \in \V$: we shall denote its identity by $1$ and by
    $z$ one of its elements distinct from the identity, chosen arbitrarily. It
    is easy to see that the language $\Sigma^{k - 1} a \Sigma^*$ is
    \pr-recognized by the sequence of programs $(P_n)_{n \in \N}$ over $N'$ such
    that for all $n \in \N$, we have
    \[
	P_n =
	\begin{cases}
	    (k, f) & \text{if $n \geq k$}\\
	    \emptyword & \text{otherwise}
	\end{cases}
    \]
    where $f\colon \Sigma \to N'$ is defined by
    $f(b) = \begin{cases}
		z & \text{if $b = a$}\\
		1 & \text{otherwise}
	    \end{cases}$
    for all $b \in \Sigma$. We prove the same for $\Sigma^* a \Sigma^{k - 1}$
    symmetrically.

    It then follows by closure of $\Prog{\V}$ under Boolean operations,
    Lemma~\ref{lemma-simple-closure-P}, that $\set{w} \in \Prog{\V}$ for any
    $w \in \Sigma^{\leq 2 s - 1}$ and that $x \Sigma^* y \in \Prog{\V}$ for any
    $x, y \in \Sigma^s$.

    Finally, let $m \in N$. It is direct to show that there exists
    $L_m \subseteq \Sigma^*$ in $\Prog{\V}$ verifying that
    $L_m \cap \Sigma^s \Sigma^* \Sigma^s = \Sigma^s \mu^{-1}(m) \Sigma^s$: just
    build the sequence of programs $(Q_n)_{n \in \N}$ over $N$ such that for all
    $n \in \N$, we have
    \[
	Q_n =
	\begin{cases}
	    (s + 1, g) (s + 2, g) \cdots (n - s, g) &
		\text{if $n \geq 2 s + 1$}\\
	    \emptyword & \text{otherwise}
	\end{cases}
    \]
    where $g\colon \Sigma \to N$ is defined by $g(b) = \mu(b)$ for all
    $b \in \Sigma$.
    We can then conclude that $x \mu^{-1}(m) y \in \Prog{\V}$ for any
    $x, y \in \Sigma^s$ by closure of $\Prog{\V}$ under Boolean operations,
    Lemma~\ref{lemma-simple-closure-P}, and this holds for any $m$.
\end{proof}

\subsection{Tameness}

We are now ready to define tameness.

We will say that a stamp $\varphi\colon \Sigma^* \to M$ is \emph{stable}
whenever $\varphi(\Sigma^2) = \varphi(\Sigma)$, i.e.\ the stability index of
$\varphi$ is $1$.

\begin{definition}\label{definition-tame}
    A variety of monoids $\V$ is said to be \emph{tame} whenever for any stable
    stamp $\varphi\colon \Sigma^* \to M$, if
    $\WProb{\varphi} \subseteq \Prog{\V}$ then $\varphi \in \StVEsntl\V$.
\end{definition}

Let us first mention that tameness is a generalization of \spv-varieties of monoids.

\begin{proposition}
    Any \spv-variety of monoids is tame.
\end{proposition}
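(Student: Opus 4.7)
The plan is to exploit the \spv-hypothesis on $\V$ by distilling from the stable stamp $\varphi\colon \Sigma^* \to M$ (with $\WProb{\varphi} \subseteq \Prog{\V}$) a finite subsemigroup whose monoid completion serves as the witness for the essentially-$\V$ condition. I define the equivalence $u \approx v$ on $\Sigma^*$ to mean $\varphi(xuy) = \varphi(xvy)$ for every $x, y \in \Sigma$; since each class is determined by its ``context function'' $(x,y) \mapsto \varphi(xuy)$, the quotient $N := \Sigma^*/\!\approx$ is a finite monoid, and the quotient stamp $\mu\colon \Sigma^* \to N$ will be the candidate witness. Observe that because $s = 1$ for stable $\varphi$, the implication $\mu(u) = \mu(v) \Rightarrow \varphi(xuy) = \varphi(xvy)$ for all $x, y \in \Sigma^s$ is immediate from the definition of $\approx$; the only nontrivial task is therefore to prove $N \in \V$.

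For $\approx$ to define a quotient monoid, I need it to be a congruence, and this is the first place stability of $\varphi$ enters. Given $u \approx v$ and $w \in \Sigma^*$, I want $\varphi(xwuy) = \varphi(xwvy)$ for all single letters $x, y \in \Sigma$; since $xw$ has length $\geq 1$, stability gives some $z \in \Sigma$ with $\varphi(z) = \varphi(xw)$, whence $\varphi(xwuy) = \varphi(zuy) = \varphi(zvy) = \varphi(xwvy)$, and similarly on the right. The same trick, applied to $ab \in \Sigma^2$, yields that $\mu$ itself is stable, so $S := \mu(\Sigma^+) = \mu(\Sigma)$ and $S^1 = N$.

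It then remains to show $\WProb{S} \subseteq \Prog{\V}$, after which the \spv-hypothesis gives $N = S^1 \in \V$. I would do this in two steps. First, $\mu^{-1}(F) \in \Prog{\V}$ for every $F \subseteq N$: the identification of each class $[u]_\approx$ with its context function decomposes $\mu^{-1}(F)$ as a finite Boolean combination of sets of the form $\{u \in \Sigma^* : \varphi(xuy) = m\} = x^{-1}\varphi^{-1}(m)y^{-1}$, each of which lies in $\Prog{\V}$ by the hypothesis $\WProb{\varphi} \subseteq \Prog{\V}$ and closure under quotients and Boolean operations (Lemma~\ref{lemma-simple-closure-P}). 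Second, stability of $\mu$ lets me pick a representative letter $a_s \in \Sigma$ with $\mu(a_s) = s$ for each $s \in S$ and define the \lmm-morphism $\rho\colon S^* \to \Sigma^*$ by $s \mapsto a_s$; unpacking definitions gives $\eta_S^{-1}(F) = \rho^{-1}(\mu^{-1}(F))$, and closure of $\Prog{\V}$ under inverse images of \lmm-morphisms concludes.

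The step that really requires care is the congruence argument for $\approx$: without stability of $\varphi$, the word $xw$ sitting on the left of $u$ is not a single context letter, and one cannot directly invoke the definition of $u \approx v$. Stability is precisely what enables this reduction; once $\approx$ is a congruence and $\mu$ is stable, everything else unfolds from the standard closure properties of $\Prog{\V}$ recorded in Lemma~\ref{lemma-simple-closure-P}.
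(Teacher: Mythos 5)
Your proof is correct but takes a genuinely different, more elaborate route than the paper's. The paper's argument is short: set $S = \varphi(\Sigma)$, which equals $\varphi(\Sigma^+)$ by stability of $\varphi$; transport $\WProb{\varphi}\subseteq\Prog{\V}$ to $\WProb{S}\subseteq\Prog{\V}$ via the single \lmm-morphism $S^*\to\Sigma^*$ sending each $s$ to a letter $\rho(s)$ with $\varphi(\rho(s))=s$; invoke the \spv-hypothesis to get $S^1\in\V$; and observe that $S^1=M$ (again by stability and surjectivity of $\varphi$), so that $\varphi$ itself serves as the witness $\mu$ in the definition of essentially-$\V$. You instead manufacture the coarsest conceivable witness from scratch: the quotient stamp $\mu\colon\Sigma^*\to N:=\Sigma^*/\!\approx$, whose congruence property and stability you justify through the stability of $\varphi$. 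You then have to certify that $\WProb{\mu(\Sigma)}\subseteq\Prog{\V}$, and this needs two moves rather than one: first expressing each $\mu^{-1}(m)$ as a finite Boolean combination of two-sided quotients $x^{-1}\varphi^{-1}(\cdot)y^{-1}$ (closure of $\Prog{\V}$ under quotients and Boolean operations), then an \lmm-morphism $S^*\to\Sigma^*$ on top of that. This extra machinery buys you something the paper does not spell out: you produce directly the canonical minimal witness $N$, rather than the possibly larger monoid $M$. Your identification $S^1=N$ is sound — the key point, which you elide slightly, is that if $S=\mu(\Sigma)$ had an identity $e=[a_0]$ distinct from $[\varepsilon]$, the relations $a_0 a\approx a$ and $a a_0\approx a$ for all letters $a$, combined with $\varphi(\Sigma)=\varphi(\Sigma)^2$, force $\varphi(x a_0 y)=\varphi(xy)$ for all letters $x,y$, so that $e=[\varepsilon]$ after all. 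Both arguments apply the \spv-hypothesis to a single-letter-generated semigroup and use the same closure properties from Lemma~\ref{lemma-simple-closure-P}; the paper just reaches the conclusion in fewer steps by noticing that stability already makes $\varphi$ its own witness.
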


\begin{proof}
    Let $\V$ be an \spv-variety of monoids.

    Let $\varphi\colon \Sigma^* \to M$ be a stable stamp such that
    $\WProb{\varphi} \subseteq \Prog{\V}$.

    Let $S = \varphi(\Sigma^+)$: as $\varphi$ is stable, we have
    $S = \varphi(\Sigma)$. Let $\rho\colon S \to \Sigma$ be an arbitrary mapping
    from $S$ to $\Sigma$ such that $\varphi(\rho(s)) = s$. Consider
    $\eta_S\colon S^* \to S^1$ the evaluation morphism of $S$: the unique
    morphism $f\colon S^* \to \Sigma^*$ sending each letter $s \in S$ to
    $\rho(s)$ verifies that $\eta_S = \varphi \circ f$.
    Now, given any $F \subseteq S^1$, we have
    $\eta_S^{-1}(F) = f^{-1}(\varphi^{-1}(F))$, but since
    $\varphi^{-1}(F) \in \Prog{\V}$ and as $f$ is an \lmm-morphism because it
    sends each letter of $S$ to a letter of $\Sigma$, it follows that
    $\eta_S^{-1}(F) \in \Prog{\V}$ by closure of $\Prog{\V}$ under inverses of
    \lmm-morphisms, Lemma~\ref{lemma-simple-closure-P}.
    Therefore, $\WProb{S} \subseteq \Prog{\V}$.

    Since $\V$ is an \spv-variety of monoids, this entails that
    $M=S^1$ belongs to $\V$, and therefore $\varphi \in \StVEsntl\V$.
    As this is true for any stable stamp $\varphi$ such that
    $\WProb{\varphi} \subseteq \Prog{\V}$, we can conclude that $\V$ is tame.
\end{proof}

The notion of essentially-$\V$ stamps can be adapted to varieties of semigroups
in a straightforward way. We can then define a notion of tameness for varieties
of semigroups accordingly. The exact same proof as the one above then goes
through to allow us to show that \pv-varieties of the form $\V \FSVsdprod \FSVD$
are tame.

However there exist varieties of monoids that are tame but not \spv-varieties. We give an example of
such a variety in Subsection~\ref{section-tame-sp}.

Programs over monoids taken from tame varieties of monoids have the expected
behavior as we show next.

Let $\varphi\colon \Sigma^* \to M$ be a stamp of stability index $s$. The
\emph{stable stamp of $\varphi$} is the unique stamp
$\varphi'\colon (\Sigma^s)^* \to M'$ such that $\varphi'(u) = \varphi(u)$ for
all $u \in \Sigma^s$ and $M'$ is the stable monoid of $\varphi$. For any variety
of monoids $\V$ we let $\StVQuasi\StVEsntl\V$ be the class of stamps whose
stable stamp is essentially-$\V$ and, accordingly, we define
$\DLang{\StVQuasi\StVEsntl\V}$ as the class of languages whose syntactic
morphism is in $\StVQuasi\StVEsntl\V$.

\begin{proposition}\label{prop-tame-quasi-essentially}
    A variety of monoids $\V$ is tame if and only if
    $\Prog{\V} \cap \Reg \subseteq \DLang{\StVQuasi\StVEsntl\V}$.
\end{proposition}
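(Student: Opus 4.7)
For the forward direction, assume $\V$ is tame and take $L \in \Prog{\V} \cap \Reg$ with syntactic morphism $\eta_L\colon \Sigma^* \to M_L$ of stability index $s$ and stable stamp $\eta_L'\colon (\Sigma^s)^* \to M_L'$. Since $\eta_L'$ is a stable stamp, tameness reduces the task to proving $\WProb{\eta_L'} \subseteq \Prog{\V}$. Fix $F \subseteq M_L'$: each $\eta_L^{-1}(m)$ with $m \in M_L$ is a Boolean combination of quotients of $L$ (by the standard syntactic argument already used in the proof of Proposition~\ref{prop-spv-quasi}), hence lies in $\Prog{\V}$ by Lemma~\ref{lemma-simple-closure-P}, and so does $\eta_L^{-1}(F)$. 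Taking the \lmm-morphism $\alpha\colon (\Sigma^s)^* \to \Sigma^*$ that unfolds length-$s$ blocks, the identity $(\eta_L')^{-1}(F) = \alpha^{-1}(\eta_L^{-1}(F))$ together with closure of $\Prog{\V}$ under inverse images of \lmm-morphisms gives $(\eta_L')^{-1}(F) \in \Prog{\V}$; tameness then yields $\eta_L' \in \StVEsntl\V$, i.e.\ $L \in \DLang{\StVQuasi\StVEsntl\V}$.

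For the backward direction, assume the inclusion and let $\varphi\colon \Sigma^* \to M$ be a stable stamp with $\WProb{\varphi} \subseteq \Prog{\V}$; the goal is $\varphi \in \StVEsntl\V$. For each $(x, y, m) \in \Sigma \times \Sigma \times M$, consider $L_{x,y,m} = x^{-1}\varphi^{-1}(m)y^{-1}$: it is regular and in $\Prog{\V}$ as a quotient of $\varphi^{-1}(m)$ (Lemma~\ref{lemma-simple-closure-P}). Its syntactic morphism $\eta_{x,y,m}$ factors through the stable $\varphi$ and is therefore itself stable, so its stable stamp coincides with itself. The hypothesis thus supplies a stamp $\mu_{x,y,m}\colon \Sigma^* \to N_{x,y,m}$ with $N_{x,y,m} \in \V$ such that $\mu_{x,y,m}(u) = \mu_{x,y,m}(v)$ implies that $\varphi(xauby) = m$ iff $\varphi(xavby) = m$, for all $a, b \in \Sigma$.

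The last step is to assemble these witnesses into a single $\nu\colon \Sigma^* \to \prod_{(x,y,m)} N_{x,y,m}$, whose image lies in $\V$ by closure under Cartesian product and submonoids. Whenever $\nu(u) = \nu(v)$, ranging over $m \in M$ and all $x, a, b, y \in \Sigma$ gives $\varphi(xauby) = \varphi(xavby)$; writing $X = xa$ and $Y = by$, this is $\varphi(XuY) = \varphi(XvY)$ for every $X, Y \in \Sigma^2$. Finally, stability of $\varphi$ (meaning $\varphi(\Sigma) = \varphi(\Sigma^2)$) lets me, for any $c, d \in \Sigma$, pick $X, Y \in \Sigma^2$ with $\varphi(c) = \varphi(X)$ and $\varphi(d) = \varphi(Y)$, whence $\varphi(cud) = \varphi(XuY) = \varphi(XvY) = \varphi(cvd)$; since $\varphi$ has stability index $1$, $\nu$ is exactly an essentially-$\V$ witness for $\varphi$. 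The main obstacle is precisely this assembly: the per-quotient hypotheses only control $\varphi$-values under length-$2$ sandwiches, while $\StVEsntl\V$ for a stable $\varphi$ requires control under single-letter sandwiches, and stability of $\varphi$ is what bridges the gap.
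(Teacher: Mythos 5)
Your proof is correct. The forward direction is essentially identical to the paper's: quotient trick for $\eta_L^{-1}(m)$, closure under Boolean operations and inverse \lmm-morphisms, then tameness applied to the stable stamp $\eta_L'$.

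For the backward direction you take a slightly different route. The paper works with the preimages $\varphi^{-1}(m)$ and their syntactic morphisms $\eta_m$: after showing each $\eta_m$ is stable (since it factors through $\varphi$), the essentially-$\V$ witness $\mu_m$ gives $\eta_m(xuy) = \eta_m(xvy)$ for all $x,y \in \Sigma$, and then applying this at $m = \varphi(xuy)$ together with the fact that $\eta_m$ is a \emph{syntactic} morphism directly yields $\varphi(xuy) = \varphi(xvy)$ — so the $\Sigma$-sandwich condition falls out immediately. You instead index by $(x,y,m)$ over the two-sided quotients $L_{x,y,m} = x^{-1}\varphi^{-1}(m)y^{-1}$; you still factor through $\varphi$ to get stability of $\eta_{x,y,m}$, but the per-language witness now only controls $\varphi$ under the $\Sigma^2$-sandwiches $xa\cdot u\cdot by$, and you close the gap with a separate step exploiting $\varphi(\Sigma) = \varphi(\Sigma^2)$. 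Both arguments are valid; yours is a slightly longer decomposition (a larger index set, plus the stability reduction at the end), while the paper gets the needed sandwich condition for free by using the strength of the syntactic congruence of $\varphi^{-1}(m)$ rather than only its membership predicate.
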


\begin{proof}
    Let $\V$ be a variety of monoids.

    \paragraph{Left-to-right implication.}
    Assume first that $\V$ is tame. For this direction, the proof follows the
    same lines as those of Proposition~\ref{prop-spv-quasi}.

    Let $L \in \Prog{\V} \cap \Reg$ over some alphabet $\Sigma$ and let
    $\eta\colon \Sigma^* \to M$ be the syntactic morphism of $L$.
    For any $m \in M$, a classical algebraic
    argument~\cite[Chapter 2, proof of Lemma 2.6]{Books/Pin-1986} shows that
    $\eta^{-1}(m)$ is a Boolean combination of quotients of $L$, so
    $\eta^{-1}(m) \in \Prog{\V}$ by Lemma~\ref{lemma-simple-closure-P}.

    Now let $s$ be the stability index of $\eta$, let $M'$ be its stable monoid
    and take $\eta'\colon (\Sigma^s)^* \to M'$ to be the stable stamp of $\eta$.
    The unique morphism $f\colon (\Sigma^s)^* \to \Sigma^*$ such that $f(u) = u$
    for all $u \in \Sigma^s$ is an \lmm-morphism and verifies that
    $\eta' = \eta \circ f$. Hence, for all $m' \in M'$, we have that
    $\eta'^{-1}(m') = f^{-1}(\eta^{-1}(m'))$, so that
    $\eta'^{-1}(m') \in \Prog{\V}$ by closure of $\Prog{\V}$ under inverses of
    \lmm-morphisms, Lemma~\ref{lemma-simple-closure-P}.
    Thus, since inverses of monoid morphisms commute with union and $\Prog{\V}$
    is closed under unions (Lemma~\ref{lemma-simple-closure-P}), we can conclude
    that $\eta'^{-1}(F) \in \Prog{\V}$ for all $F \subseteq M'$, i.e.\
    $\WProb{\eta'} \subseteq \Prog{\V}$.

    But as $\eta'$ is stable, by tameness of $\V$, this entails that
    $\eta' \in \StVEsntl\V$, so that $L \in \DLang{\StVQuasi\StVEsntl\V}$.

    \paragraph{Right-to-left implication.}
    Assume now that
    $\Prog{\V} \cap \Reg \subseteq \DLang{\StVQuasi\StVEsntl\V}$.
    Let $\varphi\colon \Sigma^* \to M$ be a stable stamp verifying
    $\WProb{\varphi} \subseteq \Prog{\V}$.

    For any $m \in M$, we therefore have $\varphi^{-1}(m) \in \DLang{\StVQuasi\StVEsntl\V}$. Let
    $\eta_m\colon \Sigma^* \to M_m$ be the syntactic morphism of the language
    $\varphi^{-1}(m)$, we thus have $\eta_m \in \StVQuasi\StVEsntl\V$. We first claim that $\eta_m$ is a stable stamp. To see
    this notice first that for all $u,v\in\Sigma^*$ we have $\varphi(u) =
    \varphi(v) \Rightarrow \eta_m(u) = \eta_m(v)$. Indeed assume that $\varphi(u) =
    \varphi(v)$, then for all  $x, y \in \Sigma^*$ we have $\varphi(x u y) =
    \varphi(x v y)$ hence we have  $xuy \in  \varphi^{-1}(m)$ iff $xvy \in
    \varphi^{-1}(m)$ which entails $\eta_m(u) = \eta_m(v)$ by definition of the
    syntactic morphism. It follows that $\eta_m(\Sigma^2) = \eta_m(\Sigma)$ as $\varphi(\Sigma^2) = \varphi(\Sigma)$.

    Since $\eta_m$ is equal to its stable stamp and $\eta_m \in
    \StVQuasi\StVEsntl\V$, it follows that $\eta_m \in \StVEsntl\V$. Therefore there exists a stamp
    $\mu_m\colon \Sigma^* \to N_m$ with $N_m \in \V$ such that for all
    $u, v \in \Sigma^*$, we have
    \[
	\mu_m(u) = \mu_m(v) \Rightarrow
	\bigl(\eta_m(x u y) = \eta_m(x v y) \quad \forall x, y \in \Sigma\bigr)
	\displaypunct{.}
    \]

    Now, we define the unique stamp $\mu\colon \Sigma^* \to N$ such that
    $\mu(a) = \prod_{m \in M} \mu_m(a)$ for all $a \in \Sigma$ and $N$ is the
    submonoid of $\prod_{m \in M} N_m$ generated by the set
    $\set{\prod_{m \in M} \mu_m(a) \mid a \in \Sigma}$.
    As $\V$ is a variety, $N \in \V$.
    Take $u, v \in \Sigma^*$ and assume $\mu(u) = \mu(v)$: this means that
    $\mu_m(u) = \mu_m(v)$ for all $m \in M$.
    Let $x, y \in \Sigma$. We then have in particular
    $\mu_{\varphi(x u y)}(u) = \mu_{\varphi(x u y)}(v)$. This implies by
    definition of $\mu_{\varphi(x u y)}$ that
    $\eta_{\varphi(x u y)}(x u y) = \eta_{\varphi(x u y)}(x v y)$. As
    $\eta_{\varphi(x u y)}$ is the syntactic morphism of
    $\varphi^{-1}(\varphi(x u y))$, it follows that
    $\varphi(x v y) = \varphi(x u y)$. And this is true for any
    $x, y \in \Sigma$.

    In conclusion, $\mu$ witnesses the fact that $\varphi$ is essentially-$\V$.
\end{proof}

As for the case of \spv-varieties of monoids, we don't know whether it is
always true that for a tame non-trivial variety of monoids $\V$,
$\DLang{\StVQuasi\StVEsntl\V}$ is included in $\Prog{\V}$. If this were the case
then for tame non-trivial varieties of monoids $\V$ we would have $\Prog{\V} \cap \Reg = \DLang{\StVQuasi\StVEsntl\V}$.
We conjecture
this to be at least true for varieties of monoids that are local.

\begin{conjecture}
\label{cjt:Locality_tameness}
    Let $\V$ be a local tame variety of monoids.
    Then $\Prog{\V} \cap \Reg = \DLang{\StVQuasi\StVEsntl\V}$.
\end{conjecture}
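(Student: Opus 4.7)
The forward inclusion $\Prog{\V} \cap \Reg \subseteq \DLang{\StVQuasi\StVEsntl\V}$ is already granted by tameness via Proposition~\ref{prop-tame-quasi-essentially} and does not use locality. So the plan is to establish the missing inclusion $\DLang{\StVQuasi\StVEsntl\V} \subseteq \Prog{\V}$, where locality is the new ingredient. I would mimic the architecture of the known proof of $\DLang{\StVQuasi\V} \subseteq \Prog{\V}$ for local $\V$ (which factors through $\DLang{\V \FSVsdprod \Mod}$ via~\cite{Dartois-Paperman-2014}), but lift it so that it accommodates the essentially-$\V$ behavior of the stable stamp rather than stamps genuinely into $\V$.

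Concretely, I would fix $L \in \DLang{\StVQuasi\StVEsntl\V}$ with syntactic morphism $\eta\colon\Sigma^*\to M$ of stability index $s$ and stable stamp $\eta'\colon(\Sigma^s)^*\to M'$ lying in $\StVEsntl\V$, witnessed by a stamp $\mu\colon(\Sigma^s)^*\to N$ with $N\in\V$. Since $\Prog{\V}$ is closed under Boolean operations (Lemma~\ref{lemma-simple-closure-P}), it suffices to handle each $\eta^{-1}(m)$, and then each residue class of inputs modulo $s$ separately. For an input $w$ of length $n \equiv r \pmod s$, I would decompose $w = y\, u_1\, u_2\cdots u_t$ with $|y|=r$ and each $u_j\in\Sigma^s$. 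Because $\eta'$ is stable, the essentially-$\V$ property tells us that $\eta'(u_1 u_2\cdots u_t)$ is, for $t\geq 2$, fully determined by the triple $(u_1, u_t, \mu(u_2\cdots u_{t-1}))$, with only boundedly many small-$t$ cases to absorb into the final Boolean combination. Thus $\eta^{-1}(m)$ decomposes as a finite Boolean combination of (i) languages fixing a constant-length prefix and suffix of $w$, which lie in $\Prog{\V}$ by the same mechanism as in the proof of Proposition~\ref{ptn:Closure_P(V)_essentially-V}, and (ii) languages $B_{n'}$ asserting that $\mu$ evaluated on the middle blocks $u_2\cdots u_{t-1}$ of $w$ returns a prescribed element $n' \in N$.

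The crux is therefore to show $B_{n'}\in\Prog{\V}$, and this is exactly where locality must intervene. The plan is to prove the stronger statement $B_{n'}\in\DLang{\V\FSVsdprod\Mod}$, which is already contained in $\Prog{\V}$, by exhibiting a morphism $\bar\mu\colon(\Sigma\times\set{0,\ldots,s-1})^*\to\bar N$ with $\bar N\in\V$ that, on every properly tagged word, computes $\mu$ of the corresponding sequence of $s$-blocks; then $B_{n'}$ would be recovered from $\bar\mu^{-1}(n')$ by the standard position-mod-$s$ tagging, intersected with a length-mod-$s$ language. The hard part will be producing such a $\bar\mu$ whose codomain genuinely lies in $\V$: the natural transition-monoid construction that keeps track of the current block offset lands in an object of the shape $N\wr\Mod_s$, which need not belong to $\V$. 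Leveraging the locality of $\V$ to push this construction back into $\V$, for instance through the $\Local\V$-to-$\V\FSVsdprod\Mod$ correspondence underlying~\cite{Dartois-Paperman-2014}, is precisely the algebraic obstacle that has so far kept the statement at the level of a conjecture.
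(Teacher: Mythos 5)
This statement is an open conjecture in the paper (Conjecture~\ref{cjt:Locality_tameness}); there is no proof of it in the paper to compare against, so the task as posed cannot apply. Your proposal does not actually close the argument either, and you are candid about that: the final paragraph identifies the exact point at which the construction stalls. That is the right assessment. The missing step --- producing a morphism $\bar\mu$ computing $\mu$ on the $s$-blocks whose codomain genuinely lies in $\V$ (equivalently, showing that the ``middle-block'' languages $B_{n'}$ land in $\DLang{\V\FSVsdprod\Mod}$) --- is not a routine consequence of locality as currently understood, and it is precisely why the statement remains conjectural.

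It is worth noting that your route and the route the paper sketches in its conclusion converge on the same obstacle from slightly different directions. You decompose an input into a constant-length prefix/suffix plus a middle built from $s$-blocks, reduce to $B_{n'}$, and then need to push the block-tracking transition monoid (which naturally sits in something like $N \wr \Mod_s$) back into $\V$. The paper instead proposes, for local $\V$, to establish the identity (abusing notation) $\StVQuasi\StVEsntl\V = \StVEsntl\V\FSVsdprod\Mod$, by analogy with the known fact $\DLang{\StVQuasi\V} = \DLang{\V\FSVsdprod\Mod}$ for local $\V$ from~\cite{Dartois-Paperman-2014}; since $\DLang{\StVEsntl\V\FSVsdprod\Mod}\subseteq\Prog{\V}$ holds unconditionally when $\V$ is local (hence non-trivial), this would immediately yield $\DLang{\StVQuasi\StVEsntl\V}\subseteq\Prog{\V}$. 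Your approach essentially tries to prove the $\subseteq$ direction of that proposed identity by hand, language by language, and the step that fails is the very step that would be the content of that identity. So the analyses agree: the difficulty is genuinely algebraic, not merely a bookkeeping issue, and proving the conjecture plausibly requires extending the Dartois--Paperman-style locality machinery from stamps into $\V$ to essentially-$\V$ stamps. Your write-up would be clearer if it said explicitly up front that what follows is a reduction of the conjecture to a precise open algebraic lemma, rather than a proof attempt; as written, the framing ``I would establish the missing inclusion'' momentarily suggests more than is delivered.
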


We conclude this subsection by showing that $\FMVJ$, which is not an
\spv-variety of monoids (Proposition~\ref{prop-J-sp-variety}), isn't tame
either.

\begin{proposition}
    $\FMVJ$ is not tame.
\end{proposition}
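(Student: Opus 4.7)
The plan is to apply the contrapositive of Proposition~\ref{prop-tame-quasi-essentially}: I will show that the regular language $L = (a + b)^* a c^+$, already known from the introduction to lie in $\Prog{\FMVJ}$, is not in $\DLang{\StVQuasi\StVEsntl\FMVJ}$. Equivalently, the stable stamp $\eta'$ of the syntactic morphism $\eta$ of $L$ must fail to be essentially-$\FMVJ$.

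First I would compute the syntactic monoid of $L$ over $\Sigma = \set{a, b, c}$. I expect six elements $M = \set{1, \alpha, \beta, c, \alpha c, 0}$, where $\alpha = \eta(a)$, $\beta = \eta(b)$, $c = \eta(c)$ are all idempotent, $\alpha$ and $\beta$ form a right-trivial pair with $\alpha \beta = \beta$ and $\beta \alpha = \alpha$, the element $\alpha c$ is distinct from $0$, and the pivotal annihilations $\beta c = c \alpha = c \beta = (\alpha c)^2 = 0$ hold. Inspecting $\eta(\Sigma^2)$ then shows that the stability index of $\eta$ is $2$ and that its stable monoid equals all of $M$, so $\eta' \colon (\Sigma^2)^* \to M$ is a surjective stable stamp onto $M$.

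The main step is to derive a contradiction from the hypothesis $\eta' \in \StVEsntl\FMVJ$, which provides $\mu \colon (\Sigma^2)^* \to N$ with $N \in \FMVJ$ such that $\mu(u) = \mu(v)$ implies $\eta'(x u y) = \eta'(x v y)$ for all letters $x, y \in \Sigma^2$. I would then instantiate the defining $\FMVJ$-identity $(p q)^\omega = (p q)^\omega p$ inside $N$ with the $\Sigma^2$-letters $p = ba$ and $q = ab$, obtaining $\mu(w_1) = \mu(w_2)$ for $w_1 = (ba \cdot ab)^\omega$ and $w_2 = w_1 \cdot ba$. Choosing the sandwich letters $x = aa$ and $y = cc$, a direct calculation in $M$ then yields $\eta'(x w_1 y) = \alpha \cdot \beta \cdot c = 0$ and $\eta'(x w_2 y) = \alpha \cdot \alpha \cdot c = \alpha c$, contradicting the essentially-$\FMVJ$ hypothesis.

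The main obstacle is guessing the right instance of the $\FMVJ$-identity together with the right sandwich, so that a collapse forced in $N$ turns into a visible inequality in $M$. The guiding intuition is that $\alpha$ and $\beta$ are $\mathcal{J}$-equivalent in $M$ (indeed $\alpha \beta \alpha = \alpha$ and $\beta \alpha \beta = \beta$), so any $\FMVJ$-morphism must identify them in suitable idempotent contexts; meanwhile $c$ distinguishes $\alpha$ from $\beta$ on the right because $\alpha c \neq 0 = \beta c$, and the $cc$-suffix is exactly what surfaces this distinction through the stable stamp.
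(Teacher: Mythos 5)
Your proof is correct and follows essentially the same route as the paper: reduce to showing that $(a + b)^* a c^+$ fails to lie in $\DLang{\StVQuasi\StVEsntl\FMVJ}$, pass to the stable stamp $\eta'$ over the alphabet $\Sigma^2$, and derive a contradiction from a single instance of the $\FMVJ$-identity $(pq)^\omega = (pq)^\omega p$ pushed through the defining implication of $\StVEsntl\FMVJ$ with the sandwich $x = aa$, $y = cc$. The only cosmetic differences are that the paper extracts a general pumping-style lemma for essentially-$\FMVJ$ stamps before instantiating it and uses $u = bb$, $v = aa$ where you use $p = ba$, $q = ab$, but both instantiations yield the same kind of distinguishing pair of words over $\Sigma^2$.
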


\begin{proof}
    To show this, we show that $(a + b)^* a c^+$, which belongs to
    $\Prog{\FMVJ}$ by the construction of the introduction, does not belong to
    $\DLang{\StVQuasi\StVEsntl\FMVJ}$.

    We first claim that any essentially-$\FMVJ$ stamp
    $\varphi\colon \Sigma^* \to M$ of stability index $s$ verifies that there
    exists some $k \in \N_{>0}$ such that
    $\varphi(x (u v)^k y) = \varphi(x (u v)^k u y)$ for all $u, v \in \Sigma^*$
    and $x, y \in \Sigma^s$.
    Indeed, by definition there exists a stamp $\mu\colon \Sigma^* \to N$ with
    $N \in \FMVJ$ such that for all $u, v \in \Sigma^*$, we have
    \[
	\mu(u) = \mu(v) \Rightarrow
	\bigl(\varphi(x u y) = \varphi(x v y) \quad
	      \forall x, y \in \Sigma^s\bigr)
	\displaypunct{.}
    \]
    If we set $\omega$ to be the idempotent power of $N$, we have that for all
    $u, v \in \Sigma^*$,
    \[
	\mu\bigl((u v)^\omega\bigr) =
	\bigl(\mu(u) \mu(v)\bigr)^\omega =
	\bigl(\mu(u) \mu(v)\bigr)^\omega \mu(u) =
	\mu\bigl((u v)^\omega u\bigr)
    \]
    by the identities for $\FMVJ$.
    Hence, we have that
    $\varphi(x (u v)^\omega y) = \varphi(x (u v)^\omega u y)$ for all
    $u, v \in \Sigma^*$ and $x, y \in \Sigma^s$.

    Let us now consider the syntactic morphism
    $\eta\colon \set{a, b, c}^* \to M$ of the language $(a + b)^* a c^+$. As
    already mentioned for Proposition~\ref{prop-J-sp-variety}, the stable monoid
    of $\eta$ is equal to the syntactic monoid $M$. Moreover, the stability
    index of $\eta$ is $2$. Therefore, the stable stamp of $\eta$ is the unique
    stamp $\eta'\colon (\set{a, b, c}^2)^* \to M$ such that $\eta'(u) = \eta(u)$
    for all $u \in \set{a, b, c}^2$. By what we have shown just above, since the
    stability index of $\eta'$ is $1$, if $\eta'$ were essentially-$\FMVJ$,
    there should exist some $k \in \N_{>0}$ such that
    $\eta'(x (u v)^k y) = \eta'(x (u v)^k u y)$ for all
    $u, v \in (\set{a, b, c}^2)^*$ and $x, y \in \set{a, b, c}^2$.
    However, for all $k \in \N_{>0}$, we do have that
    $(aa) \bigl((bb) (aa)\bigr)^k (cc) \in (a + b)^* a c^+$ while
    $(aa) \bigl((bb) (aa)\bigr)^k (bb) (cc) \notin (a + b)^* a c^+$, which
    implies that
    \[
	\eta'\bigl((aa) \bigl((bb) (aa)\bigr)^k (cc)\bigr) \neq
	\eta'\bigl((aa) \bigl((bb) (aa)\bigr)^k (bb) (cc)\bigr)
    \]
    for all $k \in \N_{>0}$.
    Therefore, it follows that the stable stamp $\eta'$ of $\eta$ is not
    essentially-$\FMVJ$, so we can conclude that
    $(a + b)^* a c^+ \notin \DLang{\StVQuasi\StVEsntl\FMVJ}$.
\end{proof}

\subsection{The example of finite commutative monoids}\label{section-tame-sp}

The variety $\FMVCom$ of finite commutative monoids is defined by the identity
$x y = y x$ and $\DLang{\FMVCom}$ is the class of languages that are Boolean
combinations of languages of the form
$\set{w \in \Sigma^* \mid \length{w}_a \equiv k \mod p}$ for
$k \in \intinterval{0}{p - 1}$ and $p$ prime or
$\set{w \in \Sigma^* \mid \length{w}_a = k}$ for $k \in \N$ with $\Sigma$ any
alphabet and $a \in \Sigma$
(see~\cite[Chapter~VIII, Example~3.5]{Books/Eilenberg-1976}).

Since the syntactic monoid of the language $a (a + b)^*$ is not commutative, by
the discussion at the end of Subsection~\ref{sse:sp-varieties}, we know that
$\FMVCom$ is not an \spv-variety of monoids. It is, however, tame, as we are
going to prove now.

We first give a sufficient equational characterization for any stable stamp
$\varphi$ to be essentially-$\FMVCom$.

\begin{lemma}
    Let $\varphi\colon \Sigma^* \to M$ be a stable stamp verifying that for any
    $x, y, e, f \in \Sigma$ such that $\varphi(e)$ and $\varphi(f)$ are
    idempotents, we have
    \[
	\varphi(e x y f) = \varphi(e y x f)
	\displaypunct{.}
    \]
    Then, $\varphi \in \StVEsntl\FMVCom$.
\end{lemma}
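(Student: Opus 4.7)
The plan is to exhibit a morphism $\mu\colon \Sigma^* \to N$ into a finite commutative monoid $N$ such that $\mu(u) = \mu(v)$ implies $\varphi(xuy) = \varphi(xvy)$ for all $x, y \in \Sigma$. The key algebraic observation is that, since $\varphi$ is stable, $S := \varphi(\Sigma)$ satisfies $S = S^2$ and therefore $S = S^n$ for every $n \geq 1$, so $S$ coincides with its own minimum ideal. It is therefore a finite simple semigroup, and since every finite simple semigroup is completely simple, $S$ is a union of maximal subgroups. In particular every $s \in S$ belongs to a subgroup of $S$ whose identity is the idempotent $s^\omega$, which yields the crucial identities $s \cdot s^\omega = s^\omega \cdot s = s$.

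The first step is to upgrade the hypothesis from idempotent padding to arbitrary $S$-padding. Since every element of $S$ is the $\varphi$-image of some letter, the hypothesis translates in $M$ to: for all idempotents $e, f \in S$ and all $s_1, s_2 \in S$, $e s_1 s_2 f = e s_2 s_1 f$. Applying this with $e = m_1^\omega$ and $f = m_2^\omega$, then absorbing $m_1^\omega$ into $m_1$ and $m_2^\omega$ into $m_2$ via the group identities above, yields
\[
    m_1 s_1 s_2 m_2
    = m_1 (m_1^\omega s_1 s_2 m_2^\omega) m_2
    = m_1 (m_1^\omega s_2 s_1 m_2^\omega) m_2
    = m_1 s_2 s_1 m_2
\]
for all $m_1, m_2, s_1, s_2 \in S$. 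A bubble-sort argument then generalises this to $m_1 s_1 \cdots s_k m_2 = m_1 s_{\sigma(1)} \cdots s_{\sigma(k)} m_2$ for any permutation $\sigma$, each adjacent transposition being justified by absorbing the surrounding $s_i$'s into the $m_j$'s.

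For the second step, I construct $\mu$. From the above, $u, v \in \Sigma^*$ with the same Parikh vector satisfy $\varphi(xuy) = \varphi(xvy)$ for every $x, y \in \Sigma$, because $\varphi(x), \varphi(y) \in S$. I define on $\N^\Sigma$ the relation $\mathbf{p} \equiv \mathbf{p}'$ to hold iff $\varphi(xuy) = \varphi(xvy)$ for every $x, y \in \Sigma$ and every (equivalently, some, by permutation-invariance) $u, v$ with Parikh vectors $\mathbf{p}$ and $\mathbf{p}'$. To verify that $\equiv$ is a monoid congruence on $\N^\Sigma$, the key point is that, by stability, for any word $w$ and letter $y$ the element $\varphi(wy) \in S$ equals $\varphi(y_0)$ for some letter $y_0$, which reduces checking that adding a common Parikh vector preserves $\equiv$ to the original relation. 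Setting $N := \N^\Sigma / {\equiv}$ and taking $\mu$ to be the Parikh morphism followed by the quotient gives a commutative $N$, finite because the quotient factors through $M^{\Sigma \times \Sigma}$, and $\mu$ satisfies the desired property by construction.

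The main obstacle is the structural fact that stability forces $S$ to be completely simple. Without this union-of-groups structure one cannot absorb $m_1^\omega$ into $m_1$, and hence cannot lift the hypothesis from idempotent-letter padding to arbitrary $S$-padding, which is exactly what is needed to make the Parikh-based quotient land in a commutative monoid.
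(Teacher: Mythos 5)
There is a genuine gap in your first step. You claim that stability forces $S = \varphi(\Sigma)$ to coincide with its own minimum ideal and hence to be a (completely simple) union of groups, giving $s\cdot s^\omega = s$ for all $s \in S$. This is false: $S = S^2$ does not imply that $S$ is simple or that every element lies in a subgroup. In fact every finite monoid $M$ satisfies $M = M^2$, so the evaluation stamp of any finite monoid is stable, yet most monoids have proper ideals and elements lying in no subgroup — e.g.\ for the aperiodic cyclic monoid $M = \{1, a, a^2\}$ with $a^3 = a^2$, one has $\{a^2\}$ as a proper ideal and $a \cdot a^\omega = a^2 \neq a$. (That particular $M$ happens to be commutative, so the lemma's hypothesis and conclusion hold trivially for it; the point is that your proof route, not the lemma, is what fails.) The absorption $m_1 (m_1^\omega s_1 s_2 m_2^\omega) m_2 = m_1 s_1 s_2 m_2$ therefore does not go through, and the identity $m_1 s_1 s_2 m_2 = m_1 s_2 s_1 m_2$ on $S$ is left unjustified.

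The correct consequence of stability — and the one the paper exploits — is weaker: in a finite semigroup $S$ with $S = S^2$, every element factors as $s = u e v$ with $e$ idempotent and $u, v \in S$ (\cite[Chapter~1, Proposition~1.12]{Books/Pin-1986}). The paper thus writes $\varphi(x) = \varphi(x_1 e x_2)$ and $\varphi(y) = \varphi(y_1 f y_2)$ with $\varphi(e), \varphi(f)$ idempotent, lifts the commutativity hypothesis from letters to arbitrary non-empty words via stability, and then performs a short chain of rewrites shuttling $x_2$, $y_1$, $u$, $v$ in and out between the two idempotent markers (duplicating $f$ once and swapping $u$ with $v$ between the two copies of $f$) to reach $\varphi(x u v y) = \varphi(x v u y)$. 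This is what replaces your absorption trick and yields exactly the intermediate identity you need. Your second step — defining the congruence on $\N^\Sigma$, obtaining finiteness through $M^{\Sigma \times \Sigma}$, and reading off the commutative witnessing stamp $\mu$ — is correct and is essentially the paper's construction of the congruence $\sim$ on $\Sigma^*$, transported along the Parikh map; once the first step is repaired along the lines above, the rest of the argument goes through.
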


\begin{proof}
    Let us define the equivalence relation $\sim$ on $\Sigma^*$ by $u \sim v$
    for $u, v \in \Sigma^*$ whenever $\varphi(x u y) = \varphi(x v y)$ for all
    $x, y \in \Sigma$.
    This equivalence relation is actually a congruence, because given
    $u, v \in \Sigma^*$ verifying $u \sim v$, for all $s, t \in \Sigma^*$ we
    have $s u t \sim s v t$ since for any $x, y \in \Sigma$, it holds that
    \[
	\varphi(x s u t y) = \varphi(x' u y') = \varphi(x' v y') =
	\varphi(x s v t y)
    \]
    where $x', y' \in \Sigma$ verify $\varphi(x s) = \varphi(x')$ and
    $\varphi(t y) = \varphi(y')$.

    We observe that, since the stability index of $\varphi$ is equal to $1$,
    $\varphi$ verifies that $\varphi(e u v f) = \varphi(e v u f)$ for all
    $u, v \in \Sigma^*$ and $e, f \in \Sigma$ such that $\varphi(e)$ and
    $\varphi(f)$ are idempotents.
    Now take $u, v \in \Sigma^*$ and $x, y \in \Sigma$.
    Since $\varphi(\Sigma)$ is a finite semigroup and verifies that
    $\varphi(\Sigma) = \varphi(\Sigma)^2$, by a classical result in finite
    semigroup theory (see
    e.g.~\cite[Chapter~1, Proposition~1.12]{Books/Pin-1986}), we have that there
    exist $x_1, e, x_2, y_1, f, y_2 \in \Sigma$ such that
    $\varphi(x_1 e x_2) = \varphi(x)$ and $\varphi(y_1 f y_2) = \varphi(y)$ with
    $\varphi(e)$ and $\varphi(f)$ idempotents.
    Therefore, it follows that
    \begin{align*}
	\varphi(x u v y)
	& = \varphi(x_1 e x_2 u v y_1 f y_2)\\
	& = \varphi(x_1 e u v y_1 x_2 f y_2)\\
	& = \varphi(x_1 e u v y_1 x_2 f f y_2)\\
	& = \varphi(x_1 e y_1 x_2 f u v f y_2)\\
	& = \varphi(x_1 e y_1 x_2 f v u f y_2)\\
	& = \varphi(x_1 e v u y_1 x_2 f f y_2)\\
	& = \varphi(x_1 e v u y_1 x_2 f y_2)\\
	& = \varphi(x_1 e x_2 v u y_1 f y_2)\\
	& = \varphi(x v u y)
	\displaypunct{.}
    \end{align*}
    Thus, we have that $u v \sim v u$ for all $u, v \in \Sigma^*$, implying that
    $\Sigma^* \quotient \sim \in \FMVCom$.
    We can eventually conclude that the stamp
    $\mu\colon \Sigma^* \to \Sigma^* \quotient \sim$ defined by
    $\mu(w) = [w]_\sim$ for all $w \in \Sigma^*$ witnesses, by construction, the
    fact that $\varphi$ is essentially-$\FMVCom$.
\end{proof}

The following lemma then asserts that any stable stamp $\varphi$ such that
$\WProb{\varphi} \subseteq \Prog{\FMVCom}$ actually verifies the equation of the
previous lemma, which allows us to conclude that $\FMVCom$ is tame by combining
those two lemmas.

\begin{lemma}\label{lemma-ECOM}
    Let $\varphi\colon \Sigma^* \to M$ be a stable stamp such that
    $\WProb{\varphi} \subseteq \Prog{\FMVCom}$.
    Then, for any $x, y, e, f \in \Sigma$ such that $\varphi(e)$ and
    $\varphi(f)$ are idempotents, we have
    \[
	\varphi(e x y f) = \varphi(e y x f)
	\displaypunct{.}
    \]
\end{lemma}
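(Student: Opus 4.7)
I would proceed by contradiction: suppose that for some idempotents $\varphi(e), \varphi(f)$ and letters $x, y \in \Sigma$ we have $p := \varphi(exyf) \neq \varphi(eyxf) =: q$. Put $L := \varphi^{-1}(p)$; by hypothesis $L \in \WProb{\varphi} \subseteq \Prog{\FMVCom}$. Idempotence of $\varphi(e)$ and $\varphi(f)$ yields $\varphi(e^a xy f^b) = p$ and $\varphi(e^a yx f^b) = q$ for every $a, b \geq 1$, so $e^a xy f^b \in L$ while $e^a yx f^b \notin L$.

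Fix $M \in \FMVCom$ with $L \in \Prog{M}$ and, for each $n$, a polynomial-length program $P_n$ over $M$ recognising $L \cap \Sigma^n$. Since $M$ is commutative, $P_n(w) = \prod_{i=1}^n h_i(w_i)$, so the output depends only on the ``commutative profile'' of $w$: grouping the positions of $\intinterval{1}{n}$ under the equivalence $i \sim j \iff h_i = h_j$ gives a partition into at most $K := |M|^{|\Sigma|}$ type classes, and $P_n(w)$ is determined by the multiset of letters occurring at positions within each class. Hence any two inputs of length $n$ with matching commutative profile that lie on opposite sides of $L$ would immediately contradict the program recognising $L \cap \Sigma^n$. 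Applied to the pair $(e^a xy f^{n-a-2}, e^a yx f^{n-a-2})$, this forces the type function $t$ of $P_n$ to satisfy $t(a+1) \neq t(a+2)$ for every $a \in \intinterval{1}{n-3}$.

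To close the argument I would, for sufficiently large $n$, exhibit an ``obstruction pair'' $(w, w') \in (L \cap \Sigma^n) \times (\Sigma^n \setminus L)$ whose commutative profiles coincide under every partition of $\intinterval{1}{n}$ into $K$ classes. The stability of $\varphi$, which forces $\varphi(x)^k, \varphi(y)^k$ to cycle within the finite subsemigroup $\varphi(\Sigma)$, together with the idempotence of $\varphi(e), \varphi(f)$, allow scaling the pattern $xy$ vs.\ $yx$ inside idempotent padding: suitable choices of exponents $i, j$ produce pairs $e^a x^i y^j f^b$ and $e^a y^j x^i f^b$ still lying on opposite sides of $L$, whose profile-matching requirement reduces to the equality of the type multiset on $\intinterval{a+1}{a+i}$ with that on $\intinterval{a+j+1}{a+j+i}$, and symmetrically for the $y$-blocks. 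The principal obstacle is the final combinatorial step: one must show that no $K$-coloring of $\intinterval{1}{n}$ can simultaneously satisfy all the resulting sliding-window multiset inequalities once $n$ exceeds a bound depending only on $K$. Phrased in terms of the cumulative count vector $S \colon \intinterval{0}{n} \to \N^K$, this amounts to proving by a pigeonhole/Ramsey-type argument that any unit-step lattice path of sufficient length in $\N^K$ must contain three equally spaced collinear points, i.e.\ two adjacent windows of equal type profile — yielding the required indistinguishable pair and the desired contradiction.
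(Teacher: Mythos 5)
Your approach diverges substantially from the paper's, and unfortunately it has a genuine gap that I do not think can be patched along the lines you sketch.

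The paper avoids the adjacent-swap difficulty head-on. Since for $e^{a}xyf^{b}$ versus $e^{a}yxf^{b}$ the two words differ only at two \emph{adjacent} positions, the resulting constraint $t(a+1)\neq t(a+2)$ is completely harmless (even a $2$-coloring can alternate forever), as you observe. The paper's remedy is to never swap adjacent positions: it first proves the auxiliary identity $\varphi(exfyg)=\varphi(eyfxg)$ with an idempotent inserted \emph{between} $x$ and $y$, by considering words of the form $e^{2j_1-1}xf^{2j_2-1-2j_1}yg^{n-2j_2}$ where $j_1<j_2$ are chosen by a simple pigeonhole so that the types of positions $2j_1$ and $2j_2$ agree; because those positions are far apart, pigeonhole has room to work. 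Two more equational facts, $\varphi(efef)=\varphi(ef)$ and $\varphi(exyf)=\varphi(eyefxf)$, are then proved similarly and spliced together to recover the target identity $\varphi(exyf)=\varphi(eyxf)$. The whole argument is elementary pigeonhole plus algebraic bookkeeping.

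Your remedy for the adjacent-swap problem is to pass to $e^{a}x^{i}y^{j}f^{b}$ versus $e^{a}y^{j}x^{i}f^{b}$ for suitable $i,j$, and this is where the argument breaks. For these two words to still be separated by $L=\varphi^{-1}(p)$, you need $\varphi(e)\varphi(x)^{i}\varphi(y)^{j}\varphi(f)\neq \varphi(e)\varphi(y)^{j}\varphi(x)^{i}\varphi(f)$ and, more than that, you need one of them to equal $p=\varphi(exyf)$. But $M$ is \emph{not} assumed commutative, and stability of $\varphi$ (namely $\varphi(\Sigma^{2})=\varphi(\Sigma)$) gives you no control whatsoever over $\varphi(x)^{i}$ for individual letters $x$: it only guarantees $\varphi(x)^{i}\in\varphi(\Sigma)$, not that the eventual cycle of powers of $\varphi(x)$ ever revisits $\varphi(x)$ or anything commensurate with it. So "suitable choices of exponents" are not known to exist, and I do not see how to make them exist in general. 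This is the missing idea, and it is exactly what the paper's three-letter padding $e,f,g$ and the bridge identities (\ref{eq:Com_2}), (\ref{eq:Com_3}) are designed to avoid. Beyond that, the final combinatorial/Ramsey-type step (three equally-spaced collinear points on a unit-step lattice path in $\N^{K}$) is stated as an open obstacle rather than proved, whereas the paper needs only the most elementary form of pigeonhole; so even if the scaling step were repaired you would still be committing to a noticeably heavier combinatorial tool than the paper uses.
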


\begin{proof}
    Let us first observe that for any program $P$ over some finite commutative
    monoid $N$ using the input alphabet $\Sigma$ and of range $n \in \N$, there
    exist a program $P'$ over $N$ using the same input alphabet and of same
    range such that $P' = \prod_{i = 1}^n (i, h_i)$ verifying $P(w) = P'(w)$ for
    all $w \in \Sigma^n$~\cite[Example~3.4]{PhD_thesis/Tesson}. We call $P'$ a
    single-scan program.

    The assumption that $\WProb{\varphi} \subseteq \Prog{\FMVCom}$ thus means
    that for all $F \subseteq M$, there exists a sequence
    $(P_{F, n})_{n \in \N}$ of single-scan programs over some $N_F \in \FMVCom$
    that recognizes $\varphi^{-1}(F)$.

    For all $x, y, e, f, g \in \Sigma$ such that $\varphi(e)$, $\varphi(f)$ and
    $\varphi(g)$ are idempotents, we claim that
    \begin{align}
	\varphi(e x f y g) & = \varphi(e y f x g)\label{eq:Com_1}\\
	\varphi(e f e f) & = \varphi(e f)\label{eq:Com_2}\\
	\varphi(e x y f) & = \varphi(e y e f x f)\label{eq:Com_3}.
    \end{align}

    Assuming the claim, take $x, y, e, f \in \Sigma$ such that $\varphi(e)$ and
    $\varphi(f)$ are idempotents. Equation~\eqref{eq:Com_2} implies that
    $\varphi(e f)$ is an idempotent. As $\varphi(\Sigma^2) = \varphi(\Sigma)$,
    we have that there exists $g \in \Sigma$ such that
    $\varphi(e f) = \varphi(g)$, hence
    \[
	\varphi(e x y f) = \varphi(e y e f x f) = \varphi(e y g x f) =
	\varphi(e x g y f) = \varphi(e x e f y f) = \varphi(e y x f)
	\displaypunct{,}
    \]
    the first and last equalities being from~\eqref{eq:Com_3} and the middle one
    from~\eqref{eq:Com_1}.
      
    Thus, the lemma will be proven once we will have proven
    that~\eqref{eq:Com_1},~\eqref{eq:Com_2} and~\eqref{eq:Com_3} hold for all
    $x, y, e, f, g \in \Sigma$ such that $\varphi(e)$, $\varphi(f)$ and
    $\varphi(g)$ are idempotents.

    \paragraph{\eqref{eq:Com_1} holds.}
    Let $x, y, e, f, g \in \Sigma$ such that $\varphi(e)$, $\varphi(f)$ and
    $\varphi(g)$ are idempotents.
    Set $F = \set{\varphi(e x f y g)}$ and $n = 2 (\card{N_F}^2 + 1) + 1$ and
    assume $P_{F, n} = \prod_{i = 1}^n (i, h_i)$.
    Then, since the function
    \[
	\function{\Delta}{\intinterval{1}{\card{N_F}^2 + 1}}{{N_F}^2}
			 {j}{(h_{2 j}(x), h_{2 j}(y))}
    \]
    cannot be injective, there must exist
    $j_1, j_2 \in \intinterval{1}{\card{N_F}^2 + 1}, j_1 < j_2$ such that
    $h_{2 j_1}(x) = h_{2 j_2}(x)$ and $h_{2 j_1}(y) = h_{2 j_2}(y)$.
    So
    \[
	P_{F, n}(e^{2 j_1 - 1} x f^{2 j_2 - 1 - 2 j_1} y g^{n - 2 j_2}) =
	P_{F, n}(e^{2 j_1 - 1} y f^{2 j_2 - 1 - 2 j_1} x g^{n - 2 j_2})
	\displaypunct{,}
    \]
    hence as
    $e^{2 j_1 - 1} x f^{2 j_2 - 1 - 2 j_1} y g^{n - 2 j_2} \in \varphi^{-1}(F)$
    because
    \[
	\varphi(e^{2 j_1 - 1} x f^{2 j_2 - 1 - 2 j_1} y g^{n - 2 j_2}) =
	\varphi(e x f y g)
	\displaypunct{,}
    \]
    we must have
    $e^{2 j_1 - 1} y f^{2 j_2 - 1 - 2 j_1} x g^{n - 2 j_2} \in \varphi^{-1}(F)$.
    Thus, we have
    \[
	\varphi(e^{2 j_1 - 1} y f^{2 j_2 - 1 - 2 j_1} x g^{n - 2 j_2}) =
	\varphi(e y f x g) = \varphi(e x f y g)
	\displaypunct{.}
    \]

    So for all $x, y, e, f, g \in \Sigma$ such that $\varphi(e)$, $\varphi(f)$
    and $\varphi(g)$ are idempotents, we have that~\eqref{eq:Com_1} holds.

    \paragraph{\eqref{eq:Com_2} holds.}
    Let $e, f \in \Sigma$ such that $\varphi(e)$ and $\varphi(f)$ are
    idempotents. We have that
    \[
	\varphi(e f e f) = \varphi(e f f e f) = \varphi(e e f f f) =
	\varphi(e f)
	\displaypunct{,}
    \]
    the middle equality being from~\eqref{eq:Com_1}.

    So for all $e, f \in \Sigma$ such that $\varphi(e)$ and $\varphi(f)$ are
    idempotents, we have that~\eqref{eq:Com_2} holds.

    \paragraph{\eqref{eq:Com_3} holds.}
    Let $x, y, e, f \in \Sigma$ such that $\varphi(e)$ and $\varphi(f)$ are
    idempotents.
    Set $F = \set{\varphi(e x y f)}$ and $n = 4 (2 \card{N_F}^4 + 1)$ and assume
    $P_{F, n} = \prod_{i = 1}^n (i, h_i)$.
    Then, we have that the function
    \[
	\function{\Delta}
		 {\intinterval{1}{2 \card{N_F}^4 + 1}}{{N_F}^4}
		 {j}{(h_{4 j - 2}(x), h_{4 j - 2}(f), h_{4 j - 1}(y),
		      h_{4 j - 1}(e))}
    \]
    verifies that there exists $(m_1, m_2, m_3, m_4) \in {N_F}^4$ such that
    \[
	\card{\Delta^{-1}((m_1, m_2, m_3, m_4))} \geq 3
	\displaypunct{.}
    \]
    Therefore, there exist
    $j_1, j_2, j_3 \in \intinterval{1}{2 \card{N_F}^4 + 1}, j_1 < j_2 < j_3$
    such that $h_{4 j_3 - 2}(x) = h_{4 j_2 - 2}(x)$,
    $h_{4 j_3 - 2}(f) = h_{4 j_2 - 2}(f)$,
    $h_{4 j_2 - 1}(y) = h_{4 j_1 - 1}(y)$ and
    $h_{4 j_2 - 1}(e) = h_{4 j_1 - 1}(e)$.
    So
    \[
	P_{F, n}(e^{4 j_2 - 3} x y f^{n - 4 j_2 + 1}) =
	P_{F, n}(e^{4 j_1 - 2} y e^{4 (j_2 - j_1) - 2} f e f^{4 (j_3 - j_2) - 2}
		 x f^{n - 4 j_3 + 2})
	\displaypunct{,}
    \]
    hence as $e^{4 j_2 - 3} x y f^{n - 4 j_2 + 1} \in \varphi^{-1}(F)$
    because
    \[
	\varphi(e^{4 j_2 - 3} x y f^{n - 4 j_2 + 1}) = \varphi(e x y f)
	\displaypunct{,}
    \]
    we must have
    $e^{4 j_1 - 2} y e^{4 (j_2 - j_1) - 2} f e f^{4 (j_3 - j_2) - 2}
     x f^{n - 4 j_3 + 2} \in \varphi^{-1}(F)$.
    Thus, we have
    \begin{align*}
	\varphi(e x y f)
	& = \varphi(e^{4 j_1 - 2} y e^{4 (j_2 - j_1) - 2} f e
		    f^{4 (j_3 - j_2) - 2} x f^{n - 4 j_3 + 2})\\
	& = \varphi(e y e f e f x f)\\
	& = \varphi(e y e f x f)
    \end{align*}
    where the last equality uses~\eqref{eq:Com_2}.

    So for all $x, y, e, f \in \Sigma$ such that $\varphi(e)$ and $\varphi(f)$
    are idempotents, we have that~\eqref{eq:Com_3} holds.
\end{proof}

\section{The case of \texorpdfstring{$\FMVDA$}{DA}}
\label{sec:caseDA}

In this section, we prove that $\FMVDA$ is an \spv-variety of monoids, which
implies that it is tame.
Combined with the fact that $\FMVDA$ is local~\cite{Almeida-1996}, we obtain the
following result by
Proposition~\ref{ptn:Regular_languages_local_sp-variety_of_finite_monoids}.

\begin{theorem}
    $\Prog{\FMVDA} \cap \Reg = \DLang{\StVQDA}$.
\end{theorem}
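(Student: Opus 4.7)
The plan is to obtain the equality as an immediate corollary of Proposition~\ref{ptn:Regular_languages_local_sp-variety_of_finite_monoids} once two facts about $\FMVDA$ are in place: its locality, due to Almeida~\cite{Almeida-1996}, and its \spv-variety property (the substantive new content of this section). Substituting $\V = \FMVDA$ into Proposition~\ref{ptn:Regular_languages_local_sp-variety_of_finite_monoids} yields $\Prog{\FMVDA} \cap \Reg = \DLang{\StVQuasi\FMVDA}$, and since $\StVQDA$ abbreviates $\StVQuasi\FMVDA$, this is exactly the statement of the theorem. Accordingly, all the work reduces to showing that $\FMVDA$ is an \spv-variety of monoids.

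For that, I would argue by contraposition. Suppose $S$ is a finite semigroup with $S^1 \notin \FMVDA$; one seeks $F \subseteq S^1$ such that $\eta_S^{-1}(F) \notin \Prog{\FMVDA}$. By the defining identity, there exist $x, y \in S^1$ with $(xy)^\omega \neq (xy)^\omega x (xy)^\omega$. A brief case analysis rules out $x = 1$ (both sides would collapse to $y^\omega$) and reduces $y = 1$ to the existence of $x \in S$ with $x^\omega \neq x^{\omega + 1}$; in the latter subcase, the very argument that $\FMVA$ is an \spv-variety from Subsection~\ref{sse:sp-varieties} applies verbatim, since the separating $\FLang{MOD}_k$-like language falls outside $\AC[0] = \Prog{\FMVA} \supseteq \Prog{\FMVDA}$. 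The core case is therefore $x, y \in S$.

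In that core case, the natural tool to reach for is the result of Lautemann, Tesson and Thérien~\cite{Lautemann-Tesson-Therien-2006}: any regular language with a neutral letter contained in $\Prog{\FMVDA}$ is already in $\DLang{\FMVDA}$. One considers the morphism $\psi \colon \{a, b, c\}^* \to S^1$ with $\psi(a) = x$, $\psi(b) = y$, $\psi(c) = 1$, making $c$ a neutral letter of every language $\psi^{-1}(F)$. Taking $F = \{(xy)^\omega\}$, the words $(ab)^\omega$ and $(ab)^\omega a (ab)^\omega$ lie respectively inside and outside $\psi^{-1}(F)$, so standard syntactic arguments ensure that the syntactic monoid of $\psi^{-1}(F)$ fails the defining identity of $\FMVDA$; by Lautemann--Tesson--Thérien, this language is then not in $\Prog{\FMVDA}$. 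To derive the contradiction, I would pull $\psi^{-1}(F)$ back to a language over $S^*$ through an \lmm-morphism sending $a \mapsto x$, $b \mapsto y$ and $c$ to a suitable letter or block in $S$, then apply closure of $\Prog{\FMVDA}$ under inverses of \lmm-morphisms (Lemma~\ref{lemma-simple-closure-P}) together with the hypothesis $\WProb{S} \subseteq \Prog{\FMVDA}$.

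The main technical obstacle is the case where $S$ itself is not a monoid: then $\psi(c) = 1$ cannot be realised as a single letter of $S$, so the pullback map needs to be adapted (for instance, by replacing $c$ with a constant-length block built from an idempotent of $S$) while preserving both the \lmm-property and the essential structure of the witnessing argument. A secondary, bookkeeping-level point is verifying that the $\FMVDA$-violation really persists in the syntactic monoid of $\psi^{-1}(F)$---not just at the level of $S^1$---which may require passing to a sufficiently large common idempotent power so that the distinguishing elements remain distinct after the syntactic quotient.
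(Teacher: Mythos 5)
Your high-level reduction to ``$\FMVDA$ is local and an \spv-variety'' matches the paper exactly; the paper also invokes Almeida for locality and then applies Proposition~\ref{ptn:Regular_languages_local_sp-variety_of_finite_monoids}. The disagreement lies entirely in how you propose to establish the \spv-variety property, and there you take a genuinely different route that has a real gap.

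The paper does \emph{not} invoke the Lautemann--Tesson--Thérien neutral-letter result at all. Instead it isolates three concrete witness languages $(c+ab)^*$, $(b+ab)^*$, and $b^*\bigl((ab^*)^k\bigr)^*$, proves directly (Lemma~\ref{lemma-R}, via the mask/safe-position machinery) that none of them lies in $\Prog{\FMVDA}$, and then shows (Lemma~\ref{lem:DA-Witness_languages_non-membership}, using the classification of aperiodic monoids outside $\FMVDA$ into those divided by $B_2$ or $U$) that whenever $S^1\notin\FMVDA$, one of these languages is recognized by a morphism $\mu\colon\Sigma^*\to S^1$ with $\mu(\Sigma^+)\subseteq S$. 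That containment $\mu(\Sigma^+)\subseteq S$ is exactly what makes the pullback through an \lmm-morphism to a language in $\WProb{S}$ legitimate, so it is engineered into the lemma from the start. Your route, by contrast, would buy a much shorter argument by outsourcing the lower bound to the LTT theorem, at the cost of self-containment: the paper's direct proof is itself one of the contributions.

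The genuine gap is precisely the one you flag but do not resolve: the pullback when $S$ is not a monoid. Your construction needs $\psi(c)=1$ with $1$ the adjoined identity of $S^1$, and no word in $S^+$ maps to $1$ under $\eta_S$, so there is no \lmm-morphism realizing $\psi$. Your proposed fix --- replacing $c$ by an idempotent $e\in S$ (or a constant-length block built from one), so that $e$ serves as a local identity --- fails in general. Take $S=B_2\setminus\{1\}=\{a,b,ab,ba,0\}$, the Brandt semigroup with the identity removed; $S$ is a subsemigroup and $S^1=B_2\notin\FMVDA$. Its idempotents are $ab$, $ba$ and $0$, and one computes $ab\,S\,ab=\{ab,0\}$, $ba\,S\,ba=\{ba,0\}$, and $0\,S\,0=\{0\}$, all of which are semilattices and hence in $\FMVJ\subseteq\FMVDA$. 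So no idempotent of $S$ can serve as the neutral ``block'' witnessing a $\FMVDA$-violation inside $eS^1e$: the violation in $B_2$ is lost as soon as you localise at any $e\in S$. This is exactly the situation the paper's Lemma~\ref{lem:DA-Witness_languages_non-membership} is built to handle (for this $S$, it uses $\mu(c)=ab$, recognizing $(c+ab)^*$, with \emph{no} neutral letter involved), and it is why the paper bypasses the LTT neutral-letter route entirely. Without a replacement for this step, your argument does not go through in the non-monoid case, and the non-monoid case is mandatory because the definition of an \spv-variety quantifies over all finite semigroups $S$.

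One smaller point: your ``bookkeeping'' remark that the $\FMVDA$-violation survives in the syntactic monoid of $\psi^{-1}(F)$ is fine --- taking $N$ a common multiple of the idempotent powers of $S^1$ and of the syntactic monoid shows that $(ab)^N$ and $(ab)^N a (ab)^N$ are not syntactically equivalent, which is exactly the required failure of the identity --- so that part would indeed work if the pullback issue were solved.
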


The result follows from the following main technical contribution:

\begin{proposition}\label{prop-main-PDA}
    $(c + ab)^*$, $(b + ab)^*$ and $b^* ((a b^*)^k)^*$ for any
    integer $k \geq 2$ are regular languages not in $\Prog{\FMVDA}$.
\end{proposition}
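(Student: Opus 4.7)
The proof splits into two distinct parts. The language $b^* ((ab^*)^k)^*$ is exactly the set of $w \in \{a,b\}^*$ with $\length{w}_a \equiv 0 \pmod{k}$. Since $\FMVDA \subseteq \FMVA$, we have $\Prog{\FMVDA} \subseteq \Prog{\FMVA} = \AC[0]$, and by the classical theorem of Furst--Saxe--Sipser and Ajtai, modular counting with modulus $k \geq 2$ does not lie in $\AC[0]$; hence this language is not in $\Prog{\FMVDA}$. This disposes of the third language cheaply and focuses the remaining work on the two aperiodic languages, whose syntactic monoids, while aperiodic, fail the identity $(xy)^\omega = (xy)^\omega x (xy)^\omega$ and thus lie outside $\FMVDA$.

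For $(c+ab)^*$ and $(b+ab)^*$ I would argue by contradiction, combining an algebraic absorption property of $\FMVDA$ with a Ramsey-type factorisation of polynomial-length programs. Assume $L$ (one of these languages) is \pr-recognised by a sequence $(P_n)_{n \in \N}$ of programs over some $M \in \FMVDA$, with $\length{P_n} = n^{\Omicron(1)}$. The algebraic input is an \emph{absorption lemma}: by iterating the defining identity of $\FMVDA$ one shows that whenever a product $e = m_1 m_2 \cdots m_p$ in $M$ is idempotent, one has $e m_i e = e$ for every factor $m_i$. Intuitively, once a surrounding context evaluates to an idempotent, $M$ becomes blind to the precise arrangement of factors placed between two copies of that idempotent.

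The combinatorial step would then extract such an idempotent context from $P_n$ itself. Using a Simon/Ramsey-style factorisation of the sequence of instructions of $P_n$, for $n$ large I would locate a window of consecutive input positions such that, after fixing the letters outside the window appropriately, the partial products of $P_n$ into the window on the left and out of it on the right both evaluate to idempotent elements of $M$. Inside the window I would then exhibit two length-$n$ words $w, w'$ differing only by a localised letter swap that converts a word in $L$ into one outside $L$: for $(b+ab)^*$ one shifts a single $a$ one position to create a forbidden $aa$, while for $(c+ab)^*$ one uses $c$-padding to preserve length while disrupting the $ab$-bracketing. The absorption lemma then forces $P_n(w) = P_n(w')$, contradicting the assumption that $P_n$ recognises $L \cap \Sigma^n$.

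The main obstacle is the combinatorial step. In a polynomial-length program each input position can be queried up to $n^{\Omicron(1)}$ times, so one cannot hope to isolate positions queried only a bounded number of times by naive pigeonhole. The factorisation must be sensitive both to positions and to the $M$-valued contributions of the instructions, and it has to be coordinated with the choice of the local swap so that the swap lies entirely inside the idempotent-producing window and so that the letters outside remain consistent with the required contextual element of $M$. Aligning this Ramsey-type factorisation with the $\FMVDA$ absorption identity is the crux of the proof, and is where the specific structure of $\FMVDA$ — as opposed to the larger variety $\FMVA$ — will be essential.
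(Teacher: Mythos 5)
Your handling of the third language is fine: since $\FMVDA \subseteq \FMVA$ and $\Prog{\FMVA} = \AC[0]$, the Furst--Saxe--Sipser/Ajtai bound immediately rules out $b^*((ab^*)^k)^*$. The paper explicitly notes this as an immediate corollary but then deliberately folds the third language into a unified semigroup-theoretic argument (by taking $\Delta = \{a, b\}$) so as to avoid invoking the $\AC[0]$ lower bound; your shortcut is legitimate but concedes the unified treatment the paper was after.

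For $(c+ab)^*$ and $(b+ab)^*$, however, there is a real gap, and you flag it yourself: the ``crux'' you identify --- aligning a Ramsey-type factorisation with the absorption identity --- is exactly the step that is not done, and it is not clear that it \emph{can} be done along the lines you describe. Two specific obstacles. First, the absorption lemma you invoke ($e m_i e = e$ when $e = m_1 \cdots m_p$ is idempotent) is correct for $\FMVDA$, but it is designed for situations where one can freely insert or delete factors between two copies of the same idempotent; this is the neutral-letter technique (as in Lautemann--Tesson--Thérien). Here the languages have no neutral letter, and what you would need instead is to make a \emph{local swap} of letters while simultaneously keeping the word inside $\Delta^*$; the absorption identity alone does not control this, since changing a factor can change the overall product unless it is already sandwiched between idempotents that also absorb the changed value. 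Second, ``a window of consecutive input positions'' does not correspond to a contiguous block of program instructions: instructions for inside-window and outside-window positions interleave arbitrarily in program order, so after fixing the outside letters the resulting sequence has the shape $m_0 Q_1 m_1 Q_2 m_2 \cdots Q_r m_r$ with the variable instruction outputs $Q_j$ scattered among fixed ones. There is no clean ``idempotent on the left, idempotent on the right'' decomposition to which the absorption lemma applies. The paper's actual proof resolves both difficulties by a different mechanism: it introduces masks with \emph{safe} and \emph{dangerous} positions (safety ensures any fixed safe position still admits a completion in $\Delta^*$), and instead of an absorption lemma it runs a well-founded induction over quadruplets $(\lambda, P, u, v)$ ordered by Green's $R$-, $L$- and $J$-classes, using the lemma that in $\FMVDA$ being $R$-bad or $L$-bad depends only on the $\sim_R$ / $\sim_L$ class. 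At each step one either finds an instruction that forces a drop in the $J$-order (and recurses on the two halves), or establishes that the whole product stays in one $\sim_H$-class, where aperiodicity forces a unique value. This careful descent, not a single factorisation-plus-absorption, is what bounds the number of positions that need to be fixed by a constant independent of $n$ and completes the argument.
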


Observe that for any $k \in \N, k \geq 2$, the fact that
$b^* ((a b^*)^k)^* \notin \Prog{\FMVDA}$ is an immediate corollary of the
classical result that
$\FLang{MOD_k} \notin \AC[0] = \Prog{\FMVA}$~\cite{Furst-Saxe-Sipser-1984, Ajtai-1983}.
However, we propose a direct semigroup-theoretic proof of the first result
without resorting to the involved proof techniques of the latter result.

Before proving the proposition we first show that it implies that $\FMVDA$ is an
\spv-variety of monoids. This implication is a consequence of the following
lemma, which is a result inspired by an observation
in~\cite{Tesson-Therien-2002b} stating that non-membership of a given finite
monoid $M$ in $\FMVDA$ implies non-aperiodicity of $M$ or division of it by (at
least) one of two specific finite monoids.

\begin{lemma}
\label{lem:DA-Witness_languages_non-membership}
    Let $S$ be a finite semigroup such that $S^1 \notin \FMVDA$.
    Then, one of $(c + ab)^*$, $(b + ab)^*$ or $b^* ((a b^*)^k)^*$ for some
    $k \in \N, k \geq 2$ is recognized by a morphism
    $\mu\colon\Sigma^* \to S^1$, for $\Sigma$ the appropriate alphabet, such
    that $\mu(\Sigma^+) \subseteq S$.
\end{lemma}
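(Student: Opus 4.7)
The plan is to exploit a structural dichotomy, splitting on whether $S^1$ is aperiodic. In the non-aperiodic case one builds a modular-counting morphism out of a non-trivial cyclic subgroup of $S$. In the aperiodic case, the failure of the $\FMVDA$ identity yields algebraic witnesses from which we extract a submonoid of $S^1$ admitting the syntactic monoid of $(c + ab)^*$ or of $(b + ab)^*$ as a quotient, in the spirit of the Tesson--Th\'erien observation cited in the statement.

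In the non-aperiodic case, pick $s \in S^1$ with $s^\omega \neq s^{\omega + 1}$; since equality trivially holds when $s = 1$, in fact $s \in S$. The set $\set{s^{\omega + i} : i \geq 0}$ is a cyclic group of some order $k \geq 2$ with identity $s^\omega$. I would set $\Sigma = \set{a, b}$ and define $\mu$ by $\mu(a) = s^{\omega + 1}$ and $\mu(b) = s^\omega$; both images lie in $S$. Since $\mu(b)$ acts as the group identity, $\mu(w) = s^\omega$ iff $\length{w}_a$ is divisible by $k$, so with $F = \set{1, s^\omega}$ one gets $\mu^{-1}(F) = b^* ((a b^*)^k)^*$.

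In the aperiodic case, pick $x, y \in S^1$ witnessing $(xy)^\omega \neq (xy)^\omega x (xy)^\omega$. Aperiodicity forces $x \neq 1$ and $y \neq 1$ (in either case the identity would trivially hold in an aperiodic $S^1$), so $x, y \in S$. Set $e = (xy)^\omega$, $f = (yx)^\omega$, $u = ex = xf$, $v = ye = fy$; all four lie in $S$, and routine manipulations using aperiodicity yield $uv = e$, $vu = f$, $eu = u = uf$, $fv = v = ve$, together with $ue \neq e$. Now let $T$ be the submonoid of $S^1$ generated by $\set{e, f, u, v}$, enlarged by the idempotent powers of $u$ and $v$ if needed. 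The key structural step is to show that $T$ admits a surjective morphism $\pi$ onto the syntactic monoid $M_i$ of one of our two witness languages $L_i \in \set{(c + ab)^*, (b + ab)^*}$: informally, when neither $u$ nor $v$ is idempotent one targets $(c + ab)^*$ with $a, b, c \mapsto u, v, e$, and when one of them (up to swapping $a$ and $b$, say $v$, possibly after replacing it by $v^\omega$) is idempotent one targets $(b + ab)^*$ with $a, b \mapsto u, v$. Given $\pi$, one lifts the syntactic morphism $\nu_i \colon \Sigma^* \to M_i$ by choosing for each letter $\sigma \in \Sigma$ a preimage $t_\sigma \in T$ of $\nu_i(\sigma)$; none of the $\nu_i(\sigma)$ is the identity of $M_i$, so $t_\sigma \in S$. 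The resulting morphism $\mu \colon \Sigma^* \to S^1$, with accepting set $\pi^{-1}(F_i) \subseteq S^1$, recognizes $L_i$ and satisfies $\mu(\Sigma^+) \subseteq S$.

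The main obstacle is the aperiodic case, specifically the structural step above: one must identify on algebraic grounds which of the two witness languages to target and exhibit the corresponding quotient $\pi$ of $T$ onto $M_i$. This amounts to a re-derivation of the Tesson--Th\'erien dichotomy in a form tight enough for our purposes, and ultimately reduces to a finite case-by-case check of defining relations (using aperiodicity, $ue \neq e$, and, where applicable, $e \neq f$). By contrast, the non-aperiodic case is routine, as are the lifting and the verification once the quotient $\pi$ has been pinned down.
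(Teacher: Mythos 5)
Your overall structure (the aperiodicity dichotomy, lifting a division morphism to exhibit $\mu$, and the closing argument that $\mu(\Sigma^+) \subseteq S$) matches the paper's, and your non-aperiodic case is essentially identical to it. Your lifting argument in the aperiodic case is also sound as stated: if $\pi\colon T \to M_i$ is a surjective monoid morphism with $T$ a submonoid of $S^1$, then since each $\nu_i(\sigma)$ differs from the identity of $M_i$, any preimage $t_\sigma$ of $\nu_i(\sigma)$ under $\pi$ differs from $1_{S^1}$ and hence lies in $S$; taking $\mu(\sigma) = t_\sigma$ and accepting set $\pi^{-1}(F_i)$ then recognizes $L_i$ with $\mu(\Sigma^+) \subseteq S$.

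The gap is exactly where you flag it: the assertion that the submonoid $T$ generated by $e, f, u, v$ (possibly enlarged) surjects onto the syntactic monoid of $(c+ab)^*$ or of $(b+ab)^*$ is not proved. That assertion \emph{is} the Tesson--Th\'erien dichotomy ($B_2$ or $U$ divides any aperiodic monoid outside $\FMVDA$), which the paper simply cites (Lemma 3.2.4 of Tesson's M.Sc. thesis) rather than re-derives. Your case split on idempotence of $u$ and $v$ is a plausible heuristic --- both generators are non-idempotent in $B_2$, while $[b]$ is idempotent in the syntactic monoid of $(b+ab)^*$ --- but several substantive points are left unverified: that the relations $uv=e$, $vu=f$, $eu=uf=u$, $fv=ve=v$, $e^2=e$, $f^2=f$, $ue\neq e$, together with aperiodicity, actually force a congruence on $T$ with quotient $B_2$ or $U$; what the adjustments ``enlarged by the idempotent powers of $u$ and $v$'' and ``possibly after replacing $v$ by $v^\omega$'' do to these relations; and whether this particular $T$ is even the submonoid of $S^1$ witnessing the division, since nothing guarantees a priori that the witness is generated by $e,f,u,v$. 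As written, the aperiodic case is a plan, not a proof: either carry out the finite case analysis in full, or follow the paper and invoke the known dichotomy directly, then argue $\mu(\Sigma^+) \subseteq S$ as the paper does by observing that $\mu(w)=1$ for some $w \in \Sigma^+$ would contradict recognition of the chosen witness language.
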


\begin{proof}
    We distinguish two cases: the aperiodic and the non-aperiodic one.

    \paragraph{Aperiodic case.}
    Assume first that $S^1$ is aperiodic. Then, since $S^1 \notin \FMVDA$, by
    Lemma~3.2.4 in~\cite{MSc_thesis/Tesson}, we have that $S^1$ is divided by
    the syntactic monoid of $(c^* a c^* b c^*)^*$, denoted by $B_2$, or by the
    syntactic monoid of $\bigl((b + c)^* a (b + c)^* b (b + c)^*\bigr)^*$,
    denoted by $U$. We treat those two not necessarily distinct subcases
    separately.

    \subparagraph{Subcase $B_2$ divides $S^1$.}
    It is easily proven that $(c + ab)^*$ is recognized by $B_2$ (actually, its
    syntactic monoid is isomorphic to $B_2$): just consider
    the syntactic morphism $\eta\colon \set{a, b, c}^* \to B_2$ of
    $(c^* a c^* b c^*)^*$ and build the morphism
    $\varphi\colon \set{a, b, c}^* \to B_2$ sending $a$ to $\eta(a)$, $b$ to
    $\eta(b)$ and $c$ to $\eta(ab)$.

    By a classical result in algebraic automata
    theory~\cite[Chapter~1, Proposition~2.7]{Books/Pin-1986}, this implies that
    $S^1$ recognizes $(c + ab)^*$. Thus there exists a morphism
    $\mu\colon \set{a, b, c}^* \to S^1$ recognizing $(c + ab)^*$, that has the
    property that $\mu(\set{a, b, c}^+) \subseteq S$, otherwise there would
    exist $w \in \set{a, b, c}^+$ such that either $w \notin (c + ab)^*$ while
    $\mu(\emptyword) = \mu(w)$ or $w \in (c + ab)^+$ while
    $\mu(a b) = \mu(a w b)$.

    \subparagraph{Subcase $U$ divides $S^1$}
    The proof goes the same way as for the first subcase.

    It is again easily proven that $(b + ab)^*$ is recognized by $U$: here we
    consider the syntactic morphism $\eta\colon \set{a, b}^* \to U$ of
    $\bigl((b + c)^* a (b + c)^* b (b + c)^*\bigr)^*$ and build the morphism
    $\varphi\colon \set{a, b}^* \to U$ sending $a$ to $\eta(a)$ and $b$ to
    $\eta(b)$.

    Thus there exists a morphism $\mu\colon \set{a, b}^* \to S^1$ recognizing
    $(b + ab)^*$, that also verifies $\mu(\set{a, b}^+) \subseteq S$, otherwise
    there would exist $w \in \set{a, b}^+$ such that $w \notin (b + ab)^*$ while
    $\mu(\emptyword) = \mu(w)$ or $w \in b (b + ab)^*$ while
    $\mu(a) = \mu(a w)$ or $w \in ab (b + ab)^*$ while $\mu(a b) = \mu(a w b)$.

    \paragraph{Non-aperiodic case.}
    Assume now that $S^1$ is not aperiodic.
    Then there is an $x$ in $S$ such that $x^\omega \neq x^{\omega + 1}$ for
    $\omega \in \N_{>0}$ the idempotent power of $S^1$.
    Consider the morphism $\mu\colon \set{a, b}^* \to S^1$ sending $a$ to
    $x^{\omega + 1}$ and $b$ to $x^\omega$, and the language
    $L = \mu^{-1}(x^\omega)$.
    Let $k \in \N, k \geq 2$ be the smallest positive integer such that
    $x^{\omega + k} = x^\omega$, that cannot be $1$ because
    $x^\omega \neq x^{\omega + 1}$.
    Using this, for all $w \in \set{a, b}^*$, we have
    \[
	\mu(w) = x^{\length{w} \cdot \omega + \length{w}_a} =
	x^{\omega + (\length{w}_a \mod k)}
	\displaypunct{,}
    \]
    where $\length{w}$ indicates the length of $w$ and $\length{w}_a$ the number
    of $a$'s it contains, so that $w$ belongs to $L$ if and only if
    $\length{w}_a = 0 \mod k$. Hence, $L$ is the language of all words with a
    number of $a$'s divisible by $k$, $b^* ((a b^*)^k)^*$. In conclusion,
    $b^* ((a b^*)^k)^*$ is recognized by $\mu$ verifying
    $\mu(\set{a, b}^+) \subseteq S$.
\end{proof}

Let now $S$ be any finite semigroup such that
$\WProb{S} \subseteq \Prog{\FMVDA}$.
Let $\eta_S\colon S^* \to S^1$ be the evaluation morphism of $S$.
To show that $S^1$ is in $\FMVDA$, we assume for the sake of contradiction that
it is not the case. Then Lemma~\ref{lem:DA-Witness_languages_non-membership}
tells us that one of $(c + ab)^*$, $(b + ab)^*$ or $b^* ((a b^*)^k)^*$ for some
$k \in \N, k \geq 2$ is recognized by a morphism $\mu\colon\Sigma^* \to S^1$,
for $\Sigma$ the appropriate alphabet, such that $\mu(\Sigma^+) \subseteq S$.

In all cases, we thus have a language $L \subseteq \Sigma^*$ equal to
$\mu^{-1}(Q)$ for some subset $Q$ of $S^1$ with the morphism $\mu$ sending
letters of $\Sigma$ to elements of $S$. Consider then the morphism
$\varphi\colon \Sigma^* \to S^*$ sending each letter $a \in \Sigma$ to $\mu(a)$,
a letter of $S$: we have $\mu = \eta_S \circ \varphi$, so that
$L = \varphi^{-1}(\eta_S^{-1}(Q))$.
As $\WProb{S} \subseteq \Prog{\FMVDA}$, we have that
$\eta_S^{-1}(Q) \in \Prog{\FMVDA}$, hence since $\varphi$ is an \lmm-morphism
and $\Prog{\FMVDA}$ is closed under inverses of \lmm-morphisms by
Lemma~\ref{lemma-simple-closure-P}, we have
$L = \varphi^{-1}(\eta_S^{-1}(Q)) \in \Prog{\FMVDA}$: a contradiction to
Proposition~\ref{prop-main-PDA}.

In the remaining part of this section we prove Proposition~\ref{prop-main-PDA}.

\paragraph*{Proof of Proposition~\ref{prop-main-PDA}.}
The idea of the proof is the following. We work by contradiction and assume that
we have a sequence of programs over some monoid $M$ of $\FMVDA$ deciding one of
the targeted language $L$. Let $n$ be much larger than the size of $M$, and let
$P_n$ be the program running on words of length $n$. Consider a set $\Delta$ of
words such that $L \subseteq \Delta^*$ (for instance take
$\Delta = \set{c, ab}$ for $L = (c + ab)^*$). We will show that we
can fix a constant (depending on $M$ and $\Delta$ but not on $n$) number of
entries to $P_n$ such that $P_n$ always outputs the same value and there are
completions of the entries in $\Delta^*$. Hence, if $\Delta$ was chosen so
that there is actually a completion of the fixed entries in
$L$ and one outside of $L$, $P_n$ cannot recognize the restriction of $L$ to
words of length $n$. We cannot prove this
for all $\Delta$, in particular it will not work for $\Delta=\set{ab}$ and
indeed $(ab)^*$ is in $\Prog{\FMVDA}$. The key property of our $\Delta$ is that
after fixing any letter at any position, except maybe for a constant number of
positions, one can still complete the word into one within $\Delta^*$. This is
not true for $\Delta = \set{ab}$ because after fixing a $b$ in an odd position
all completions fall outside of $(ab)^*$.

We now spell out the technical details.

Let $\Delta$ be a finite non-empty set of non-empty words over an alphabet
$\Sigma$. Let $\bot$ be a letter not in $\Sigma$. A \emph{mask}
is a word over $\Sigma\cup \set{\bot}$. The positions of a mask carrying a
$\bot$ are called \emph{free} while the positions carrying a letter in $\Sigma$
are called \emph{fixed}. A mask $\lambda'$ is a \emph{submask} of a mask
$\lambda$ if it is formed from $\lambda$ by replacing some occurrences (possibly
zero) of $\bot$ by a letter in $\Sigma$.

A \emph{completion} of a mask $\lambda$ is a word $w$ over $\Sigma$ that is
built from $\lambda$ by replacing all occurrences of $\bot$ by a letter in
$\Sigma$. Notice that all completions of a mask have the same length as the
mask itself.  A mask $\lambda$ is \emph{$\Delta$-compatible} if it has a
completion in $\Delta^*$.

The \emph{dangerous} positions of a mask $\lambda$ are the positions within
distance $2 l - 2$ of the fixed positions or within distance $l - 1$ of the
beginning or the end of the mask, where $l$ is the maximal length of a word in
$\Delta$. A position that is not dangerous is said to be \emph{safe} and is
necessarily free.

We say that $\Delta$ is \emph{safe} if the following holds. Let $\lambda$ be a
$\Delta$-compatible mask. Let $i$ be any free position of $\lambda$ that is not
dangerous. Let $a$ be any letter in $\Sigma$. Then the submask of $\lambda$
constructed by fixing $a$ at position $i$ is $\Delta$-compatible. We have
already seen that $\Delta=\set{ab}$ is not safe. However our targeted $\Delta$,
$\Delta=\set{c,ab}$, $\Delta=\set{b,ab}$, $\Delta=\set{a,b}$, are safe. We always
consider $\Delta$ to be safe in the following.

Note that it is important in the definition of safe for $\Delta$ that we
fix only safe positions, i.e.\ positions far apart and far from the beginning
and the end of the mask. Indeed, depending on the chosen $\Delta$, there might
be words that never appear as factors in any word of $\Delta^*$, such as $bb$
when $\Delta = \set{c, ab}$ or $aa$ when $\Delta = \set{b, ab}$, so that fixing
a position near an already fixed position to an arbitrary letter in a
$\Delta$-compatible mask may result in a mask that has no completion in
$\Delta^*$. This is why we make sure that safe positions are far from those
already fixed and from the beginning and the end of the mask, where far depends
on the length of the words of $\Delta$.

Finally, we say that a completion $w$ of a mask $\lambda$ is \emph{safe} if $w$
is a completion of $\lambda$ belonging to $\Delta^*$ or is constructed from a
completion of $\lambda$ in $\Delta^*$ by modifying only letters at safe
positions of $\lambda$, the dangerous positions remaining unchanged.

\medskip

Let $M$ be a monoid in $\FMVDA$ whose identity we will denote by $1$.

We define a version of Green's relations for decomposing monoids that will be
used, as often in this setting, to prove the main technical lemma in the current
proof.
Given two elements $u,u'$ of $M$ we say that $u \leq_J u'$ if there are
elements $v,v'$ of $M$ such that $u'=vuv'$. We write $u \sim_J u'$ if $u \leq_J
u'$ and $u' \leq_J u$. We write $u <_J u'$ if $u \leq_J
u'$ and $u' \not\sim_J u$.
Given two elements $u,u'$ of $M$ we say that $u \leq_R u'$ if there is
an element $v$ of $M$ such that $u'=uv$.  We write $u \sim_R u'$ if $u \leq_R
u'$ and $u' \leq_R u$. We write $u <_R u'$ if $u \leq_R
u'$ and $u' \not\sim_R u$.
Given two elements $u,u'$ of $M$ we say that $u \leq_L u'$ if there is
an element $v$ of $M$ such that $u'=vu$. We write $u \sim_L u'$ if $u \leq_L
u'$ and $u' \leq_L u$. We write $u <_L u'$ if $u \leq_L
u'$ and $u' \not\sim_L u$.
Finally, given two elements $u,u'$ of $M$, we write $u \sim_H u'$ if
$u \sim_R u'$ and $u \sim_L u'$.

We shall use the following well-known fact about these preorders and equivalence
relations (see~\cite[Chapter 3, Proposition 1.4]{Books/Pin-1986}).

\begin{lemma}
\label{lem:Green's_relations_link}
    For all elements $u$ and $v$ of $M$, if $u \leq_R v$ and $u \sim_J v$, then
    $u \sim_R v$. Similarly, if $u \leq _L v$ and $u \sim_J v$, then
    $u \sim_L v$.
\end{lemma}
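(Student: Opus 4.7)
The plan is to apply the classical finite-monoid ``iterate and extract an idempotent'' argument; the $L$-statement follows by a strictly symmetric argument, so I shall focus on the $R$-statement.

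First I would unpack the hypotheses. From $u \leq_R v$ in the paper's convention we obtain $w \in M$ with $v = uw$. From $u \sim_J v$ we obtain $x, y \in M$ with $u = xvy$. Substituting the first expression into the second yields the key equation $u = x u w y$.

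Next I would iterate this equation: a straightforward induction replacing the occurrence of $u$ on the right gives $u = x^k u (wy)^k$ for every $k \geq 1$. Taking $k = \omega$, the idempotent power of $M$, the element $e := (wy)^\omega$ is idempotent, so $u = x^\omega u e$. Multiplying both sides on the right by $e$ and using $e^2 = e$ then collapses this to $u e = u$.

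Finally I would re-expand $e$: writing $e = (wy) \cdot (wy)^{\omega-1}$, with the convention $(wy)^0 = 1$ when $\omega = 1$, one gets
\[
u \;=\; u e \;=\; (uw) \cdot \bigl(y (wy)^{\omega-1}\bigr) \;=\; v \cdot z
\]
with $z = y (wy)^{\omega-1}$. This exhibits $v \leq_R u$ in the paper's convention, and combined with the hypothesis $u \leq_R v$ it gives $u \sim_R v$. The argument is folklore and presents no real difficulty; the only point requiring care is the paper's reversed convention for the Green preorders ($u \leq_R u'$ meaning $u' \in uM$, not the more usual $u \in u'M$), so one must verify at the end that the factorization obtained is indeed $u = v z$ (witnessing $v \leq_R u$ in the paper's notation) rather than the other way around.
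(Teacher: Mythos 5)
Your proof is correct and follows the standard ``iterate to an idempotent'' argument that the paper itself does not spell out but merely cites from Pin's book. You have correctly handled the paper's reversed convention for the Green preorders ($u \leq_R u'$ meaning $u' \in uM$), extracted the equation $u = xu(wy)$ from $v = uw$ and $u = xvy$, pumped to $u = x^\omega u (wy)^\omega$, deduced $u(wy)^\omega = u$, and peeled off $uw = v$ on the left to exhibit $v \leq_R u$, which together with the hypothesis $u \leq_R v$ gives $u \sim_R v$; the $L$-case is indeed symmetric.
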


From the definition it follows that for all elements $u,v,r$ of $M$, we have $u
\leq_R ur$ and $v \leq_L rv$. When the inequality is strict in the first case,
i.e.\ $u <_R ur$, we say that $r$ is $R$-bad for $u$. Similarly $r$ is $L$-bad
for $v$ if $v <_L rv$. It follows from $M\in \FMVDA$ that being $R$-bad or
$L$-bad only depends on the $\sim_R$ or $\sim_L$ class, respectively.
This is formalized in the following lemma, that is folklore and used at least
implicitly in many proofs involving $\FMVDA$ (see for
instance~\cite[proof of Theorem 3]{Tesson-Therien-2002b}). Since we didn't
manage to find the lemma stated and proven in the form below, we include a proof
for completeness.

\begin{lemma}
\label{lem:Green's_relations_DA}
    If $M$ is in $\FMVDA$, then $u\sim_R u'$ and $ur \sim_R u$ implies
    $u'r \sim_R u$. Similarly $u\sim_L u'$ and $ru \sim_L u$ implies
    $ru' \sim_L u$.
\end{lemma}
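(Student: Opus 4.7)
The plan is a direct calculation using the defining identity $(xy)^\omega = (xy)^\omega x (xy)^\omega$ of $\FMVDA$. I shall work out the $\sim_R$ statement in full; the $\sim_L$ statement will then follow by the symmetric computation transported to the opposite monoid of $M$, which also lies in $\FMVDA$.

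From $u \sim_R u'$, I fix $\alpha, \beta \in M$ with $u' = u\alpha$ and $u = u'\beta$, so that $u = u\alpha\beta$. From $ur \sim_R u$, I fix $v \in M$ with $urv = u$. The containment $u \leq_R u'r$ is immediate since $u'r = u(\alpha r)$. To establish $u'r \leq_R u$, I observe that $u$ is right-fixed by both $\alpha\beta$ and $rv$, hence by their product: $u(rv\alpha\beta) = (urv)\alpha\beta = u\alpha\beta = u$, so $u(rv\alpha\beta)^\omega = u$. Applying the $\FMVDA$ identity with $x = rv\alpha$ and $y = \beta$ yields $(rv\alpha\beta)^\omega = (rv\alpha\beta)^\omega (rv\alpha) (rv\alpha\beta)^\omega$. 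Left-multiplying by $u$ and using the right-fixedness twice, I obtain $u = u(rv\alpha)(rv\alpha\beta)^\omega = (urv)\alpha(rv\alpha\beta)^\omega = u'(rv\alpha\beta)^\omega$. Since the idempotent power $\omega$ is at least~$1$, I peel off one copy of $rv\alpha\beta$ to get $u = u'r \cdot \bigl[v\alpha\beta(rv\alpha\beta)^{\omega-1}\bigr]$, which exhibits $u'r \leq_R u$ and hence $u \sim_R u'r$.

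The main obstacle, once this blueprint is in hand, is purely the combinatorial choice of the product $rv\alpha\beta$ as stabilizer together with the factorization $x = rv\alpha$, $y = \beta$ that makes the extracted middle factor collapse against $u$ to produce $u'$. The $\sim_L$ case is then handled symmetrically: with $u' = \alpha u$, $u = \beta u' = \beta\alpha u$, $u = vru$, one sees that $u$ is left-fixed by $\beta\alpha vr$, applies the $\FMVDA$ identity in the opposite monoid (with $x = \beta$, $y = \alpha vr$), and right-multiplies by $u$ to derive $u = (\beta\alpha vr)^\omega u'$, from which expanding the power gives $u = \bigl[(\beta\alpha vr)^{\omega-1}\beta\alpha v\bigr] \cdot ru'$, exhibiting $ru' \leq_L u$. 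I anticipate no further difficulty beyond the mirror bookkeeping.
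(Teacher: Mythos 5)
Your proof is correct and takes essentially the same route as the paper's: both construct a product of witness elements whose idempotent power stabilizes one of $u$, $u'$, then invoke a $\FMVDA$-identity to insert $r$ into that power and conclude $u'r \sim_R u$. The only cosmetic difference is that you apply the defining identity $(xy)^\omega = (xy)^\omega x (xy)^\omega$ directly (with $x = rv\alpha$, $y = \beta$) and work with the stabilizer of $u$, whereas the paper stabilizes $u'$ and invokes the equivalent three-variable form $(xyz)^\omega y (xyz)^\omega = (xyz)^\omega$ cited from Tesson--Th\'erien.
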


\begin{proof}
    Let $u, u', r \in M$ such that $u \sim_R u'$ and $u r \sim_R u$. This means
    that there exist $v, v', s \in M$ such that $u = u' v'$, $u' = u v$ and
    $u r s = u$.

    This implies that
    \[
	u' = u v = u r s v = u' v' r s v = u' (v' r s v)^2 = \cdots =
	u' (v' r s v)^\omega
    \]
    where $\omega$ is the idempotent power of $M$.
    Hence, we have that
    \[
	u' r (v' r s v)^\omega v' = u' (v' r s v)^\omega r (v' r s v)^\omega v'
	\displaypunct{.}
    \]
    But, by~\cite[Theorem 2]{Tesson-Therien-2002b}, since $M \in \FMVDA$, we
    have that $(xyz)^\omega y (xyz)^\omega = (xyz)^\omega$ for all
    $x, y, z \in M$, so that
    \[
	u' r (v' r s v)^\omega v' = u' (v' r s v)^\omega v' = u' v' = u
	\displaypunct{.}
    \]
    Therefore, we have $u' r \leq_R u$ and since $u v r = u' r$, we also have
    $u \leq_R u' r$, so that $u' r \sim_R u$ as claimed.

    The proof goes through symmetrically for $\sim_L$.
\end{proof}

Let $\Delta$ be a finite set of words and $\Sigma$ be the corresponding
alphabet, $\Delta$ being safe, and let $n \in \N$.
We are now going to prove the main technical lemma that allows us to assert that
after fixing a constant number of positions in the input of a program over $M$,
it can still be completed into a word of $\Delta^*$, but the program cannot make
the difference between any two possible completions anymore. To prove the lemma,
we define a relation $\prec$ on the set of quadruplets $(\lambda, P, u, v)$
where $\lambda$ is a mask of length $n$, $P$ is a program over $M$ for words of
length $n$ and $u$ and $v$ are two elements of $M$. We will say that an element
$(\lambda_1, P_1, u_1, v_1)$ is strictly smaller than
$(\lambda_2, P_2, u_2, v_2)$, written
$(\lambda_1, P_1, u_1, v_1) \prec (\lambda_2, P_2, u_2, v_2)$, if and only if
$\lambda_1$ is a submask of $\lambda_2$, $P_1$ is a subprogram of $P_2$ and one
of the following cases occurs:
\begin{enumerate}
    \item
	$u_2 <_R u_1$ and $v_1 = v_2$ and $P_1$ is a suffix of $P_2$ and
	$u_1 P_1(w) v_1 = u_2 P_2(w) v_2$ for all safe completions $w$ of
	$\lambda_1$;
    \item
	$v_2 <_L v_1$ and $u_1 = u_2$ and $P_1$ is a prefix of $P_2$ and
	$u_1 P_1(w) v_1 = u_2 P_2(w) v_2$ for all safe completions $w$ of
	$\lambda_1$;
    \item
	$u_2 = u_1$ and $v_1 = 1$ and $P_1$ is a prefix of $P_2$ and
	$u_1 P_1(w) v_1 <_J u_2 P_2(w) v_2$ for all safe completions $w$ of
	$\lambda_1$;
    \item
	$v_2 = v_1$ and $u_1 = 1$ and $P_1$ is a suffix of $P_2$ and
	$u_1 P_1(w) v_1 <_J u_2 P_2(w) v_2$ for all safe completions $w$ of
	$\lambda_1$.
\end{enumerate}
Note that, since $M$ is finite, this relation is well-founded (that is, it has
no infinite decreasing chain, an infinite sequence of quadruplets
$\mu_0, \mu_1, \mu_2, \ldots$ such that $\mu_{i + 1} \prec \mu_i$ for all
$i \in \N$) and the maximal length of any decreasing chain can be upper bounded
by $2 \cdot \card{M}^2$, that does only depend on $M$. For a given quadruplet
$\mu$, we shall also call its height the biggest $i \in \N$ such that there
exists a decreasing chain
$\mu_i \prec \mu_{i - 1} \prec \cdots \prec \mu_0 = \mu$.

The following lemma is the key to the proof. It shows that modulo fixing a few
entries, one can fix the output: to count the number of fixed positions for a
given mask $\lambda$, we denote by $\length{\lambda}_\Sigma$ the number of
letters in $\lambda$ belonging to $\Sigma$, that is to say, the number of fixed
positions in $\lambda$.

\begin{lemma}
\label{lemma-R}
    Let $\lambda$ be a $\Delta$-compatible mask of length $n$, let $P$ be a
    program over $M$ of range $n$, let $u$ and $v$ be elements of $M$ such that
    $(\lambda, P, u, v)$ is of height $h$.
    Then there is an element $t$ of $M$ and a $\Delta$-compatible submask
    $\lambda'$ of $\lambda$ verifying
    $\length{\lambda'}_\Sigma \leq
     (2^h 6 l)^{2^h} \cdot \max\set{\length{\lambda}_\Sigma, 1}$
    such that any safe completion $w$ of $\lambda'$ verifies $u P(w) v = t$.
\end{lemma}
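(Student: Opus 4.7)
I will proceed by well-founded induction on the quadruplet $(\lambda, P, u, v)$ under the relation $\prec$, whose well-foundedness and height bound $2\card{M}^2$ were noted before the statement. The plan is: given $(\lambda, P, u, v)$, I will first check whether $u P(w) v$ is already constant on all safe completions of $\lambda$, in which case I take $\lambda' = \lambda$ and $t$ equal to that common value; otherwise I will fix a bounded number of additional safe positions of $\lambda$ and split off a prefix or suffix of $P$ to produce a strictly $\prec$-smaller quadruplet, to which the induction hypothesis applies, returning the pair $(\lambda', t)$ it produces.

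For the reduction step, suppose two safe completions of $\lambda$ give different outputs. I will fix one witness $w_0$ and analyze the partial products $r_j = u \cdot f_1(w_{0, i_1}) \cdots f_j(w_{0, i_j})$ where $P = (i_1, f_1) \cdots (i_m, f_m)$; by the paper's convention these satisfy $r_0 \leq_R r_1 \leq_R \cdots \leq_R r_m$, and a symmetric analysis from the right yields a sequence $q_j$ descending in $\leq_L$. Lemma~\ref{lem:Green's_relations_DA}, the key $\FMVDA$-identity, implies that once $r_j$ enters the $\sim_R$-class of $r_m$, it remains in that class under arbitrary letter changes at subsequent free positions. My strategy will be to identify the first index at which the $\sim_R$-class strictly exceeds that of $u$, pin down the value of the corresponding prefix $P_1$ to a single $u_1$ with $u <_R u_1$ by fixing -- at non-dangerous positions, using $\Delta$-safety to preserve $\Delta$-compatibility -- the letters read by the instructions of $P_1$, and then apply case~1 of $\prec$ with the smaller quadruplet $(\lambda'', R, u_1, v)$ where $P = P_1 R$. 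The symmetric right-side construction will yield case~2. If neither a strict $\leq_R$-increase nor a strict $\leq_L$-decrease is available, a discriminating pair of safe completions will nevertheless force some intermediate partial product to lie strictly $<_J$ below $u P(w) v$, and fixing the offending letter will turn this into an instance of case~3 or~4.

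At each reduction step I will fix $O(l)$ new positions per currently fixed position -- the $2l - 2$-wide danger buffer around each fixed position being the source of the factor $6l$. Since the depth of recursion under $\prec$ is at most $2^h$ (the reduction branches into two sub-cases, one for the prefix side and one for the suffix side, and the induction hypothesis is invoked on the strictly smaller quadruplet on each branch), careful bookkeeping of the nested bounds will produce the claimed $(2^h \cdot 6l)^{2^h} \cdot \max\{\card{\lambda}_\Sigma, 1\}$.

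The hard part will be the reduction step itself: ensuring that after fixing only a bounded number of positions of $\lambda$, the equality $u_1 P_1(w) v_1 = u_2 P_2(w) v_2$ of cases~1--2 of $\prec$ (or the strict $<_J$ of cases~3--4) holds \emph{uniformly} over all remaining safe completions of the refined mask, not merely for the witness $w_0$. This uniformity will rest squarely on Lemma~\ref{lem:Green's_relations_DA}: once a prefix value stabilizes within a single $\sim_R$-class, subsequent multipliers behave identically on all $R$-equivalent elements, which is precisely where the assumption $M \in \FMVDA$ is used. A subtle bookkeeping point is that merely fixing positions of $\lambda$ without altering $P$, $u$, or $v$ never produces a $\prec$-smaller quadruplet -- none of the four defining cases permits this -- so the reduction must genuinely split $P$ into a prefix and a suffix and simultaneously update $u$ or $v$.
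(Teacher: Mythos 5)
Your high-level strategy is aligned with the paper's: well-founded induction on $\prec$, repeated use of Lemma~\ref{lem:Green's_relations_DA} to control how $R$- and $L$-classes evolve along the program, and a final appeal to aperiodicity (singleton $\sim_H$-classes) to extract equality in the terminal case. The identification of the first $R$-bad instruction as the pivot for the split $P = P_1 R$, and the remark that fixing positions alone can never decrease $\prec$, are also both correct and match the paper's design.

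The gap is in how you propose to ``pin down the value of the corresponding prefix $P_1$ to a single $u_1$'': you suggest doing this by fixing \emph{the letters read by the instructions of $P_1$}. This cannot work. Up to the first $R$-bad instruction, $P_1$ may still contain $\Omega(n)$ instructions reading $\Omega(n)$ distinct positions, so fixing all of them immediately destroys the $(2^h 6l)^{2^h}$-type bound on $\length{\lambda'}_\Sigma$. Moreover, Lemma~\ref{lem:Green's_relations_DA} only tells you that $u P_1(w)$ stays in the same $\sim_R$-class as $u$ on safe completions; it does not give you a constant value of $u P_1(w)$, and $\sim_R$ alone is not enough for aperiodicity to force equality (that requires $\sim_H$). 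So the ``uniformity'' you flag as the hard part really is the hard part, and Lemma~\ref{lem:Green's_relations_DA} by itself does not deliver it.

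What closes the gap in the paper is a second, \emph{independent} invocation of the induction hypothesis, this time for the prefix, using case~3 (resp.~4) of $\prec$. Precisely: after fixing only the single position read by the chosen $R$-bad instruction $(x,f)$, one observes that $u P'(w) <_J u P(w) v$ for all safe completions $w$ of the refined mask, where $P'$ is the prefix of $P$ strictly before that instruction. Hence $(\lambda', P', u, 1) \prec (\lambda, P, u, v)$ via case~3, so the induction hypothesis yields a mask $\lambda_1$ with boundedly many fixed positions on which $u P'(w)$ is constant, say equal to $t_1$. Only then does one form the quadruplet $(\lambda_1, P'', t_1 f(a), v)$ for the suffix $P''$, which is $\prec$-smaller via case~1 since $u <_R t_1 f(a)$, and apply induction a second time. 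Your brief parenthetical ``the induction hypothesis is invoked on the strictly smaller quadruplet on each branch'' gestures at this double recursion, but you never identify the $\prec$-smaller quadruplet for the prefix branch, and your explicit plan (fix all positions of $P_1$) contradicts it. Without the case~3/4 trick -- setting $v_1 = 1$ (resp.\ $u_1 = 1$) and exploiting the strict $<_J$ drop -- the prefix value cannot be stabilized with only a bounded number of newly fixed positions, and the proof does not go through.
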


\begin{proof}
The proof goes by induction on the height $h$.

Let $\lambda$ be a $\Delta$-compatible mask of length $n$, let $P$ be a program
over $M$ for words of length $n$, let $u$ and $v$ be elements of $M$ such that
$(\lambda, P, u, v)$ is of height $h$, and assume that for any quadruplet
$(\lambda', P', u', v')$ strictly smaller than $(\lambda, P, u, v)$, the lemma
is verified. Consider the following conditions concerning the quadruplet
$(\lambda, P, u, v)$:
\begin{enumerate}[label=(\alph*)]
    \item\label{ctn:minimality_1}
	there does not exist any instruction $(x, f)$ of $P$ such that for some
	letter $a$ the submask $\lambda'$ of $\lambda$ formed by setting
	position $x$ to $a$ is $\Delta$-compatible and $f(a)$ is $R$-bad for $u$;
    \item\label{ctn:minimality_2}
	$v$ is not $R$-bad for $u$;
    \item\label{ctn:minimality_3}
	there does not exist any instruction $(x, f)$ of $P$ such that for some
	letter $a$ the submask $\lambda'$ of $\lambda$ formed by setting
	position $x$ to $a$ is
	$\Delta$-compatible and $f(a)$ is $L$-bad for $v$;
    \item\label{ctn:minimality_4}
	$u$ is not $L$-bad for $v$.
\end{enumerate}

We will now do a case analysis based on which of these conditions are violated
or not.
  \begin{itemize}[align=left]
    \item[Case 1:] condition~\ref{ctn:minimality_1} is violated. So there exists
	some instruction $(x, f)$ of $P$ such that for some letter $a$ the
	submask $\lambda'$ of $\lambda$ formed by setting position $x$ to $a$
	(if it wasn't already the case) is $\Delta$-compatible and $f(a)$ is
	$R$-bad for $u$. Let $i$ be the smallest number of such an instruction.

	Let $P'$ be the subprogram of $P$ until, and including, instruction $i - 1$. Let $w$ be
        a safe completion of $\lambda$. For any instruction $(y, g)$ of $P'$,
        as $y<i$, $g(w_y)$ cannot be $R$-bad for $u$, so $u \sim_R u g(w_y)$. Hence, by
	Lemma~\ref{lem:Green's_relations_DA}, $u \sim_R u P'(w)$ for all safe
	completions $w$ of $\lambda$.

	So, because $f(a)$ is $R$-bad for $u$, any safe completion $w$ of
	$\lambda'$, which is also a safe completion of $\lambda$, is such that
	$u \sim_R u P'(w) <_R u P'(w) f(a) \leq_R u P(w) v$ by
	Lemma~\ref{lem:Green's_relations_DA}, hence $u P'(w) <_J u P(w) v$ by
	Lemma~\ref{lem:Green's_relations_link}.
	So $(\lambda', P', u, 1) \prec (\lambda, P, u, v)$, therefore, by
	induction we get a $\Delta$-compatible submask $\lambda_1$ of $\lambda'$
	and a monoid element $t_1$ such that $u P'(w) = t_1$ for all safe
	completions $w$ of $\lambda_1$.

	Let $P''$ be the subprogram of $P$ starting from instruction $i + 1$.
	Notice that, since $u \sim_R t_1$ (by what we have proven just above),
	$u <_R t_1 f(a)$ (by Lemma~\ref{lem:Green's_relations_DA}) and
	$t_1 f(a) P''(w) v = u P'(w) f(a) P''(w) v = u P(w) v$ for all safe
	completions $w$ of $\lambda_1$.
	Hence, $(\lambda_1, P'', t_1 f(a), v)$ is strictly smaller than
	$(\lambda, P, u, v)$ and by induction we get a $\Delta$-compatible
	submask $\lambda_2$ of $\lambda_1$ and a monoid element $t$ such that
	$t_1f(a) P''(w) v = t$ for all safe completions $w$ of $\lambda_2$.

	Thus, any safe completion $w$ of $\lambda_2$ is such that
	\[
	    uP(w)v=uP'(w)f(a)P''(w)v=t_1f(a) P''(w) v =t
	    \displaypunct{.}
	\]
	Therefore $\lambda_2$ and $t$ form the desired couple of a
	$\Delta$-compatible submask of $\lambda$ and an element of $M$. We still
	have to show that $\length{\lambda_2}_\Sigma$ satisfies the desired
	upper bound.

	By induction, since $(\lambda', P', u, 1)$ is of height $h' \leq h - 1$,
	we have
	\[
	    \length{\lambda_1}_\Sigma \leq
	    (2^{h'} 6 l)^{2^{h'}} \cdot \length{\lambda'}_\Sigma
	    \leq
	    (2^{h - 1} 6 l)^{2^{h - 1}} \cdot
	    \length{\lambda'}_\Sigma
	    \displaypunct{.}
	\]
	Consequently, by induction again, as $(\lambda_1, P'', t_1 f(a), v)$ is
	of height $h'' \leq h - 1$, we have
	\begin{align*}
	    \length{\lambda_2}_\Sigma
	    & \leq (2^{h''} 6 l)^{2^{h''}} \cdot
		   \length{\lambda_1}_\Sigma\\
	    & \leq (2^{h - 1} 6 l)^{2^{h - 1}} \cdot
		   \length{\lambda_1}_\Sigma\\
	    & \leq (2^{h - 1} 6 l)^{2^{h - 1}} \cdot
		   (2^{h - 1} 6 l)^{2^{h - 1}} \cdot
		   \length{\lambda'}_\Sigma\\
	    & = (2^{h - 1} 6 l)^{2^h} \cdot
		\length{\lambda'}_\Sigma
	    \displaypunct{.}
	\end{align*}
	Moreover, it holds that
	$\length{\lambda'}_\Sigma \leq \length{\lambda}_\Sigma + 1$, so that
	\begin{align*}
	    \length{\lambda_2}_\Sigma
	    & \leq (2^{h - 1} 6 l)^{2^h} \cdot (\length{\lambda}_\Sigma + 1)\\
	    & \leq (2^{h - 1} 6 l)^{2^h} \cdot 2^{2^h} \cdot
		   \max\set{\length{\lambda}_\Sigma, 1}\\
	    & = (2^h 6 l)^{2^h} \cdot \max\set{\length{\lambda}_\Sigma, 1}
	    \displaypunct{.}
	\end{align*}

    \item[Case 2:] condition~\ref{ctn:minimality_1} is verified but
	condition~\ref{ctn:minimality_2} is violated, so $v$ is $R$-bad for $u$
	and Case~1 does not apply.

	Let $w$ be a safe completion of $\lambda$: for any instruction $(x, f)$
	of $P$, as the submask $\lambda'$ of $\lambda$ formed by setting
	position $x$ to $w_x$  is
	$\Delta$-compatible (by the fact that $\Delta$ is safe and $w$ is a safe
	completion of $\lambda$), $f(w_x)$ cannot be $R$-bad for $u$, otherwise
	condition~\ref{ctn:minimality_1} would be violated, so
	$u \sim_R u f(w_x)$.
	Hence, by Lemma~\ref{lem:Green's_relations_DA}, $u \sim_R u P(w)$ for
	all safe completions $w$ of $\lambda$.
	Notice then that $u \sim_R u P(w) <_R u P(w) v$ (by
	Lemma~\ref{lem:Green's_relations_DA}), hence $u P(w) <_J u P(w) v$
	(by Lemma~\ref{lem:Green's_relations_link}) for all safe completions $w$
	of $\lambda$. So $(\lambda, P, u, 1) \prec (\lambda, P, u, v)$,
	therefore we obtain by induction a monoid element $t_1$ and a $\Delta$-compatible
	submask $\lambda'$ of $\lambda$ such that $u P(w) = t_1$
	for all completions $w$ of $\lambda'$. If we set $t = t_1 v$, we get
	that any safe completion $w$ of $\lambda'$ is such that
	$u P(w) v = t_1 v = t$. Therefore $\lambda'$ and $t$ form the desired
	couple of a $\Delta$-compatible submask of $\lambda$ and an element of
	$M$.

	Moreover, by induction, since $(\lambda, P, u, 1)$ is of height
	$h' \leq h - 1$, we have
	\[
	    \length{\lambda'}_\Sigma \leq
	    (2^{h'} 6 l)^{2^{h'}} \cdot \max\set{\length{\lambda}_\Sigma, 1}
	    \leq (2^h 6 l)^{2^h} \cdot \max\set{\length{\lambda}_\Sigma, 1}
	    \displaypunct{,}
	\]
	the desired upper bound.

    \item[Case 3:] condition~\ref{ctn:minimality_3} is violated. So there exists
	some instruction $(x, f)$ of $P$ such that for some letter $a$ the
	submask $\lambda'$ of $\lambda$ formed by setting position $x$ to $a$
	(if it wasn't already the case) is $\Delta$-compatible and $f(a)$ is
	$L$-bad for $v$.

	We proceed as for Case~1 by symmetry.

    \item[Case 4:] condition~\ref{ctn:minimality_3} is verified but
	condition~\ref{ctn:minimality_4} is violated, so $u$ is $L$-bad for $v$
	and Case~3 does not apply.

	We proceed as for Case~2 by symmetry.

    \item[Case 5:] conditions~\ref{ctn:minimality_1}, \ref{ctn:minimality_2},
	\ref{ctn:minimality_3} and \ref{ctn:minimality_4} are verified.

	As it was in Case~2 and Case~4, using
	Lemma~\ref{lem:Green's_relations_DA}, the fact that
	condition~\ref{ctn:minimality_1} and condition~\ref{ctn:minimality_3}
	are verified implies that $u \sim_R u P'(w)$ and $v \sim_L P''(w) v$ for
	any prefix $P'$ of $P$, any suffix $P''$ of $P$ and all safe completions
	$w$ of $\lambda$. Moreover, since condition~\ref{ctn:minimality_2} and
	condition~\ref{ctn:minimality_4} are verified, by
	Lemma~\ref{lem:Green's_relations_DA}, we get that $u P(w) v \sim_R u$
	and $u P(w) v \sim_L v$ for all safe completions $w$ of $\lambda$. This
	implies that $(\lambda, P, u, v)$ is minimal for $\prec$ and that
	$h = 0$.

	Let $w_0$ be a completion of $\lambda$ that is in $\Delta^*$. Let
	$\lambda'$ be the submask of $\lambda$ fixing all free dangerous
	positions of $\lambda$ using $w_0$ and let $t = u P(w_0) v$. Then, for
	any completion $w$ of $\lambda'$, which is a safe completion of
	$\lambda$ by construction, we have that $u P(w) v \sim_R u \sim_R t$ and
	$u P(w) v \sim_L v \sim_L t$. Thus, $u P(w) v \sim_H t$ for any
	completion $w$ of $\lambda'$.
	As $M$ is aperiodic, this implies that $u P(w) v = t$ for all
	completions $w$ of $\lambda'$
	(see~\cite[Chapter 3, Proposition 4.2]{Books/Pin-1986}).
	Therefore $\lambda'$ and $t$ form the desired couple of a
	$\Delta$-compatible submask of $\lambda$ and an element of $M$.

	Now, since the number of free positions of $\lambda$ fixed in
	$\lambda'$, i.e. $\length{\lambda'}_\Sigma - \length{\lambda}_\Sigma$,
	is exactly the number of free dangerous positions in $\lambda$, and as a
	position in $\lambda$ is dangerous if it is within distance $2 l - 2$ of
	a fixed position or within distance $l - 1$ of the beginning or the end
	of $\lambda$, we have
	\begin{align*}
	    \length{\lambda'}_\Sigma
	    \leq 2 \cdot \length{\lambda}_\Sigma \cdot (2l - 2) +
		 2 \cdot (l - 1) + \length{\lambda}_\Sigma
	    & = \length{\lambda}_\Sigma \cdot (4l - 3) + 2l - 2\\
	    & \leq (2^0 6 l)^{2^0} \cdot \max\set{\length{\lambda}_\Sigma, 1}
	    \displaypunct{,}
	\end{align*}
	the desired upper bound.
\end{itemize}

This concludes the proof of the lemma.  
\end{proof}

Setting $\Delta = \set{c, ab}$ or $\Delta = \set{b, ab}$ with $\Sigma$ the
associated alphabet, when applying Lemma~\ref{lemma-R} with the trivial
$\Delta$-compatible mask $\lambda$ of length $n$ containing only free positions,
with $P$ some program over $M$ of range $n$ and with $u$ and $v$ equal to $1$,
the resulting mask $\lambda'$ has the property that we have an element $t$ of
$M$ such that $P(w)=t$ for any safe completion $w$ of $\lambda'$.
Since the mask $\lambda'$ is $\Delta$-compatible and has a number of fixed
positions upper-bounded by $(2^h 6 l)^{2^h}$ where $h$ is the height of
$(\lambda, P, u, v)$, itself upper-bounded by $2 \cdot \card{M}^2$, as long as
$n$ is big enough, we have a safe completion $w_0 \in \Delta^*$ and a safe
completion $w_1 \notin \Delta^*$. Hence, $P$ cannot be part of any sequence of
programs \pr-recognizing $\Delta^*$. This implies that
$(c + ab)^* \notin \Prog{M}$ and $(b + ab)^* \notin \Prog{M}$.
Finally, for any $k \in \N, k \geq 2$, we can prove that
$b^* ((a b^*)^k)^* \notin \Prog{M}$ by setting $\Delta = \set{a, b}$ and
completing the mask given by the lemma by setting the letters in such a way that
we have the right number of $a$ modulo $k$ in one case and not in the other
case.

This concludes the proof of Proposition~\ref{prop-main-PDA} because the argument
above holds for any monoid in $\FMVDA$.

\section{A fine hierarchy in \texorpdfstring{$\Prog{\FMVDA}$}{P(DA)}}
\label{sec:hier}

The definition of \pr-recognition by a sequence of programs over a monoid
given in Section~\ref{sec:Preliminaries} requires that for each $n$, the program
reading the entries of length $n$ has a length polynomial in $n$. In the case of
$\Prog{\FMVDA}$, the polynomial-length restriction is superfluous: any program
over a monoid in $\FMVDA$ is equivalent to one of polynomial length over the
same monoid~\cite{Tesson-Therien-2002a} (in the sense that they recognize the
same languages).
In this section, we show that this does not collapse further: in the case of
$\FMVDA$, programs of length $\Omicron(n^{k + 1})$ express strictly more than
those of length $\Omicron(n^k)$.

Following~\cite{Gavalda-Therien-2003}, we use an alternative definition of
the languages recognized by a monoid in $\FMVDA$. We define by induction a
hierarchy of classes of languages $\LVSUM[k]$, where $\LVSUM$ stands for
\emph{strongly unambiguous monomial}. A language $L$ is in $\LVSUM[0]$ if it is
of the form $A^*$ for some alphabet $A$. A language $L$ is in $\LVSUM[k]$ for
$k \in \N_{>0}$ if it is in $\LVSUM[k - 1]$ or $L = L_1 a L_2$ for some
languages $L_1 \in \LVSUM[i]$ and $L_2 \in \LVSUM[j]$ and some letter $a$ with
$i + j = k - 1$ such that no word of $L_1$ contains the letter $a$ or no word of
$L_2$ contains the letter $a$.

Gavaldà and Thérien stated without proof that a language $L$ is recognized by a
monoid in $\FMVDA$ iff there is a $k \in \N$ such that $L$ is a Boolean
combination of languages in $\LVSUM[k]$~\cite{Gavalda-Therien-2003}
(see~\cite[Theorem 4.1.9]{PhD_thesis/Grosshans} for a proof).
For each $k \in \N$, we denote by $\FMVDA[k]$ the variety of monoids generated
by the syntactic monoids of the Boolean combinations of languages in
$\LVSUM[k]$.
It can be checked that, for each $k$, $\FMVDA[k]$ forms a variety of monoids
recognizing precisely Boolean combinations of languages in $\LVSUM[k]$: this is
what we do in the first subsection.

In the two following subsections, we then give a fine program-length-based
hierarchy within $\Prog{\FMVDA}$ for this parametrization of $\FMVDA$.

\subsection{A parametrization of \texorpdfstring{$\FMVDA$}{DA}}

For each $k \in \N$, we denote by $\LVSUL[k]$ the class of regular languages
that are Boolean combinations of languages in $\LVSUM[k]$; it is a variety of
languages as shown just below. But as $\FMVDA[k]$ is the variety of monoids
generated by the syntactic monoids of the languages in $\LVSUL[k]$, by
Eilenberg's theorem, we know that, conversely, all the regular languages whose
syntactic monoids lie in $\FMVDA[k]$ are in $\LVSUL[k]$.

Back to the fact that $\LVSUL[k]$ is a variety of languages for any $k \in \N$.
Closure under Boolean operations is obvious by construction. Closure under
quotients and inverses of morphisms is respectively given by the following two
lemmas and by the fact that both quotients and inverses of morphisms commute
with Boolean operations.

Given a word $u$ over a given alphabet $\Sigma$, we will denote by
$\alphabet(u)$ the set of letters of $\Sigma$ that appear in $u$.

\begin{lemma}
\label{lem:SUP_k_closure_quotients}
    For all $k \in \N$, for all $L \in \LVSUM[k]$ over an alphabet $\Sigma$ and
    $u \in \Sigma^*$, $u^{-1} L$ and $L u^{-1}$ both are unions of languages in
    $\LVSUM[k]$ over $\Sigma$.
\end{lemma}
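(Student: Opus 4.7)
My approach is induction on $k$. The base case $k = 0$ is trivial: any $L \in \LVSUM[0]$ has the form $A^*$, and $u^{-1}A^*$ equals $A^*$ when $u \in A^*$ and is empty (i.e.\ the empty union) otherwise; the right quotient $A^* u^{-1}$ is handled identically. For the inductive step, if $L$ already lies in $\LVSUM[k - 1]$, then the induction hypothesis together with the inclusion $\LVSUM[k - 1] \subseteq \LVSUM[k]$ directly yields the claim, so I assume $L = L_1 a L_2$ with $L_1 \in \LVSUM[i]$, $L_2 \in \LVSUM[j]$, $i + j = k - 1$, and the strong-unambiguity condition: either $a \notin \alphabet(L_1)$ or $a \notin \alphabet(L_2)$. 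The two situations need to be treated separately since they exploit the uniqueness of a factorisation $uw = xay$ (with $x \in L_1$, $y \in L_2$) differently: the first pins down $x$ as the prefix ending just before the \emph{first} $a$ of $uw$, while the second pins down $y$ as the suffix starting just after the \emph{last} $a$ of $uw$.

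Assume first that $a \notin \alphabet(L_1)$, and split on whether $u$ contains $a$. If yes, write $u = u_1 a u_2$ with $a \notin \alphabet(u_1)$; the first $a$ of $uw$ is the one explicitly shown, forcing $x = u_1$ and $y = u_2 w$, so $u^{-1}L$ equals $u_2^{-1}L_2$ when $u_1 \in L_1$ and is empty otherwise, and the induction hypothesis applied to $L_2 \in \LVSUM[j] \subseteq \LVSUM[k]$ concludes. If $u$ has no $a$, then $u$ must be a prefix of $x$, and a direct calculation gives $u^{-1}L = (u^{-1}L_1) a L_2$; by induction $u^{-1}L_1$ is a finite union of languages $L_1^{(r)} \in \LVSUM[i]$, and each $L_1^{(r)}$ inherits $a \notin \alphabet(L_1^{(r)})$ from $L_1$ (since in this subcase $u$ itself has no $a$ and no word of $L_1$ contains $a$), so each $L_1^{(r)} a L_2$ is again in $\LVSUM[i + j + 1] = \LVSUM[k]$ by definition. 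The mirrored case $a \notin \alphabet(L_2)$ proceeds by an analogous split on $u$ but now using the last $a$: when $u$ contains $a$, writing $u = u_1 a u_2$ with $a \notin \alphabet(u_2)$, one obtains $u^{-1}L = (u^{-1}L_1) a L_2 \cup u_2^{-1}L_2$ (the second summand appearing only when $u_1 \in L_1$), and when $u$ avoids $a$ only the first summand survives; in both situations each product $L_1^{(r)} a L_2$ is automatically in $\LVSUM[k]$ because the strong-unambiguity witness is carried by $L_2$ itself.

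The right quotient $L u^{-1}$ follows from the completely symmetric argument, with the roles of the prefix $u$ and the corresponding piece of the factorisation interchanged. The one point that requires real attention in writing this up is the bookkeeping of the strong-unambiguity witness at each step: in the first subcase it must be propagated from $L_1$ to the subsets of $u^{-1}L_1$, which works because the condition $a \notin \alphabet(L_1)$ is a property of the \emph{words} of $L_1$ and hence passes to any subset, while in the second subcase the witness is simply carried by $L_2$ itself and survives the quotient operation unchanged.
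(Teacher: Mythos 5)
Your proof is correct and follows essentially the same route as the paper's: induction on $k$, a case split on whether $a$ occurs in $u$, the explicit quotient formulas in each case, and the observation that the unambiguity witness propagates to the components of $u^{-1}L_1$ (respectively is carried unchanged by $L_2$). The only difference is organizational — you treat both unambiguity cases for $u^{-1}L$ and dispatch $Lu^{-1}$ by symmetry, whereas the paper treats both quotients in the subcase $a \notin \alphabet(L_1)$ and dispatches the other unambiguity case by symmetry — but this has no mathematical significance.
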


\begin{proof}
    We prove it by induction on $k$.

    \paragraph{Base case: $k = 0$.}
    Let $L \in \LVSUM[0]$ over an alphabet $\Sigma$ and $u \in \Sigma^*$.
    This means that $L = A^*$ for some $A \subseteq \Sigma$. We have two cases:
    either $\alphabet(u) \nsubseteq A$ and then
    $u^{-1} L = L u^{-1} = \emptyset$; or $\alphabet(u) \subseteq A$ and then
    $u^{-1} L = L u^{-1} = A^* = L$. So $u^{-1} L$ and $L u^{-1}$ both are
    unions of languages in $\LVSUM[0]$ over $\Sigma$.
    The base case is hence proved.

    \paragraph{Inductive step.}
    Let $k \in \N_{>0}$ and assume that the lemma is true for all
    $k' \in \N, k' < k$.

    Let $L \in \LVSUM[k]$ over an alphabet $\Sigma$ and $u \in \Sigma^*$.
    This means that either $L$ is in $\LVSUM[k - 1]$ and the lemma is proved by
    applying the inductive hypothesis directly for $L$ and $u$, or
    $L = L_1 a L_2$ for some languages $L_1 \in \LVSUM[i]$ and
    $L_2 \in \LVSUM[j]$ and some letter $a \in \Sigma$ with $i + j = k - 1$ and,
    either no word of $L_1$ contains the letter $a$ or no word of $L_2$ contains
    the letter $a$.
    We shall only treat the case in which $a$ does not appear in any of the
    words of $L_1$; the other case is treated symmetrically.

    There are again two cases to consider, depending on whether $a$ does appear
    in $u$ or not.

    If $a \notin \alphabet(u)$, then it is straightforward to check that
    $u^{-1} L = (u^{-1} L_1) a L_2$ and $L u^{-1} = L_1 a (L_2 u^{-1})$.
    By the inductive hypothesis, we get that $u^{-1} L_1$ is a union of
    languages in $\LVSUM[i]$ over $\Sigma$ and that $L_2 u^{-1}$ is a union of
    languages in $\LVSUM[j]$ over $\Sigma$. Moreover, it is direct to see that
    no word of $u^{-1} L_1$ contains the letter $a$.
    By distributivity of concatenation over union, we finally get that
    $u^{-1} L$ and $L u^{-1}$ both are unions of languages in $\LVSUM[k]$
    over~$\Sigma$.

    If $a \in \alphabet(u)$, then let $u = u_1 a u_2$ with
    $u_1, u_2 \in \Sigma^*$ and $a \notin \alphabet(u_1)$. It is again
    straightforward to see that
    \[
	u^{-1} L = \begin{cases}
		   {u_2}^{-1} L_2 & \text{if $u_1 \in L_1$}\\
		   \emptyset & \text{otherwise}
		   \end{cases}
    \]
    and
    \[
	L u^{-1} = L_1 a (L_2 u^{-1}) \cup
		   \begin{cases}
		   L_1 {u_1}^{-1} & \text{if $u_2 \in L_2$}\\
		   \emptyset & \text{otherwise}
		   \end{cases}
	\displaypunct{.}
    \]
    As before, by the inductive hypothesis, we get that $L_1 {u_1}^{-1}$ is a
    union of languages in $\LVSUM[i]$ over $\Sigma$ and that both
    ${u_2}^{-1} L_2$ and $L_2 u^{-1}$ are unions of languages in $\LVSUM[j]$
    over $\Sigma$. And, again, by distributivity of concatenation over union, we
    get that $u^{-1} L$ and $L u^{-1}$ both are a union of languages in
    $\LVSUM[k]$ over~$\Sigma$.

    This concludes the inductive step and therefore the proof of the
    lemma.
\end{proof}

\begin{lemma}
    For all $k \in \N$, for all $L \in \LVSUM[k]$ over an alphabet $\Sigma$ and
    $\varphi\colon \Gamma^* \to \Sigma^*$ a morphism where $\Gamma$ is another
    alphabet, $\varphi^{-1}(L)$ is a union of languages in $\LVSUM[k]$ over
    $\Gamma$.
\end{lemma}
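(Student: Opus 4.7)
The plan is to proceed by induction on $k$, in parallel with the structure of the preceding lemma on quotients, which will be invoked to handle residuals of $L_1$ and $L_2$.

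For the base case $k=0$ we have $L = A^*$ for some $A \subseteq \Sigma$. Setting $B = \{b \in \Gamma : \alphabet(\varphi(b)) \subseteq A\}$, it is immediate that $\varphi^{-1}(A^*) = B^*$, which lies in $\LVSUM[0]$ over $\Gamma$, so the claim holds.

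For the inductive step, I would fix $k \geq 1$ and $L \in \LVSUM[k]$. If $L \in \LVSUM[k-1]$ the result follows from the inductive hypothesis and the inclusion $\LVSUM[k-1] \subseteq \LVSUM[k]$. Otherwise $L = L_1 a L_2$ with $L_1 \in \LVSUM[i]$, $L_2 \in \LVSUM[j]$, $i + j = k - 1$, and one of $L_1, L_2$ avoiding the letter $a$. I treat the case where no word of $L_1$ contains $a$; the other is symmetric by considering the last $a$ instead of the first. Let $B = \{b \in \Gamma : a \in \alphabet(\varphi(b))\}$ and $\Gamma' = \Gamma \setminus B$. For each $b \in B$, write $\varphi(b) = x_b a y_b$ where $x_b$ is the (unique) prefix of $\varphi(b)$ ending just before the first occurrence of $a$, so in particular $x_b$ contains no $a$. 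The key structural observation is that $w \in \varphi^{-1}(L_1 a L_2)$ if and only if $w$ factors uniquely as $w = u b v$ with $b \in B$, $u \in (\Gamma')^*$ (namely $b$ is the first letter of $w$ whose image contains $a$), $\varphi(u) x_b \in L_1$, and $y_b \varphi(v) \in L_2$. This yields
\[
    \varphi^{-1}(L) = \bigcup_{b \in B}
    \bigl((\Gamma')^* \cap \varphi^{-1}(L_1 x_b^{-1})\bigr) \cdot b \cdot
    \varphi^{-1}(y_b^{-1} L_2) \displaypunct{.}
\]

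Now I apply Lemma~\ref{lem:SUP_k_closure_quotients}: $L_1 x_b^{-1}$ is a union of languages in $\LVSUM[i]$ and $y_b^{-1} L_2$ is a union of languages in $\LVSUM[j]$. The inductive hypothesis then gives that $\varphi^{-1}(L_1 x_b^{-1}) = \bigcup_\alpha A_\alpha$ with each $A_\alpha \in \LVSUM[i]$ and $\varphi^{-1}(y_b^{-1} L_2) = \bigcup_\beta C_\beta$ with each $C_\beta \in \LVSUM[j]$. Since $L_1 x_b^{-1}$ is contained in the language of $a$-free words over $\Sigma$ (as both $L_1$ and $x_b$ avoid $a$), every $A_\alpha$ is contained in $(\Gamma')^*$; the intersection with $(\Gamma')^*$ is therefore superfluous. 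Distributing the union over concatenation yields
\[
    \varphi^{-1}(L) = \bigcup_{b \in B} \bigcup_{\alpha, \beta}
    A_\alpha \cdot b \cdot C_\beta \displaypunct{,}
\]
and since $b \notin \Gamma'$ we have that no word of $A_\alpha$ contains $b$, so each $A_\alpha \cdot b \cdot C_\beta$ lies in $\LVSUM[i+j+1] = \LVSUM[k]$ by definition, completing the induction.

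The main subtlety, and the step I would be most careful with, is the uniqueness of the decomposition $w = ubv$: this requires the hypothesis that $L_1$ avoids $a$, so that in $\varphi(w) \in L_1 a L_2$ the distinguished $a$ must be the first $a$ of $\varphi(w)$, which in turn sits inside the first $b \in B$ appearing in $w$. This is also what forces the cleanly matching side condition ``$A_\alpha$ avoids $b$'' needed for the concatenation to lie in $\LVSUM[k]$; without the no-$a$ hypothesis on $L_1$ the decomposition would not yield a $\LVSUM[k]$ language but rather a more complex object.
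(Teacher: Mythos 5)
Your proof is correct and follows essentially the same route as the paper: the same induction, the same decomposition over the first letter of $\Gamma$ whose image contains $a$, the same appeal to the quotient lemma, and the same distribution of concatenation over union. The only cosmetic difference is that you first intersect with $(\Gamma')^*$ and then observe the intersection is redundant, whereas the paper writes the identity directly without it; the underlying reason is the same in both cases.
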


\begin{proof}
    We prove it by induction on $k$.

    \paragraph{Base case: $k = 0$.}
    Let $L \in \LVSUM[0]$ over an alphabet $\Sigma$ and
    $\varphi\colon \Gamma^* \to \Sigma^*$ a morphism where $\Gamma$ is another
    alphabet. This means that $L = A^*$ for some $A \subseteq \Sigma$. It is
    straightforward to check that $\varphi^{-1}(L) = B^*$ where
    $B = \set{b \in \Gamma \mid \varphi(b) \in A^*}$. $B^*$ is certainly a union
    of languages in $\LVSUM[0]$ over $\Sigma$.
    The base case is hence proved.

    \paragraph{Inductive step.}
    Let $k \in \N_{>0}$ and assume that the lemma is true for all
    $k' \in \N, k' < k$.
	    
    Let $L \in \LVSUM[k]$ over an alphabet $\Sigma$ and
    $\varphi\colon \Gamma^* \to \Sigma^*$ a morphism where $\Gamma$ is another
    alphabet. This means that either $L$ is in $\LVSUM[k - 1]$ and the lemma is
    proved by applying the inductive hypothesis directly for $L$ and $\varphi$,
    or $L = L_1 a L_2$ for some languages $L_1 \in \LVSUM[i]$ and
    $L_2 \in \LVSUM[j]$ and some letter $a \in \Sigma$ with $i + j = k - 1$ and,
    either no word of $L_1$ contains the letter $a$ or no word of $L_2$ contains
    the letter $a$. We shall only treat the case in which $a$ does not appear in
    any of the words of $L_1$; the other case is treated symmetrically.

    Let us define $B = \set{b \in \Gamma \mid a \in \alphabet(\varphi(b))}$ as
    the set of letters of $\Gamma$ whose image word by $\varphi$ contains the
    letter $a$. For each $b \in B$, we shall also let
    $\varphi(b) = u_{b, 1} a u_{b, 2}$ with $u_{b, 1}, u_{b, 2} \in \Sigma^*$
    and $a \notin \alphabet(u_{b, 1})$.
    It is not too difficult to see that we then have
    \[
	\varphi^{-1}(L) =
	\bigcup_{b \in B} \varphi^{-1}(L_1 {u_{b, 1}}^{-1}) b
			  \varphi^{-1}({u_{b, 2}}^{-1} L_2)
	\displaypunct{.}
    \]
    By the inductive hypothesis, by Lemma~\ref{lem:SUP_k_closure_quotients} and
    by the fact that inverses of morphisms commute with unions, we get that
    $\varphi^{-1}(L_1 {u_{b, 1}}^{-1})$ is a union of languages in $\LVSUM[i]$
    over $\Gamma$ and that $\varphi^{-1}({u_{b, 2}}^{-1} L_2)$ is a union of
    languages in $\LVSUM[j]$ over $\Gamma$. Moreover, it is direct to see that
    no word of $\varphi^{-1}(L_1 {u_{b, 1}}^{-1})$ contains the letter $b$ for
    all $b \in B$.
    By distributivity of concatenation over union, we finally get that
    $\varphi^{-1}(L)$ is a union of languages in $\LVSUM[k]$ over~$\Gamma$.

    This concludes the inductive step and therefore the proof of the lemma.
\end{proof}

\subsection{Strict hierarchy}

For each $k$ we show there exists a language $L_k \subseteq \set{0, 1}^*$ that
can be recognized by a sequence of programs of length $\Omicron(n^k)$ over a
monoid $M_k$ in $\FMVDA[k]$ but cannot be recognized by any sequence of programs
of length $\Omicron(n^{k - 1})$ over any monoid in $\FMVDA$.

For a given $k \in \N_{>0}$, the language $L_k$ expresses a property of the
first $k$ occurrences of $1$ in the input word. To define $L_k$ we say that $S$
is a \emph{$k$-set over $n$} for some $n \in \N$ if $S$ is a set where each
element is an ordered tuple of $k$ distinct elements of $[n]$.
For any sequence $\Delta=(S_n)_{n\in\N}$ of $k$-sets over $n$, we set
$L_\Delta=\bigcup_{n\in\N} K_{n,S_n}$, where for each $n \in \N$, $K_{n,S_n}$ is
the set of words over $\set{0, 1}$ of length $n$ such that for each of them, it
contains at least $k$ occurrences of $1$ and the ordered $k$-tuple of the
positions of the first $k$ occurrences of $1$ belongs to $S_n$.

On the one hand, we show that for all $k$ there is a monoid $M_k$ in $\FMVDA[k]$
such that for all $\Delta$ the language $L_\Delta$ is recognized by a sequence
of programs over $M_k$ of length $\Omicron(n^k)$. The proof is done by an
inductive argument on $k$.

On the other hand, we show that for all $k$ there is a $\Delta$ such that for
any finite monoid $M$ and any sequence of programs $(P_n)_{n \in \N}$ over $M$
of length $\Omicron(n^{k-1})$, $L_\Delta$ is not recognized by
$(P_n)_{n \in \N}$.
This is done using a counting argument: for some monoid size $i$, for $n$ big
enough, the number of languages in $\set{0, 1}^n$ recognized by a program over
some monoid of size $i$ of length at most $\alpha \cdot n^{k - 1}$ for $\alpha$
some constant is upper-bounded by a number that turns out to be asymptotically
smaller than the number of different possible $K_{n, S_n}$.

\paragraph*{Upper bound.}
We start with the upper bound. Notice that for some $k \in \N_{>0}$ and
$\Delta=(S_n)_{n\in\N}$, the language of words of length $n$ of $L_\Delta$ is
exactly $K_{n,S_n}$. Hence the fact that $L_\Delta$ can be recognized by a
sequence of programs over a monoid in $\FMVDA[k]$ of length $\Omicron(n^k)$ is a
consequence of the following proposition.

\begin{proposition}\label{prop-upper-bound}
    For all $k \in \N_{>0}$ there is a monoid $M_k \in \FMVDA[k]$ such that for
    all $n \in \N$ and all $k$-sets $S_n$ over $n$, the language $K_{n, S_n}$ is
    recognized by a program over $M_k$ of length at most $4 n^k$.
\end{proposition}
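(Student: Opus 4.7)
The plan is to prove Proposition~\ref{prop-upper-bound} by induction on $k$, simultaneously producing the monoid $M_k$ and, for each pair $(n, S_n)$, an explicit program over $M_k$. The key structural observation, which mirrors the $\LVSUM[k]$ shape $L_1 \cdot a \cdot L_2$, is that $K_{n, S_n}$ splits according to the position of the first occurrence of $1$: setting $T_i = \set{(j_1, \ldots, j_{k-1}) : (i, j_1, \ldots, j_{k-1}) \in S_n}$ (viewed as a $(k-1)$-set over $\intinterval{i+1}{n}$), one has
\[
    K_{n, S_n} \cap \set{0, 1}^n =
    \bigcup_{i \in \intinterval{1}{n}}
    \set{0}^{i - 1}\, \set{1}\, K'_{n - i, T_i}
\]
where $K'_{n - i, T_i}$ denotes the analogue of $K_{\cdot, \cdot}$ for $k - 1$ positions applied to the suffix $w_{i + 1} \cdots w_n$.

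For the base case $k = 1$, I take $M_1$ to be a small aperiodic monoid $\set{1, e, a, r}$ with $e$ an idempotent satisfying $e \cdot e = e$, $e \cdot a = a$ and $e \cdot r = r$, and with $a$ and $r$ both left-absorbing (so $a \cdot x = a$ and $r \cdot x = r$ for every $x$). One checks directly that $M_1$ satisfies the $\FMVDA$ identity and that the languages it recognizes lie in $\LVSUL[1]$, hence $M_1 \in \FMVDA[1]$ by Eilenberg's correspondence. The program consists of exactly one instruction per input position: at position $i$ it outputs $e$ on letter $0$, outputs $a$ on letter $1$ if $i \in S_n$, and outputs $r$ on letter $1$ otherwise. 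By left-absorption, the product of all outputs equals $a$ exactly when the first $1$ lies at a position in $S_n$, so taking the acceptance set $\set{a}$ recognizes $K_{n, S_n}$ with total length $n \leq 4 n$.

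For the inductive step, assuming the result for $k - 1$ with monoid $M_{k - 1}$, I define $M_k$ as a semidirect-product-style combination of $M_1$ and $M_{k - 1}$ in which the $M_1$-component gates the $M_{k - 1}$-component, forcing the latter to collapse to identity once the $M_1$-state transitions from ``neutral'' into ``first $1$ already seen''. The program is built by interleaving, for each $i \in \intinterval{1}{n}$, a single instruction on position $i$ that feeds the $M_1$-component (via the base-case design) followed by the inductively provided program for $(n - i, T_i)$, of length $4 (n - i)^{k - 1}$, re-indexed to act on positions $i + 1, \ldots, n$ and feeding the $M_{k - 1}$-component. Thanks to the gating, only the inner sub-program attached to the actual first-$1$ position contributes meaningfully; all subsequent sub-programs are absorbed to identity. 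The total length is at most
\[
    n + 4 \sum_{i = 1}^{n} (n - i)^{k - 1} \leq n + \frac{4}{k}\, n^k \leq 4 n^k
\]
for every $k \geq 2$, and membership $M_k \in \FMVDA[k]$ will be argued by showing, via Eilenberg's correspondence and the closure properties of $\LVSUL[k]$ established in the previous subsection, that every language recognized by $M_k$ is a Boolean combination of $\LVSUM[k]$ monomials of the form $L_1 \cdot 1 \cdot L_2$ with $L_1 \subseteq \set{0}^*$ and $L_2 \in \LVSUL[k - 1]$.

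The main obstacle is engineering the gating so that the $n$ parallel sub-programs attached to the different candidate first-$1$ positions do not interfere while keeping $M_k$ inside $\FMVDA[k]$. A naive direct product $M_1 \times M_{k - 1}$ is insufficient because nothing prevents the sub-program attached to a spurious candidate from polluting the $M_{k - 1}$-coordinate. Passing to a semidirect product neutralizes the interference, but leaves me to verify carefully that this does not push $M_k$ out of $\FMVDA[k]$; I expect this verification, matching the program's block structure to the recursive $\LVSUM[k]$ definition via the $L_1 \cdot a \cdot L_2$ shape with the non-letter-$a$ constraint satisfied on the left by $L_1 \subseteq \set{0}^*$, to be the technical heart of the argument.
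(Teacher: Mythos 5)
Your decomposition of $K_{n, S_n}$ by the position of the first $1$, together with the recursion on $k$ and the interleaving of one instruction per candidate with a recursively built sub-program, matches the skeleton of the paper's proof, and your length accounting is in the right ballpark. But there is a genuine gap in the inductive step, and it is precisely where you flag the ``technical heart'': (i) you never actually produce $M_k$ nor verify $M_k \in \FMVDA[k]$, and a generic semidirect product of $M_1$ with $M_{k-1}$ need not land in $\FMVDA[k]$; and, more importantly, (ii) the gating you describe does not do what you need. You want the sub-program attached to the \emph{actual} first-$1$ position to be the only one that contributes, but your rule ``collapse to identity once the $M_1$-state transitions into first-$1$-already-seen'' silences exactly that sub-program, since it begins immediately after the state transition. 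A two-state gate cannot separate ``before the first $1$'', ``inside the one relevant sub-block'', and ``after it''; you need a way to \emph{close} the relevant block, and you have not provided one.

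The paper avoids this by sidestepping the monoid-engineering entirely: it builds, for each level $k$, an explicit language $Z_k$ over an alphabet $Y_k = \set{\bot_l, \top_l \mid 1 \leq l \leq k}$ that is literally in $\LVSUM[k]$ (it has the form $Y_{k-1}^* \top_k Y_{k-2}^* \top_{k-1} \cdots Y_1^* \top_2 \top_1 Y_k^*$), and takes $M_k$ to be its syntactic monoid, so $M_k \in \FMVDA[k]$ is immediate. The program then outputs \emph{letters of $Y_k$}: for each candidate position $j$ the block is $(j, f_{j,S})\, P_{k-1}(j+1, S|j)\, (j, g_k)$, where the opening instruction emits $\top_k$ or $\bot_k$ when it sees a $1$ and the \emph{closing} instruction $(j, g_k)$ unconditionally emits $\bot_k$ on a $1$ to delimit the block. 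Because $Z_k$ only cares about the factor between the first $\top_k$ and the next $\bot_k$-or-$\top_k$, the sub-programs attached to all other candidates are automatically ignored — blocks before the first $1$ emit only $Y_{k-1}$-letters, and blocks after it fall beyond the delimiter. This extra closing instruction and the bespoke $Z_k$ are exactly the missing pieces in your plan; without them (or an equally explicit alternative) the inductive step does not go through.
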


\begin{proof}
    We first define by induction on $k$ a family of languages $Z_k$ over the
    alphabet $Y_k = \set{\bot_l, \top_l \mid 1 \leq l \leq k}$.  For $k = 0$,
    $Z_0$ is $\set{\emptyword}$.  For $k >0$, $Z_k$ is the set of words
    containing $\top_k$ and such that the first occurrence of $\top_k$ has no
    $\bot_k$ to its left, and the sequence between the first occurrence of
    $\top_k$ and the first occurrence of $\bot_k$ or $\top_k$ to its right, or
    the end of the word if there is no such letter, belongs to $Z_{k-1}$. A
    simple induction on $k$ shows that $Z_k$ is defined by the
    following expression
    \[Y_{k-1}^* \top_{k} Y_{k-2}^* \top_{k-1} \cdots Y_1^* \top_2 \top_1 Y_k^*\]
    and therefore it is in $\LVSUM[k]$ and its syntactic
    monoid $M_k$ is in $\FMVDA[k]$.

    Fix $n$. If $n = 0$, the proposition follows trivially, otherwise, we define
    by induction on $k$ a program $P_k(i, S)$ for every $k$-set $S$ over $n$ and
    every $1 \leq i \leq n + 1$ that will for the moment output elements of
    $Y_k \cup \set{\emptyword}$ instead of outputting elements of $M_k$.

    For any $k >0$, $1 \leq j \leq n$ and $S$ a $k$-set over $n$, let $f_{j, S}$
    be the function with $f_{j, S}(0) = \emptyword$ and $f_{j, S}(1) = \top_k$
    if $j$ is the first element of some ordered $k$-tuple of $S$,
    $f_{j, S}(1) = \bot_k$ otherwise.  We also let $g_k$ be the function with
    $g_k(0) = \emptyword$ and $g_k(1) = \bot_k$.  If $S$ is a $k$-set over $n$
    and $1 \leq j \leq n$ then $S|j$ denotes the $(k - 1)$-set over $n$
    containing the ordered $(k - 1)$-tuples $\bar t$ such that
    $(j, \bar t) \in S$.

    For $k>0$, $1 \leq i \leq n + 1$ and $S$ a $k$-set over $n$, the
    program $P_k(i, S)$ is the following sequence of instructions:
    \[
	(i, f_{i, S}) P_{k - 1} (i + 1, S|i) (i, g_k) \cdots
	(n, f_{n, S}) P_{k - 1} (n + 1, S|n) (n, g_k).
    \]

    In other words, the program guesses the first occurrence $j\geq i$ of $1$,
    returns $\bot_k$ or $\top_k$ depending on whether it is the first element of
    an ordered $k$-tuple in $S$, and then proceeds for the next occurrences of
    $1$ by induction.

    For $k = 0$, $1 \leq i \leq n + 1$ and $S$ a $0$-set over $n$ (that is empty
    or contains $\emptyword$, the only ordered $0$-tuple of elements of $[n]$),
    the program $P_0(i, S)$ is the empty program $\emptyword$.
    
    A simple computation shows that for any $k \in \N_{>0}$,
    $1 \leq i \leq n + 1$ and $S$ a $k$-set over $n$, the number of instructions
    in $P_k(i, S)$ is at most $4 n^k$.

    A simple induction on $k$ shows that when running on a word
    $w \in \set{0, 1}^n$, for any $k \in \N_{>0}$, $1 \leq i \leq n + 1$ and $S$
    a $k$-set over $n$, $P_k(i, S)$ returns a word in $Z_k$ iff the ordered
    $k$-tuple of the positions of the first $k$ occurrences of $1$ starting at
    position $i$ in $w$ exists and is an element of $S$.

    For any $k>0$ and $S_n$ a $k$-set over $n$, it remains to apply
    the syntactic morphism of $Z_k$ to the output of the functions in the
    instructions of $P_k(1, S_n)$ to get a program over $M_k$ of length at most
    $4 n^k$ recognizing $K_{n, S_n}$.
\end{proof}

\paragraph*{Lower bound.}
The following claim is a simple counting argument.

\begin{claim}\label{claim-monoid-fixed}
    For all $i \in \N_{>0}$ and $n \in \N$, the number of languages in
    $\set{0, 1}^n$ recognized by programs over a monoid of size $i$, reading
    inputs of length $n$ over the alphabet $\set{0, 1}$, with at most $l \in \N$
    instructions, is bounded by $i^{i^2} 2^i \cdot (n \cdot i^2)^l$.
\end{claim}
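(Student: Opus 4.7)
The proof is a straightforward counting argument, and the right plan is to overcount by treating the monoid structure, the program, and the accepting set as three independent objects whose joint specification determines the recognized language.

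First, I would fix the underlying set of the monoid to be $[i] = \{1,\ldots,i\}$. Any binary operation on this set is a function $[i]^2 \to [i]$, so there are at most $i^{i^2}$ such operations; a fortiori there are at most $i^{i^2}$ monoid structures of size $i$ (this is an overcount, since we make no use of associativity or the existence of an identity, but overcounting is harmless for an upper bound).

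Next, fix such a monoid $M$. An instruction of a program with input alphabet $\set{0,1}$ is a pair $(j, f)$ with $j \in [n]$ and $f\colon \set{0,1} \to M$, so there are exactly $n \cdot i^2$ possible instructions. A program of length exactly $l$ is thus a choice of $l$ such instructions, giving $(n \cdot i^2)^l$ possibilities. To absorb programs of length strictly less than $l$ into the same count, I would observe that any program of length $< l$ can be extended to one of length exactly $l$ by prepending $(1, f_{1})$ instructions, where $f_1$ denotes the constant function to the identity of $M$: this yields an injection from programs of length at most $l$ into programs of length exactly $l$, so the bound $(n \cdot i^2)^l$ still applies.

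Finally, once the monoid and program are fixed, the language in $\set{0,1}^n$ recognized depends only on the accepting set $F \subseteq M$, of which there are $2^i$. Since the triple $(M, P, F)$ determines the recognized language, multiplying the three bounds gives at most $i^{i^2} \cdot (n \cdot i^2)^l \cdot 2^i$ distinct languages, as claimed. There is no real obstacle here: the only mildly subtle point is handling programs of length strictly less than $l$, which the padding trick above resolves cleanly, and one must remember that we count arbitrary binary operations rather than genuine monoid laws so as to justify the $i^{i^2}$ factor without reasoning about the space of monoid structures.
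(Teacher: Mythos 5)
Your proposal is correct and follows essentially the same plan as the paper's proof: count monoid structures ($i^{i^2}$), count programs of length exactly $l$ after padding ($(n \cdot i^2)^l$), and count accepting sets ($2^i$). One small imprecision: the padding map you describe is not actually an injection on programs — a program of length $l$ that already begins with identity instructions collides with a shorter program padded to length $l$ — but this does not affect the argument, since what you actually need (and what the paper states) is only that padding preserves the recognized languages, so every language realized by a program of length $\leq l$ is realized by one of length exactly $l$, and counting the latter suffices.
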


\begin{proof}
    Fix a monoid $M$ of size $i$.
    Since a program over $M$ of range $n$ with less than $l$ instructions can
    always be completed into such a program with exactly $l$ instructions
    recognizing the same languages in $\set{0, 1}^n$ (using the identity of
    $M$), we only consider programs with exactly $l$ instructions.
    As $\Sigma = \set{0, 1}$, there are $n \cdot i^2$ choices for each of the
    $l$ instructions of a range $n$ program over $M$ reading inputs in
    $\set{0, 1}^*$. Such a program can recognize at most $2^i$ different
    languages in $\set{0, 1}^n$. Hence, the number of languages in
    $\set{0, 1}^n$ recognized by programs over $M$ of length at most $l$ is at
    most $2^i \cdot (n \cdot i^2)^l$.
    The result follows from the facts that there are at most $i^{i^2}$
    isomorphism classes of monoids of size $i$ and that two isomorphic monoids
    allow to recognize the same languages in $\set{0, 1}^n$ through programs.
\end{proof}

If for some $k \in \N_{>0}$ and $1 \leq i \leq \alpha$, $\alpha \in \N_{>0}$, we
apply Claim~\ref{claim-monoid-fixed} for all $n \in \N$,
$l = \alpha \cdot n^{k - 1}$, we get a number $\mu_i(n)$ of languages
upper-bounded by $n^{\Omicron(n^{k - 1})}$, which is asymptotically strictly
smaller than the number of distinct $K_{n, S_n}$, which is $2^{\binom{n}{k}}$,
i.e. $\mu_i(n)$ is in $\omicron\bigl(2^{\binom{n}{k}}\bigr)$.

Hence, for all $j \in \N_{>0}$, there exist an $n_j \in \N$ and $T_j$ a $k$-set
over $n_j$ such that no program over a monoid of size $1 \leq i \leq j$, of
range $n_j$ and of length at most $j \cdot n^{k - 1}$ recognizes $K_{n_j, T_j}$.
Moreover, we can assume without loss of generality that the sequence
$(n_j)_{j \in \N_{>0}}$ is increasing.
Let $\Delta = (S_n)_{n \in \N}$ be such that $S_{n_j} = T_j$ for all
$j \in \N_{>0}$ and $S_n = \emptyset$ for any $n \in \N$ verifying that it is
not equal to any $n_j$ for $j \in \N_{>0}$. We show that no
sequence of programs over a finite monoid of length $\Omicron(n^{k - 1})$ can
recognize $L_\Delta$. If this were the case, then let $i$ be the size of the
monoid. Let $j \geq i$ be such that for any $n \in \N$, the $n$-th program has
length at most $j \cdot n^{k - 1}$. But, by construction, we know that there
does not exist any such program of range $n_j$ recognizing $K_{n_j,T_j}$, a
contradiction.

This implies the following hierarchy, where $\Prog{\V, s(n)}$ for some variety
of monoids $\V$ and a function $s\colon \N \to \N$ denotes the class of
languages recognizable by a sequence of programs of length $\Omicron(s(n))$:

\begin{proposition}
    For all $k \in \N$,
    $\Prog{\FMVDA, n^k} \subsetneq \Prog{\FMVDA, n^{k + 1}}$.
    More precisely, for all $k \in \N$ and
    $d \in \N, d \leq \max\set{k - 1, 0}$,
    $\Prog{\FMVDA[k], n^d} \subsetneq \Prog{\FMVDA[k], n^{d + 1}}$.
\end{proposition}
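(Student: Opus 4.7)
The plan is to bundle the two ingredients that have been set up in this subsection: the upper bound of Proposition~\ref{prop-upper-bound} and the counting lower bound built around Claim~\ref{claim-monoid-fixed}. First I would extract from the discussion following the claim the clean statement: for every $k \in \N_{>0}$ there exists a sequence $\Delta_k = (S^{(k)}_n)_{n \in \N}$ of $k$-sets such that $L_{\Delta_k} \notin \Prog{M, n^{k - 1}}$ for every finite monoid $M$. This follows by diagonalization from the estimate that, for each fixed monoid size $j$, the number of languages in $\set{0,1}^n$ recognized by range-$n$ programs over a monoid of size at most $j$ with at most $j \cdot n^{k-1}$ instructions is $n^{\Omicron_j(n^{k-1})} = \omicron(2^{\binom{n}{k}})$; picking a strictly increasing sequence $(n_j)$ and for each $j$ a $k$-set $T_j$ over $n_j$ avoiding all such languages, and then setting $S^{(k)}_{n_j} = T_j$ and $S^{(k)}_n = \emptyset$ otherwise, gives the desired $\Delta_k$.

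Second, for the coarse hierarchy $\Prog{\FMVDA, n^k} \subsetneq \Prog{\FMVDA, n^{k+1}}$: I would take $\Delta = \Delta_{k+1}$. By Proposition~\ref{prop-upper-bound} the restrictions $K_{n, S^{(k+1)}_n}$ are each recognized by a program of length at most $4 n^{k+1}$ over the single monoid $M_{k+1} \in \FMVDA[k+1] \subseteq \FMVDA$, so $L_\Delta \in \Prog{\FMVDA[k+1], n^{k+1}} \subseteq \Prog{\FMVDA, n^{k+1}}$; the lower bound gives $L_\Delta \notin \Prog{M, n^k}$ for every finite $M$, hence $L_\Delta \notin \Prog{\FMVDA, n^k}$. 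For the refined statement with parameters $k$ and $d \leq k-1$, the same $\Delta = \Delta_{d+1}$ works: Proposition~\ref{prop-upper-bound} places $L_\Delta$ in $\Prog{\FMVDA[d+1], n^{d+1}} \subseteq \Prog{\FMVDA[k], n^{d+1}}$ (because $d + 1 \leq k$ yields $\FMVDA[d+1] \subseteq \FMVDA[k]$), while the lower bound excludes it from $\Prog{\FMVDA[k], n^d}$.

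The only point that does not fit directly into this template is the edge case $k = d = 0$, where the upper bound would yield membership in $\FMVDA[1]$ rather than in $\FMVDA[0]$. I would handle it with a direct witness: the language $L = \set{w \in \set{0,1}^* \mid w \text{ contains a } 1}$ is \pr-recognized by range-$n$ programs of the form $(1, f)(2, f) \cdots (n, f)$ over the two-element semilattice (with $\max$-multiplication and $f$ the identity on $\set{0,1}$), which belongs to $\FMVDA[0]$; while $L \notin \Prog{M, \Omicron(1)}$ for any finite $M$ since a constant-length program probes only constantly many positions and thus cannot distinguish $0^n$ from $0^{i-1} 1 0^{n-i}$ for any position $i$ it fails to probe. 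I do not expect any real obstacle: all the heavy lifting is already done in Proposition~\ref{prop-upper-bound} and the counting argument, and only bookkeeping between $k$, $d$, and the parameter of $\Delta$, together with this small ad-hoc treatment of the $k = 0$ corner, remains.
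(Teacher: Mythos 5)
Your argument is correct and mirrors the paper's proof: both reduce the proposition to the upper bound of Proposition~\ref{prop-upper-bound} together with the diagonalization built on Claim~\ref{claim-monoid-fixed}, taking $\Delta_{d+1}$ (with $\FMVDA[d+1]\subseteq\FMVDA[k]$) whenever $d+1\le k$, and handling the residual $d=0$ corner with an explicit linear-length witness over a monoid in $\FMVDA[0]$ whose lower bound comes from an ``unread position'' argument. The only cosmetic difference is your choice of that witness, the words containing a~$1$, which is simply the complement of the paper's $a^*$ and has the same syntactic monoid, so the two treatments are interchangeable.
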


To prove this proposition, we use two facts. First, that for all $k \in \N$ and
all $d \in \N, d \leq \max\set{k - 1, 0}$, any monoid from $\FMVDA[d]$ is also a
monoid from $\FMVDA[k]$. And second, that
$a^* \in \Prog{\FMVDA[0], n} \setminus \Prog{\FMVDA[0], 1}$ simply because any
program over some finite monoid of range $n$ for $n \in \N$ recognizing $a^n$
must have at least $n$ instructions, one for each input letter.

\subsection{Collapse}

Tesson and Thérien showed that any program over a monoid $M$ in $\FMVDA$ is
equivalent to one of polynomial length~\cite{Tesson-Therien-2002a}.
We now show that if we further assume that $M$ is in $\FMVDA[k]$ then the length
can be assumed to be $\Omicron(n^{\max\set{k, 1}})$. 

\begin{proposition}
\label{ptn:Collapse}
    Let $k \geq 0$. Let $M \in \FMVDA[k]$.
    Then any program over $M$ is equivalent to a program over $M$ of length
    $\Omicron(n^{\max\set{k, 1}})$ which is a subprogram of the initial one.
\end{proposition}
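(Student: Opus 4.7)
The plan is to proceed by induction on $k$, using the recursive definition of $\LVSUM[k]$ together with the Green's-relation analysis underpinning the proof of Lemma~\ref{lemma-R}. The guiding principle is that each strict $\sim_R$-drop or $\sim_L$-drop along the computation of a program over $M \in \FMVDA[k]$ corresponds to a step down in the $\FMVDA[\cdot]$ hierarchy for the effective sub-monoid governing the remaining sub-computation, so that the recursion unfolds to depth at most $k$ and contributes a factor of $n$ per level.

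I would handle the low-level cases $k = 0$ and $k = 1$ separately, both aiming for the bound $\Omicron(n)$. The key observation is that $\FMVDA[0]$ is precisely the variety of semilattices (the syntactic monoid of any Boolean combination of $\LVSUM[0]$-languages is commutative and idempotent), so for a program $P$ of range $n$ over such an $M$ the product $P(w)$ is the join of $\{f(w_i) : (i, f) \in P\}$ and depends only on this set; it suffices to retain, for each $(i, a) \in [n] \times \Sigma$, at most $|M|$ instructions of $P$ at position $i$ collectively producing every value $f(a)$ achievable there, which gives a subprogram of length $\Omicron(n)$. A similar direct analysis based on the strongly unambiguous structure of $\LVSUM[1]$-languages should handle $k = 1$ with the same $\Omicron(n)$ bound.

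For the inductive step with $k \geq 2$, fix $M \in \FMVDA[k]$ and a program $P$ of range $n$ over $M$. Mirroring the case analysis of Lemma~\ref{lemma-R}, for each candidate ``witness instruction'' $(p, f)$ of $P$ I would consider the set of inputs on which $(p, f)$ is the first instruction inducing a strict $\sim_R$-drop from the identity class, and split $P = P' \cdot (p, f) \cdot P''$ accordingly. The key structural claim is that on such inputs the evaluations of $P'$ and $P''$ live in effective quotients of $M$ belonging to $\FMVDA[i]$ and $\FMVDA[j]$ respectively, with $i + j \leq k - 1$; the inductive hypothesis then replaces each side with a subprogram of length $\Omicron(p^{\max\{i, 1\}})$ and $\Omicron((n - p)^{\max\{j, 1\}})$. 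Taking the union of retained instructions across all candidate witnesses and treating $\sim_L$-drops symmetrically yields a subprogram of $P$ whose total length sums to $\Omicron(n^k)$.

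The main obstacle is establishing rigorously that a strict $\sim_R$- or $\sim_L$-drop reduces the level of the effective sub-monoid along the $\FMVDA[\cdot]$ hierarchy, which requires an algebraic correspondence between the Green's-class structure of $M$ and the $\LVSUM$-depth of the languages it recognizes: such a correspondence should follow from the strongly unambiguous monomial characterization but is not fully explicit in the material proven so far. A secondary subtlety is preserving the subprogram structure, since the construction must select witnesses among actual instructions of $P$ rather than synthesizing new ones; here Lemma~\ref{lem:Green's_relations_DA} and the $\FMVDA$-identity $(xyz)^\omega y (xyz)^\omega = (xyz)^\omega$ should guarantee that the instructions lying between consecutive retained witnesses can be dropped without altering the output.
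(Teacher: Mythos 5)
Your inductive step for $k \geq 2$ rests on an unestablished structural claim that you yourself flag as the main obstacle: namely that a strict $\sim_R$- or $\sim_L$-drop in $M$ forces the effective sub-monoid governing the remaining sub-computation into $\FMVDA[i]$ with $i < k$. No such correspondence between Green's-class depth in $M$ and the level in the parametrized hierarchy of $\FMVDA$ is proven in the paper, and establishing it would be a substantial piece of work in its own right; without it, your induction has no hypothesis to apply. A further definitional issue is that the ``effective quotients'' of $M$ relevant to $P'$ and $P''$ are never pinned down --- stabilizers, translation monoids, or Rees quotients are natural guesses, but none obviously inherits the required level bound. As written, the proposal therefore does not constitute a proof of the inductive step.

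The paper sidesteps the algebra entirely and inducts on the syntactic structure of the languages $K \in \LVSUM[k]$ over the alphabet $M$ (Lemma~\ref{lem:Hierarchy-Program_compression_unambiguous_monomial}). The object preserved is the Boolean observable ``$\EP(w) \in K$'', and the decomposition is given by $K$ itself, not by $M$: for $K = K_1 \gamma K_2$ with $\gamma$ absent from the words of $K_1$, the $\Omicron(n)$ witness instructions are the \emph{first} instruction at each position--letter pair producing $\gamma$, and one recurses on $K_1$ and $K_2$ applied to the flanking subprograms. Correctness becomes a purely combinatorial ``first occurrence'' argument that also guarantees the selected instructions form a subprogram, with no algebraic claim about quotient monoids ever needed. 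Your semilattice observation at $k = 0$ is correct and gives the same $\Omicron(n)$ bound, but the paper's uniform $A^*$-based treatment already covers $k = 0$ and $k = 1$ with the same first/last-occurrence idea and does not rely on identifying $\FMVDA[0]$ with semilattices.
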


For each possible acceptance set, an input word
to the program is accepted if and only if the word over the alphabet $M$
produced by the program belongs to some fixed Boolean combination of languages
in $\LVSUM[k]$. The idea is then just to keep enough instructions so that
membership of the produced word over $M$ in each of these languages does not
change.

Recall that if $P$ is a program over some monoid $M$ of range $n$, then $P(w)$
denotes the element of $M$ resulting from the execution of the program $P$ on
$w$.
It will be convenient here to also work with the word over $M$ resulting from
the sequence of executions of each instruction of $P$ on $w$. We denote this
word by $\EP(w)$.

The result is a consequence of the following lemma and the fact that for any
acceptance set $F \subseteq M$, a word $w \in \Sigma^n$ (where $\Sigma$ is the
input alphabet) is accepted iff $\EP(w) \in L$ where $L$ is a language in
$\LVSUL[k]$, a Boolean combination of languages in $\LVSUM[k]$.

\begin{lemma}
\label{lem:Hierarchy-Program_compression_unambiguous_monomial}
    Let $\Sigma$ be an alphabet, $M$ a finite monoid, and $n, k$ natural
    numbers.

    For any program $P$ over $M$ of range $n$ and any language $K$ over $M$ in
    $\LVSUM[k]$, there exists a subprogram $Q$ of $P$ of length
    $\Omicron(n^{\max\set{k, 1}})$ such that for any subprogram $Q'$ of $P$ that
    has $Q$ as a subprogram, we have for all words $w$ over $\Sigma$ of length
    $n$:
    \[
	\EP(w) \in K \Leftrightarrow \EP[Q'](w) \in K
	\displaypunct{.}
    \]
\end{lemma}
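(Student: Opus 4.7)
The plan is to proceed by induction on $k$, treating $k=0$ and $k=1$ as base cases (both require $\lvert Q\rvert=\Omicron(n)$) and handling $k\geq 2$ recursively via the unambiguous factorization structure of languages in $\LVSUM[k]$. In every case $Q$ is built by choosing, for each pair $(j,b)\in [n]\times \Sigma$, a bounded number of ``witness'' instructions of $P$ that read input position $j$ and act on letter $b$ in a specified way; these witnesses pin down, under any subprogram $Q'$ with $Q\subseteq Q'\subseteq P$, the portions of $\EP[Q'](w)$ that determine its $\LVSUM[k]$-membership.

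For the base case $k=0$ we have $K=A^*$ with $A\subseteq M$, and I would take $Q$ to consist, for each $(j,b)$ for which some $(j,f)\in P$ has $f(b)\notin A$, of one such instruction, giving $\lvert Q\rvert\leq n\lvert\Sigma\rvert=\Omicron(n)$; any output of $P$ falling outside $A$ on $w$ is then witnessed in $Q\subseteq Q'$ via the pair $(j_r,w_{j_r})$. For the inductive step with $k\geq 2$, one may assume $K=L_1\, a\, L_2$ with $a\in M$, $L_1\in\LVSUM[i]$, $L_2\in\LVSUM[j]$, $i+j=k-1$, and, without loss of generality, $a$ absent from every word of $L_1$ (the case $K\in\LVSUM[k-1]$ reduces to a smaller $k$ directly). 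Each word of $L_1\, a\, L_2$ factors uniquely as $xay$ with $x\in L_1$ equal to the prefix preceding its first occurrence of $a$. For every $(j,b)$ such that some $(j,f)\in P$ satisfies $f(b)=a$, I would pick the earliest such instruction $\gamma(j,b)$ and invoke the inductive hypothesis on the prefix $P^{<}_{j,b}$ of $P$ strictly before $\gamma(j,b)$ with target $L_1$, and on the suffix $P^{>}_{j,b}$ strictly after $\gamma(j,b)$ with target $L_2$; this yields compressed pieces of sizes $\Omicron(n^{\max\{i,1\}})$ and $\Omicron(n^{\max\{j,1\}})$. Because $k\geq 2$ forces $\max\{i,1\},\max\{j,1\}\leq k-1$, the union $Q$ of all $\gamma(j,b)$'s together with the compressed pieces has size $\Omicron(n)\cdot\Omicron(n^{k-1})=\Omicron(n^k)$. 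Correctness follows from the unambiguous factorization: if the first $a$ in $\EP(w)$ is produced at instruction $p_{r^*}$, minimality forces $\gamma(j_{r^*},w_{j_{r^*}})=r^*$, so $p_{r^*}$ and its associated compressed prefix and suffix all lie in $Q\subseteq Q'$; the first $a$ in $\EP[Q'](w)$ is then also $p_{r^*}$, and the inductive hypothesis applied to $Q'\cap P^{<}_{j_{r^*},w_{j_{r^*}}}$ and $Q'\cap P^{>}_{j_{r^*},w_{j_{r^*}}}$ makes the $L_1$- and $L_2$-checks on $\EP[Q'](w)$ agree with those on $\EP(w)$.

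The main obstacle, and the reason to treat $k=1$ separately, is that the recursive scheme above applied with $i=j=0$ would yield only $\lvert Q\rvert=\Omicron(n^2)$: each of the $\Omicron(n)$ iterations over $(j,b)$ incurs the $\Omicron(n)$ cost of the base case. To achieve $\Omicron(n)$ for $k=1$, I would treat $K=A^*\, a\, B^*$ with, up to a symmetric case, $a\notin A$, and build $Q$ directly by taking, for each $(j,b)\in[n]\times\Sigma$, the earliest $(j,f)\in P$ with $f(b)\notin A$ together with the latest $(j,f)\in P$ with $f(b)\notin B$; this gives $\lvert Q\rvert\leq 2n\lvert\Sigma\rvert=\Omicron(n)$. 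The key observation is that $a\notin A$ forces the first non-$A$ output of $\EP(w)$ to coincide with the first $a$ whenever $\EP(w)\in K$, so the earliest-non-$A$ witnesses simultaneously certify both the existence of a first $a$ and the $A^*$-status of the prefix, while the latest-non-$B$ witnesses preserve any violation of the $B^*$-status of the suffix. A short case analysis on whether $\EP(w)\in K$ (and, if not, on which failure mode occurs --- no $a$, a non-$A$ before the first $a$, or a non-$B$ after it) then shows that the $K$-membership answer is preserved under any $Q'$ with $Q\subseteq Q'\subseteq P$.
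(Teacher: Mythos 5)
Your proof is correct and takes essentially the same approach as the paper's: induction on $k$ with $k\in\{0,1\}$ as base cases, and for $k\geq 2$ the unambiguous factorization $K=L_1\,\gamma\,L_2$ (with $\gamma$ absent from $L_1$, say) drives the recursion by pinning the ``first $\gamma$ per position/letter'' instructions and compressing the prefix and suffix segments via the inductive hypothesis. The only superficial difference is that your base-case constructions retain slightly fewer witness instructions (one per position-letter pair attesting a non-$A$, respectively earliest non-$A$ and latest non-$B$, output) rather than one first/last instruction per output value in $M$ as in the paper, but this still yields $\Omicron(n)$ and the correctness arguments match.
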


\begin{proof}
    A program $P$ over $M$ of range $n$ is a finite sequence $(p_i, f_i)$ of
    instructions where each $p_i$ is a positive natural number which is at most
    $n$ and each $f_i$ is a function from $\Sigma$ to $M$. We denote by $l$ the
    number of instructions of $P$. For each set $I\subseteq [l]$ we denote by
    $P[I]$ the subprogram of $P$ consisting of the subsequence of instructions
    of $P$ obtained after removing all instructions whose index is not in $I$.
    In particular, $P[1, m]$ denotes the initial sequence of instructions of
    $P$, until instruction number $m$.

    We prove the lemma by induction on $k$.

    The intuition behind the proof for a program $P$ on inputs of length $n$ and
    some $K_1 \gamma K_2 \in \LVSUM[k]$ when $k \geq 2$ is as follows.
    We assume that $K_1$ does not contain any word with the letter $\gamma$, the
    other case is done symmetrically.
    Consider the subset of all indices $I_\gamma \subseteq [l]$ that correspond,
    for a fixed letter $a$ and a fixed position $p$ in the input, to the first
    instruction of $P$ that would output the element $\gamma$ when
    reading $a$ at position $p$.
    We then have that, given some $w$ as input,
    $\EP(w) \in K_1 \gamma K_2$ if and only if there exists
    $i \in I_\gamma$ verifying that the element at position $i$ of
    $\EP(w)$ is $\gamma$, $\EP[{P[1, i - 1]}](w) \in K_1$ and
    $\EP[{P[i + 1, l]}](w) \in K_2$.
    The idea is then that if we set $I$ to contain $I_\gamma$ as well as all
    indices obtained by induction for $P[1, i - 1]$ and $K_1$ and for
    $P[i + 1, l]$ and $K_2$, we would have that for all $w$,
    $\EP(w) \in K_1 \gamma K_2$ if and only if
    $\EP[{P[I]}](w) \in K_1 \gamma K_2$, that is $\EP(w)$ where only the
    elements at indices in $I$ have been kept.

    The intuition behind the proof when $k < 2$ is essentially the same, but
    without induction.

    We now spell out the details of the proof, starting with the inductive step.
    
    \paragraph{Inductive step.}
    Let $k \geq 2$ and assume the lemma proved for all $k' < k$. Let $n$ be a
    natural number, $P$ a program over $M$ of range $n$ and length $l$ and any
    language $K$ over $M$ in $\LVSUM[k]$.
    If $K \in \LVSUM[k - 1]$, by the inductive hypothesis, we are done.
    Otherwise, by definition, $K = K_1 \gamma K_2$ for $\gamma \in M$ and some
    languages $K_1 \in \LVSUM_{k_1}$ and $K_2 \in \LVSUM_{k_2}$ over $M$ with
    $k_1 + k_2 = k - 1$.
    Moreover either $\gamma$ does not occur in any of the words of $K_1$ or it
    does not occur in any of the words of $K_2$. We only treat the case where
    $\gamma$ does not appear in any of the words in $K_1$. The other case is
    treated similarly by symmetry.

    Observe that when $n = 0$, we necessarily have $P = \emptyword$, so that the
    lemma is trivially proven in that case. So we now assume $n > 0$.

    For each $1 \leq p \leq n$ and each $a \in \Sigma$ consider within the
    sequence of instructions of $P$ the first instruction of the form $(p, f)$
    with $f(a) = \gamma$, if it exists. We let $I_\gamma$ be the set of indices
    of these instructions for all $a$ and $p$. Notice that the size of
    $I_\gamma$ is in $\Omicron(n)$.

    For all $i \in I_\gamma$, we let $J_{i, 1}$ be the set of indices of the
    instructions within $P[1, i-1]$ appearing in its subprogram obtained by
    induction for $P[1, i-1]$ and $K_1$, and $J_{i, 2}$ be the same for
    $P[i+1, l]$ and $K_2$.

    We now let $I$ be the union of $I_\gamma$ and $J_{i, 1}$ and
    $J_{i, 2}' = \set{j + i \mid j \in J_{i,2}}$ for all $i\in I_\gamma$. We
    claim that $Q = P[I]$ has the desired properties.

    First notice that by induction the sizes of $J_{i, 1}$ and $J_{i, 2}'$ for
    all $i \in I_\gamma$ are in
    $\Omicron(n^{\max\set{k - 1, 1}}) \allowbreak = \Omicron(n^{k - 1})$ and
    because the size of $I_\gamma$ is linear in $n$, the size of $I$ is in
    $\Omicron(n^k) = \Omicron(n^{\max\set{k, 1}})$ as required.

    Let $Q'$ be a subprogram of $P$ that has $Q$ as a subprogram: it means that
    there exists some set $I' \subseteq [l]$ containing $I$ such that
    $Q' = P[I']$.

    Now take $w \in \Sigma^n$.

    Assume now that $\EP(w) \in K$. Let $i$ be the position in $\EP(w)$ of label
    $\gamma$ witnessing the membership in $K$. Let $(p_i, f_i)$ be the
    corresponding instruction of $P$. In particular we have that
    $f_i(w_{p_i}) = \gamma$. Because $\gamma$ does not occur in any word of
    $K_1$, for all $j < i$ such that $p_j = p_i$ we cannot have
    $f_j(w_{p_j}) = \gamma$. Hence $i \in I_\gamma$. By induction we have that
    $\EP[{P[1, i - 1][J]}](w) \in K_1$ for any set $J \subseteq [i -1]$
    containing $J_{i, 1}$ and $\EP[{P[i + 1, l][J]}](w) \in K_2$ for any set
    $J \subseteq [l - i]$ containing $J_{i, 2}$.
    Hence, if we set $I_1' = \set{j \in I' \mid j < i}$ as the subset of $I'$ of
    elements less than $i$ and $I_2' = \set{j - i \in I' \mid j > i}$ as the
    subset of $I'$ of elements greater than $i$ translated by $-i$, we have
    \[
	\EP[P[I']](w) =
	\EP[{P[1, i - 1][I_1']}](w) \gamma \EP[{P[i + 1, l][I_2']}](w) \in
	K_1 \gamma K_2 = K
    \]
    as desired.

    Assume finally that $\EP[P[I']](w) \in K$. Let $i$ be the index in $I'$
    whose instruction provides the letter $\gamma$ witnessing the fact that
    $\EP[P[I']](w) \in K$. This means that if we set
    $I_1' = \set{j \in I' \mid j < i}$ as the subset of $I'$ of elements less
    than $i$ and $I_2' = \set{j - i \in I' \mid j > i}$ as the subset of $I'$ of
    elements greater than $i$ translated by $-i$, we have
    $\EP[P[I']](w) =
     \EP[{P[1, i - 1][I_1']}](w) \gamma \EP[{P[i + 1, l][I_2']}](w)$
    with $\EP[{P[1, i - 1][I_1']}](w) \in K_1$ and
    $\EP[{P[i + 1, l][I_2']}](w) \in K_2$.
    If $i\in I_\gamma$, then it means that $I_1' \subseteq [i - 1]$ contains
    $J_{i, 1}$ and that $I_2' \subseteq [l - i]$ contains $J_{i, 2}$ by
    construction, so that, by induction,
    \[
	\EP[P](w) = \EP[{P[1, i - 1]}](w) \gamma \EP[{P[i + 1, l]}](w) \in
	K_1 \gamma K_2 = K
	\displaypunct{.}
    \]
    If not this shows that there is an instruction $(p_j, f_j)$ with $j < i$,
    $j \in I'$, $p_j = p_i$ and $f_j(w_{p_j}) = \gamma$. But that would
    contradict the fact that $\gamma$ cannot occur in $K_1$.
    So we have $\EP(w) \in K$ as desired.
    
    \paragraph{Base case.}
    There are two subcases to consider.
    
    \subparagraph{Subcase $k = 1$.}
    Let $n$ be a natural number, $P$ a program over $M$ of range $n$ and length
    $l$ and any language $K$ over $M$ in $\LVSUM[1]$.

    If $K \in \LVSUM[0]$, we can conclude by referring to the subcase $k = 0$.

    Otherwise $K = A_1^* \gamma A_2^*$ for $\gamma \in M$ and some alphabets
    $A_1 \subseteq M$ and $A_2 \subseteq M$. Moreover either $\gamma \notin A_1$
    or $\gamma \notin A_2$. We only treat the case where $\gamma$ does not
    belong to $A_1$, the other case is treated similarly by symmetry.

    We use the same idea as in the inductive step.

    Observe that when $n = 0$, we necessarily have $P = \emptyword$, so that the
    lemma is trivially proven in that case. So we now assume $n > 0$.

    For each $1 \leq p \leq n$, each $\alpha \in M$ and $a \in \Sigma$ consider
    within the sequence of instructions of $P$ the first and last instruction of
    the form $(p, f)$ with $f(a) = \alpha$, if they exist.
    We let $I$ be the set of indices of these instructions for all $a, \alpha$
    and $p$. Notice that the size of $I$ is in
    $\Omicron(n) = \Omicron(n^{\max\set{k, 1}})$.

    We claim that $Q = P[I]$ has the desired properties. We just showed that it
    has the required length.

    Let $Q'$ be a subprogram of $P$ that has $Q$ as a subprogram: it means that
    there exists some set $I' \subseteq [l]$ containing $I$ such that
    $Q' = P[I']$.
    
    Take $w \in \Sigma^n$.

    Assume now that $\EP(w) \in K$. Let $i$ be the position in $\EP(w)$ of label
    $\gamma$ witnessing the membership in $K$. Let $(p_i, f_i)$ be the
    corresponding instruction of $P$. In particular we have that
    $f_i(w_{p_i}) = \gamma$ and this is the $\gamma$ witnessing the membership
    in $K$. Because $\gamma \notin A_1$, for all $j < i$ such that $p_j = p_i$
    we cannot have $f_j(w_{p_j}) = \gamma$. Hence $i \in I \subseteq I'$. From
    $\EP[P[1, i - 1]](w) \in A_1^*$ and $\EP[P[i + 1, l]](w) \in A_2^*$ it
    follows that $\EP[P[I' \cap \intinterval{1}{i - 1}]](w) \in A_1^*$ and
    $\EP[P[I' \cap \intinterval{i + 1}{l}]](w) \in A_2^*$, showing that
    $\EP[P[I']](w) =
     \EP[P[I' \cap \intinterval{1}{i - 1}]](w) \gamma
     \EP[P[I' \cap \intinterval{i + 1}{l}]](w) \in K$
    as desired.

    Assume finally that $\EP[P[I']](w) \in K$. Let $i$ be the index in $I'$
    whose instruction provides the letter $\gamma$ witnessing the fact that
    $\EP[P[I']](w) \in K$. This means that
    $\EP[P[I' \cap \intinterval{1}{i - 1}]](w) \in A_1^*$ and
    $\EP[P[I' \cap \intinterval{i + 1}{l}]](w) \in A_2^*$. If there is an
    instruction $(p_j, f_j)$, with $j < i$ and $f_j(w_{p_j}) \notin A_1$ then
    either $j \in I'$ and we get a direct contradiction with the fact that
    $\EP[P[I' \cap \intinterval{1}{i - 1}]](w) \in A_1^*$, or $j \notin I'$ and
    we get a smaller $j' \in I \subseteq I'$ with the same property,
    contradicting again the fact that
    $\EP[P[I' \cap \intinterval{1}{i - 1}]](w) \in A_1^*$. Hence for all
    $j < i$, $f_j(w_{p_j}) \in A_1$. By symmetry we have that for all $j > i$,
    $f_j(w_{p_j}) \in A_2$, showing that $\EP(w) \in A_1^* \gamma A_2^* = K$ as
    desired.

    \subparagraph{Subcase $k = 0$.}
    Let $n$ be a natural number, $P$ a program over $M$ of range $n$ and length
    $l$ and any language $K$ over $M$ in $\LVSUM[0]$.

    Then $K = A^*$ for some alphabet $A \subseteq M$.

    We again use the same idea as before.

    Observe that when $n = 0$, we necessarily have $P = \emptyword$, so that the
    lemma is trivially proven in that case. So we now assume $n > 0$.

    For each $1 \leq p \leq n$, each $\alpha \in M$ and $a \in \Sigma$ consider
    within the sequence of instructions of $P$ the first instruction of the form
    $(p, f)$ with $f(a) = \alpha$, if it exists.
    We let $I$ be the set of indices of these instructions for all $a, \alpha$
    and $p$. Notice that the size of $I$ is in
    $\Omicron(n) = \Omicron(n^{\max\set{k, 1}})$.

    We claim that $Q = P[I]$ has the desired properties. We just showed that it
    has the required length.

    Let $Q'$ be a subprogram of $P$ that has $Q$ as a subprogram: it means that
    there exists some set $I' \subseteq [l]$ containing $I$ such that
    $Q' = P[I']$.
    
    Take $w \in \Sigma^n$.

    Assume now that $\EP(w) \in K$. As $\EP[P[I']](w)$ is a subword of $\EP(w)$,
    it follows directly that $\EP[P[I']](w) \in A^* = K$ as desired.

    Assume finally that $\EP[P[I']](w) \in K$. If there is an instruction
    $(p_j, f_j)$, with $j \in [l]$ and $f_j(w_{p_j}) \notin A$ then either
    $j \in I'$ and we get a direct contradiction with the fact that
    $\EP[P[I]](w) \in A^* = K$, or $j \notin I'$ and we get a smaller
    $j' \in I \subseteq I'$ with the same property, contradicting again the fact
    that $\EP[P[I']](w) \in A^* = K$. Hence for all $j \in [l]$,
    $f_j(w_{p_j}) \in A$, showing that $\EP(w) \in A^* = K$ as desired.
\end{proof}

\section{Conclusion}

We introduced a notion of tameness, particularly relevant to the analysis of programs over monoids from ``small'' varieties.
The main source of interest in tameness is Proposition~\ref{prop-tame-quasi-essentially}, stating that a variety of monoids $\V$ is tame if and only if the class of regular languages \pr-recognized by
programs over monoids from $\V$ is included in the class $\DLang{\StVQuasi\StVEsntl\V}$.
A first question that arises is for which $\V$ those two classes of regular
languages are equal. We could not rule out the possibility that for some tame
non-trivial $\V$,
$\DLang{\StVQuasi\StVEsntl\V}\setminus \Prog{\V}\neq \emptyset$.  We
conjecture that if $\V$ is local, abusing notation, $\StVQuasi\StVEsntl\V = \StVEsntl\V\FSVsdprod\Mod$, by analogy with $\StVQuasi\V$ equating $\V\FSVsdprod\Mod$ in that case; as
$\DLang{\StVEsntl\V\FSVsdprod\Mod}\subseteq \Prog{\V}$ holds unconditionally
(because $\V$ cannot be trivial if it is local~\cite[p.~134]{Tilson-1987}), under our conjecture
$\Prog{\V}\cap \Reg = \DLang{\StVQuasi\StVEsntl\V}$ would hold for local tame
varieties $\V$ (this is Conjecture~\ref{cjt:Locality_tameness}).

Concretely, we have obtained the technical result that $\FMVDA$ is a tame
variety using semigroup-theoretic arguments.
We have given $\FMVA$ and
$\FMVCom$ as further examples of tame varieties.
Our proof that $\FMVA$ is tame needed the fact that
$\FLang{MOD_m}\notin \AC[0]$ for all $m \geq 2$, so it would be
interesting to prove $\FMVA$ tame ``purely algebraically'', independently from
the known combinatorial
arguments~\cite{Ajtai-1983, Furst-Saxe-Sipser-1984, Hastad-1986} and those
based on approximating circuits by polynomials over some finite
field~\cite{Razborov-1987, Smolensky-1987}.
But tameness of $\FMVA$ is actually equivalent to
$\FLang{MOD_m}\notin \AC[0]$ for all $m \geq 2$ by
Proposition~\ref{prop-tame-quasi-essentially}, thus confronting us with the
challenging task to reprove significant circuit complexity results by relying
mainly on new semigroup-theoretic arguments. Such a breakthrough is still to be
made.

By contrast, we have shown that 
$\FMVJ$ is not tame. So programs over monoids from $\FMVJ$ \pr-recognize ``more regular languages than expected''. A natural question to ask is what these regular languages in $\Prog{\FMVJ}$ are.  Partial results in that direction were obtained in~\cite{Grosshans-2020}.

To conclude we should add, in fairness, that the progress reported here does not in any obvious way
bring us closer to major $\NC[1]$ complexity subclasses separations.  Our concrete contributions here largely concern $\Prog{\FMVDA}$ and $\Prog{\FMVJ}$, classes that are well within $\AC[0]$.
But this work does uncover new ways in which a program can or cannot circumvent the
limitations imposed by the underlying monoid algebraic structure available to
it.

\section*{Acknowledgments}
\noindent The authors would like to thank the anonymous referees for their
detailed reports as well as for their constructive criticism and the many
suggestions and comments they made. This helped us to really improve our paper.

\bibliographystyle{alphaurl}
\bibliography{Bibliography}

\end{document}